\newtheorem*{lemma}{Lemma}
\newtheorem{assumption}{Assumption}
\newtheorem{proposition}{Proposition}
\title{The Impact of Geopolitical Conflicts on Trade, Growth, and Innovation}
\author[1]{Carlos G\'oes}
\author[2]{Eddy Bekkers\thanks{The opinions expressed in this article should be attributed to its authors. They are not meant to represent the positions or opinions of the WTO and its Members and are without prejudice to Members’ rights and obligations under the WTO. We are grateful to Bob Koopman for encouraging us to undertake a research project on global economic decoupling. We benefited from helpful discussions with Ralph Ossa, Marc Bachetta, Ezra Oberfield, Ana Maria Santacreu, Marc Muendler, Fabian Trottner, Valerie Ramey, Johannes Wieland, Kyle Handley,  as well as with seminar participants at the European Trade Study Group (ETSG), Washington University at Saint Louis, UC San Diego, IBRE-FGV, Global Trade Analysis Project (GTAP), and the joint World Bank-IMF-WTO Trade Workshop. All errors and omissions are solely our responsibility. Góes: cgoes@ucsd.edu. Bekkers: eddy.bekkers@wto.org.}}
\affil[1]{UC San Diego}
\affil[2]{World Trade Organization}
\date{\today}
\begin{document}

\maketitle
\begin{abstract}
Geopolitical conflicts have increasingly been a driver of trade policy. We study the potential effects of global and persistent geopolitical conflicts on trade, technological innovation, and economic growth.  In conventional trade models the welfare costs of such conflicts are modest. We build a multi-sector multi-region general equilibrium model with dynamic sector-specific knowledge diffusion, which magnifies welfare losses of trade conflicts. Idea diffusion is mediated by the input-output structure of production, such that both sector cost shares and import trade shares characterize the source distribution of ideas. Using this framework, we explore the potential impact of a ``decoupling of the global economy,'' a hypothetical scenario under which technology systems would diverge in the global economy. We divide the global economy into two geopolitical blocs \textemdash East and West \textemdash based on foreign policy similarity and model decoupling through an increase in iceberg trade costs (full decoupling) or tariffs (tariff decoupling). Results yield three main insights. First, the projected welfare losses for the global economy of a decoupling scenario can be drastic, as large as 12\% in some regions. They are the largest in the lower-income regions as they would benefit less from technology spillovers from richer areas. Second, the described size and pattern of welfare effects are specific to the model with the diffusion of ideas. Without diffusion of ideas, the size and variation across regions of the welfare losses would be substantially smaller. Third, a multi-sector framework exacerbates diffusion inefficiencies induced by trade costs relative to a single-sector one.
\end{abstract}

\noindent\textit{Keywords}: Innovation, International trade, international relations.

\noindent\textit{JEL codes}: F12, F13, O33

\section{Introduction}

Open markets and free trade have been a basic tenet of the international order
emerging out of World War II. Over that period, a large consensus regarding
the need to reduce trade costs and prioritize gains from trade led to a
continuous deepening of the international trade regime.

The European Union enlarged eastwards and many countries joined the World
Trade Organization (WTO), including Russia, the center-of-gravity of the
former Soviet bloc. At the same time, China displayed spectacular growth
rates, integrated to world markets, and was recognized as a market economy by
also joining the international trade system.

However, the last decade has witnessed the beginning of a backlash against
global trade integration. Political scientists conjecture that the emergence
of China as a new superpower against the incumbent U.S. might lead to strategic
competition between these countries \textemdash one in which geopolitical
forces and the desire to limit interdependence take primacy over win-win
international cooperation\footnote{See \textcite{wei_towards_2019} and
\textcite{wyne_how_2020} for a review of the debate among respectively Chinese
and American scholars about the shift in foreign policies toward each
other.}. Rising support for populist and isolationist parties in many Western
countries points towards the same direction\footnote{For evidence of the
impact of trade shocks on the rise of populist parties to power, see
\textcite{colantone_trade_2018}.}. Additionally, the 2022 War in Ukraine and
the subsequent strong retaliation of the European Union, the United States,
and their allies against Russia suggest that the international economic order
based on open markets and expanded globalization could be replaced by a more
fragmented international economic system.

Using these facts as motivation, this paper aims to determine the potential
effects of increased and persistent large-scale geopolitical conflicts on
trade, economic growth, and technological innovation. Some of the adverse
effects are well-known. Increased trade barriers decrease domestic welfare and
gains from trade by shifting production away from the most cost-efficient
producers and leaving households with a lower level of total consumption.
Canonical trade models capture this \textit{static result} through the fact
that welfare is proportional to the degree of trade openness
\parencite{arkolakis_new_2012}\footnote{This is true of all canonical trade
models. As \textcite{arkolakis_new_2012} show,
\textcite{armington_theory_1969}, \textcite{krugman_scale_1980},
\textcite{eaton_technology_2002}, and \textcite{melitz_impact_2003}, albeit
different in motivation, are isomorphic and summarize gains from trade by some
variation of the expression $\mathbb{G}\propto(\pi_{ii})^{\frac{1}%
{\varepsilon}}$, where $\pi_{ii}$ is domestic trade share and $\varepsilon$ is
the elasticity of trade flows with respect to trade costs.}.

However, some of the main concerns of policymakers and practitioners regarding
potentially detrimental effects of limiting trade are abstracted away in
standard models. For instance, these models typically assume a fixed
technology distribution for domestic firms, thereby limiting gains from trade
to static gains. This assumption renders it impossible to address one of the
most important long-term consequences of continued geopolitical conflicts or
receding globalization \textemdash namely, reduced technology and know-how
spillovers that happen through trade.

In order to realistically assess the impact of trade conflicts on global
innovation, we build a multi-sector multi-region general equilibrium model
with dynamic sector-specific knowledge diffusion. As in
\textcite{buera_global_2020}, we model the arrival of new ideas as a learning
process of producers in a certain country-sector. By engaging in international
trade, i.e. importing intermediate inputs, domestic innovators have access to
new sources of ideas, whose quality depends on the productivity of the
countries and sectors from which they source intermediates.

In our model, idea diffusion is mediated by the input-output structure of
production, such that both sectoral intermediate input cost shares and import
trade shares characterize the source distribution of ideas. Innovation is
summarized by describing productivity in different sectors as evolving
according to a trade-share weighted average of trade partners' sectoral
productivities. This process is controlled by a parameter that determines the
speed of diffusion of ideas, which we calibrate using historical data on
output growth. Our approach implies that the strength of ideas diffusion is a
function of the strength of input-output linkages in production. This is
consistent with the intuitive idea proposed in \textcite{buera_global_2020} suggesting that ideas can be diffused through the purchases of intermediate inputs.

Productivity thus evolves endogenously as a by-product of micro-founded market
decisions \textemdash i.e., an externality that market agents affect with
their behavior but do not take it into account when making decisions. In this
framework, the outbreak of large-scale trade conflicts will have spillover
effects on the future path of sectoral productivities of all countries.
Changes in trade costs induce trade diversion and creation, which, in turn,
impact productivity dynamics in a way that is not internalized by agents.

After characterizing the model, we use it to perform policy experiments in the
context of heightened geopolitical conflicts. We explore the potential impact
of a \textquotedblleft decoupling of the global economy,\textquotedblright\ a
hypothetical scenario under which technology systems would diverge in the
global economy. We divide the global economy into a Western bloc and an Eastern
bloc based on differential scores in foreign policy similarity.

First, we simulate increased trade costs arising from geopolitical
circumstances, which increase frictions prohibitively if one country wants to
trade with another one outside its bloc. Second, we simulate a scenario of a
global increase in tariffs, in which all countries move from a cooperative
tariff setting in the context of the WTO to a non-cooperative tariff setting.
\textcite{nicita_cooperation_2018} estimate to increase global tariffs, on
average, by 32 percentage points. For simplicity, we use this average number
as a reference and we assume that countries in different blocs raise tariffs
against countries in the other bloc by this average amount. Third, we explore
the potential effect of moving one of the regions from the Western Bloc to the
Eastern Bloc. Fourth, we limit decoupling to electronic equipment, the sector
displaying so far the most decoupling policies. These four policy experiments are
essential to analyze the impact of decoupling, the difference between
different ways to decouple (with resource-dissipating iceberg trade costs or
rent-generating tariffs), the role of technology spillovers in the model by
analyzing bloc membership, and decoupling in the sector most scrutinized. To
limit the already large number of policy experiments, we focus on the
hypothetical scenario of a complete decoupling into a Western and Eastern
Bloc. Hence, we do not explore a scenario with a ``neutral'' bloc.

Our analysis leads to five main findings. First, a multi-sector framework
exacerbates diffusion inefficiencies induced by trade costs relative to a
single-sector one and can be the result of differences in trade costs, unit
costs and productivities between sectors in a country's trading partner; we
explore this issue both through theory and simulations\footnote{Before
conducting simulations with the multi-sector, multi-region model calibrated to
real-world data, we explore the discrepancy between actual and optimal levels
of idea diffusion. This comparison shows that to maximize the total diffusion
of ideas, trade shares must be at their optimal point \textit{in every
sector}.}. Second, we show that the projected welfare losses for the global
economy of a decoupling scenario can be drastic, as large as 12\% in some
regions; and are largest in the lower-income regions as they would suffer the most from reduced technology spillovers from richer areas. Third, the described size and pattern of welfare effects are specific to the model with diffusion of ideas. In a dynamic setting with diffusion of ideas welfare losses are
larger and display more variation. Fourth, if one of the middle income
regions, Latin America and the Caribbean (LAC), would switch from the
higher-income Western bloc to the lower-income Eastern bloc, its welfare costs of decoupling would be significantly higher. This experiment illustrates that policymakers in low- and middle-income countries would face difficult decisions if decoupling would aggravate. Fifth and finally, the welfare costs of decoupling only in electronic equipment, the sector where decoupling is already taking place, would be much smaller than under full decoupling, albeit sizeable, ranging from $0.4-1.9\%$.

We make five main contributions to the literature. First, we build a
multi-sector model of the global economy with Bertrand competition, profits,
and technology spillovers which can be solved recursively that permits
assessing realistic trade policy experiments. Second, we analyze diffusion
inefficiencies in a multi-sector framework both analytically and numerically
showing analytically such inefficiencies can be the result of differences in
trade costs and unit costs between sectors and showing numerically that such
inefficiencies tend to be larger in a multi-sector framework. Third, we
calibrate the strength of the diffusion of technologies through trade with a
tight fit between simulated and historical GDP growth rates. Fourth, we
examine the long-run effects of real-world policy experiments related to the
decoupling of the global economy. Last, we draw insights from the Political
Science literature in order to incorporate geopolitical conflicts into a
workhorse trade model.

Our model builds on the work that evaluates the impact of trade on innovation
and shows that trade openness can increase the level of domestic innovation,
particularly on the single-sector model of \textcite{buera_global_2020}.
Compared to previous work, we present a recursive model with intermediate
linkages; calibrate the strength of the diffusion of ideas to target
historical GDP\ growth rates across all regions; and explore diffusion
inefficiencies in a multi-sector setting. 

\paragraph{Related Literature.}

Our paper is closely related to the literature that adds dynamics to trade
models by incorporating knowledge diffusion channels. The earliest
explorations of this topic go back to \textcite{eaton_international_1999}, who
developed a multi-country dynamic model in which firms innovate by investing
in research and development (R\&D) and knowledge diffuses, after some lag, to
other markets. In this model, diffusion happened somewhat mechanically,  was unrelated to trade, and eventually reached all countries\footnote{In
differentiated varieties of trade models, knowledge diffusion shows up in papers
like \textcite{romer_endogenous_1990},
\textcite{rivera-batiz_international_1991}, and
\textcite{grossman_product_1989}. In the text, we focus on papers that
incorporate knowledge diffusion to Ricardian models, which is the class of
models that this paper falls in.}.

More recently, \textcite{alvarez_idea_2013} combined the
\textcite{eaton_technology_2002} Ricardian model of trade with an idea
diffusion process first presented by \textcite{kortum_research_1997}.
Importantly, the authors conjectured that the diffusion process is
proportional to the quality of managers of firms whose products reach a given destination market. Ideas flow from one market to another in proportion to the trade linkages between them. Therefore, impediments to trade have not only static but also dynamic costs ---as they decrease knowledge diffusion. 

Our model is also closely related to the one described by
\textcite{santacreu_knowledge_2017}, who extended the original model by
\textcite{eaton_international_1999} incorporating lag-diffusion dynamics into
a multi-sector framework. \textcite{santacreu_knowledge_2017} build a
multi-sector model of trade, innovation, and knowledge diffusion, exploring
how the welfare gains from trade are affected by knowledge diffusion through
their impact on changes in comparative advantage. They show that the welfare
gains are larger with endogenous knowledge diffusion because existing
specialization patterns tend to get reinforced by knowledge diffusion.

There are three main differences between their work and ours. First, our model
emphasizes the nexus between trade and idea diffusion, whereas
\textcite{santacreu_knowledge_2017} model technology spillovers as being
independent of the amount of trade (or other endogenous variables like FDI or migration). Additionally, while they calibrate knowledge spillovers with data on patent citation, we calibrate the strength of the diffusion of ideas based on the fit between actual and simulated historical GDP growth rates. Third, the papers have a different focus: we focus on concrete policy questions and explore how the effect of potential trade policy changes are affected by the inclusion of ideas diffusion in the model; while they highlight how patterns of comparative advantage change with technology spillovers.

Finally, \textcite{deng_specialization_2016} integrates \textcite{buera_global_2020} in a \textcite{levchenko_evolution_2016} multi-sector framework and is thus similar to our work. \textcite{deng_specialization_2016} argues that
the framework displays strong convergence in comparative advantage, generates
dynamic gains from trade about 1/3 larger than static gains from trade, and
identifying central players in technology diffusion. There are two main
differences with our work. First, \textcite{deng_specialization_2016} employs a different way to
calibrate the model, following \textcite{levchenko_evolution_2016} to estimate trade costs and
productivity parameters and an approach to fit observed trade and production
flows to estimate the parameter governing the diffusion of ideas parameters.
We instead infer trade costs and productivity based on observed data and
target GDP growth rates with the diffusion of ideas parameter.\footnote{Trade
costs are calibrated targeting observed trade shares as in new quantitative
trade models applying exact hat algebra, whereas productivity is calibrated
based on actual productivity data. With this approach and the chosen
calibration of the ideas diffusion parameter, we stay closer to observed data. As such, baseline values to which counterfactual
experiments are applied are identical to actual values, ensuring that the impact of counterfactual experiments is not distorted.} Second, except for the comparison of dynamic
and static gains from trade, we explore very different questions with the
model constructed. \textcite{deng_specialization_2016} explores issues like convergence in comparative
advantage and central players in technology diffusion, whereas we explore the
costs of geopolitical decoupling and the repercussions of bloc membership besides
studying the inefficiencies of ideas diffusion in a multi-sector framework. 

\paragraph{Organization.}

This paper is organized as follows. In Section \ref{env} we present the model, detailing production, demand, and consumption of the global economy. We also describe the dynamic evolution of productivities in different regions and sectors. In Section \ref{intuition} we describe the discrepancy between the actual and optimal diffusion of ideas in a multi-sector framework. In Section \ref{calibration} we discuss the calibration of the model and underpin the examined policy experiments. In Section \ref{results} we present the results of our main policy experiments and some alternative simulations. Finally, we conclude in Section \ref{conclusion} summarizing the key takeaways.

\section{Environment\label{env}}

Time is discrete and indexed by $t\in\mathcal{T}$. There are $d\in\mathcal{D}$
regions in the global economy, which cover every part of the world economy,
either as a stand-alone country, or a regional aggregate of countries. In each
region, there are multiple industries $i\in\mathcal{I}$.

\subsection{Demand}

In each region $d$ and each period $t$ a representative agent maximizes
Cobb-Douglas preferences over consumption of goods in different sectors
$i\in\mathcal{I}$, $q_{d,t}^{c,i}$:%
\begin{align}
\max_{\{q_{d,t}^{i}\}_{i\in\mathcal{I}}}\sum_{i\in\mathcal{I}}(q_{d,t}%
^{c,i})^{\kappa_{d}^{i}}\quad s.t.\quad\sum_{i\in\mathcal{I}}\kappa_{d}^{i}  &
= 1\nonumber\\
\sum_{i\in\mathcal{I}}p_{d,t}^{i}q_{d,t}^{c,i}  &  \leq (1-s_{d,t}
)Y_{d,t}\nonumber\\
Y_{d,t}  & = w_{d,t}\ell_{d,t}+r_{d,t}k_{d,t}+T_{d,t}+\sum_{i\in\mathcal{I}}
\Pi_{d,t}^{i}%
\end{align}

$Y_{d,t}$ is gross income determined by $w_{d,t},\ell_{d,t}$, the wage and
measure of workers; $r_{d,t}k_{d,t}$, capital income; $T_{d,t}$, transfers;
and $\Pi_{d,t}^{i}$, profits. We set $s_{d,t}$ to be the exogenous savings
rate. Hence, we abstract from intertemporal optimization as discussed further
below in the subsection on market clearing. Savings are used to finance
investment as discussed below. This preference structure implies the following
demand function for goods in sector $i\in\mathcal{I}$:%

\begin{equation}
q_{d,t}^{c,i}=\frac{(1-s_{d,t})\kappa_{d}^{i}Y_{d,t}}{p_{d,t}^{i}}%
\end{equation}

The aggregate price index satisfies:%

\begin{equation}
P_{d,t}^{c}=K\cdot\Pi_{i\in\mathcal{I}}(p_{d,t}^{i})^{\kappa_{d}^{i}}%
\end{equation}

With $K=\Pi_{i\in\mathcal{I}}(\kappa_{d}^{i})^{-\kappa_{d}^{i}}$, a constant.

\subsection{Production}

There are many producers of different varieties $\omega$ of each commodity
$i$. Firms are endowed with identical technology and combine factors of
production $f_{d,t}^{i}$ and intermediate inputs $m_{d,t}^{i}$ to produce
variety $q_{d,t}^{i}\left(  \omega\right)  $:%

\begin{equation}
q_{d,t}^{i}(\omega) = z_{d,t}^{i}(\omega) \left[  (\Psi_{d,t}^{i,f})^{\frac
{1}{\rho_{i}}}(f_{d,t}^{i})^{\frac{\rho_{i}-1}{\rho_{i}}} + (\Psi_{d,t}%
^{i,m})^{\frac{1}{\rho_{i}}}(m_{d,t}^{i})^{\frac{\rho_{i}-1}{\rho_{i}}}
\right] ^{\frac{\rho_{i}}{\rho_{i}-1}}%
\end{equation}

The cost of the unit input bundle, $c_{d,t}^{i}$, is a function of the prices
of factors of production, $pf_{d,t}^{i}$, and prices of commodities used as
intermediates, $p_{d,t}^{i}$:
\begin{equation}
c_{d,t}^{i}= \left[  \Psi_{d,t}^{i,f}(pf_{d,t}^{i})^{1-\rho_{i}} + \Psi
_{d,t}^{i,m}(p_{d,t}^{i})^{1-\rho_{i}} \right] ^{\frac{1}{1-\rho_{i}}}%
\end{equation}

Firms combine factors of production ($f_{d,t}^{i}$) and intermediate
commodities ($m_{d,t}^{i}$) according to the following sub-production
functions:
\begin{align}
f_{d,t}^{i}  & =\left[  (\Psi_{d,t}^{i,k})^{\frac{1}{\nu_{i}}}k_{d,t}%
^{\frac{\nu_{i}-1}{\nu_{i}}}+(\Psi_{d,t}^{i,l})^{\frac{1}{\nu_{i}}}\ell
_{d,t}^{\frac{\nu_{i}-1}{\nu_{i}}}\right]  ^{\frac{\nu_{i}}{\nu_{i}-1}}\\
m_{d,t}^{i}  & =\left[  \sum_{j\in\mathcal{I}}(\Psi_{d,t}^{i,j})^{\frac{1}%
{\mu_{i}}}(q_{d,t}^{m,i,j})^{\frac{\mu_{i}-1}{\mu_{i}}}\right]  ^{\frac
{\mu_{i}}{\mu_{i}-1}}%
\end{align}

The first aggregator combines capital, $k_{d,t}$, and labor, $\ell_{d,t}$, as
factors of production, while the second one uses sectoral commodities
$q_{d,t}^{m,j}$ as intermediate inputs.

\begin{comment}
\noindent which, in turn, implies that the price of intermediate inputs satisfies:
\begin{equation*}\label{eq: price_int}
pc_{d,t}^{i} &=&  \Xi \sum_{j \in \mathcal{I}} (p_{d,t}^{j})^{\Psi_{d,t}^{i,j}}
\end{equation*}
\noindent where $\Xi \equiv \sum_{j \in \mathcal{I}} (\Psi_{d,t}^{i,j})^{-\Psi_{d,t}^{i,j}}$.
\end{comment}

\subsection{Supply of Factors of Production}

The supply of the three factors of production changes over time. They are
perfectly mobile and thus have a uniform price across sectors. For each
country, an exogenous path of endowments of labor is imposed based on external
projections from the United Nations and the International Monetary Fund as
described in the data section below.

Aggregate capital, $k_{d,t}$, is a function of capital in the previous period,
$k_{d,t-1}$, depreciation, $\delta$, and investment, $in_{d,t}$, evolving
according to the following law of motion:%

\begin{equation}
k_{d,t}=(1-\delta_{d})k_{d,t-1}+in_{d,t}\label{capacc}%
\end{equation}

Investment in region $d$ is a Leontief function of sectoral investment,
$q_{d,t}^{in,i}$ implying the following expression for sectoral investment
demand and the corresponding price index of investment, $p_{d,t}^{in}$:
\begin{align}
q_{d,t}^{in,i} &  =\overline{\chi}_{d,t}^{i}{in}_{d,t}\\
p_{d,t}^{in} &  =\sum_{i\in\mathcal{I}}\overline{\chi}_{d,t}^{i}p_{d,t}^{in,i}%
\end{align}

We assume that the ratio of a region's trade balance to its total income is
fixed. Abstracting from other components of the current account, the capital
account is equal to the trade balance. Assuming a fixed trade balance ratio
(relative to income) thus implies that the investment rate is equal to the
savings rate minus the trade balance rate, $tb_{d,t}$. Hence, in equilibrium we have:%

\begin{equation}
\sum_{i\in\mathcal{I}}p_{d,t}^{in}in_{d,t}=\left(  s_{d,t}-tb_{d,t}\right)
Y_{d,t}%
\end{equation}

\subsection{International trade}

Consumers, investors and firms demand sectoral commodities $q_{d,t}^{j}$ by
amounts $q_{d,t}^{c,j}$, $q_{d,t}^{in,j}$ and $q_{d,t}^{m,j}$, respectively. A
local producer source the cheapest landed varieties $\{q_{d,t}^{j}%
(\omega):\omega\in\lbrack0,1]\}$ from all countries $s\in\mathcal{D}$ and
produces the sectoral commodity according to the following technology:
\begin{equation}
q_{d,t}^{j}= \left[ \int_{[0,1]}q_{d,t}^{j}(\omega)^{\frac{\sigma_{j}%
-1}{\sigma_{j}}}d\omega\right] ^{\frac{\sigma_{j}}{\sigma_{j}-1}}\label{qd}%
\end{equation}

The price of commodity $j\in\mathcal{I}$ thus satisfies:%
\begin{equation}
p_{d,t}^{j}= \left[ \int_{[0,1]}p_{d,t}^{j}(\omega)^{1-\sigma_{j}}%
d\omega\right] ^{\frac{1}{1-\sigma_{j}}}\label{pd}%
\end{equation}

Trade happens through demand for varieties used as inputs in the production of
sectoral goods $q_{s,t}^{j}$. These goods, in turn, are used in two different
ways: as intermediate inputs in the production of varieties and investment
goods; and in final consumption. Let $x_{sd,t}^{i}(\omega)$ be the landed unit
cost of supplying variety $\omega$ of commodity $i \in\mathcal{I}$ produced in
source region $s \in\mathcal{D}$ and delivered to region $d \in\mathcal{D}$:%

\begin{equation}
x_{sd,t}^{i}(\omega)\equiv\frac{tm_{sd,t}^{i}\cdot\tau_{s,t}^{i}\cdot
c_{sd,t}^{i}}{z_{s,t}^{i}(\omega)}=\frac{\tilde{x}_{sd,t}^{i}}{z_{s,t}%
^{i}(\omega)}%
\end{equation}

where $tm_{sd,t}^{i}$ are gross import taxes, which can be source and
destination specific; $\tau_{sd,t}^{i} \ge1$ are bilateral iceberg trade
costs; and $z_{s,t}^{i}(\omega)$ is the firm's productivity. The last equality
follows from defining $\tilde{x}_{sd,t}^{i} \equiv tm_{sd,t}^{i}\cdot
\tau_{sd,t}^{i} \cdot c_{s,t}^{i}$ as the landed input bundle costs.

Since varieties can be sourced from every region $s\in\mathcal{D}$ consumers
in destination region $d\in\mathcal{D}$ will only buy variety $\omega$ from
the source with the lowest landed price. Following
\textcite{bernard_plants_2003}, producers engage in Bertrand competition.
Hence, the firm with the lowest price of a variety captures the entire market
for that variety. The firm will set the price either equal to the marginal
cost of the second most efficient firm (domestically or from other regions) or
equal to its monopoly price if the marginal cost of the nearest competitor is
higher than the monopoly price. More formally, for each country, order firms
$k=[1,2,\cdots]$ such that $z_{1s,t}^{i}(\omega)>z_{2s,t}^{i}(\omega),\cdots$.
If the lowest-cost provider of the variety $\omega$ to country $d\in
\mathcal{D}$ is a producer from country $s\in\mathcal{D}$, the price in $d$ satisfies:%

\begin{equation}
p_{d,t}^{i}(\omega) = \min\left\{  \underbrace{ \frac{\sigma_{i}}{\sigma
_{i}-1} \frac{\tilde{x}_{sd,t}^{i}}{z^{i}_{1s,t}(\omega)} }_{\text{optimal
monopolist price}} , \underbrace{ \frac{\tilde{x}_{sd,t}^{i}}{z^{i}%
_{2s,t}(\omega)} }_{\substack{\text{MC of 2nd most} \\\text{ productive firm
from } s}} , \min_{n \neq s} \underbrace{ \frac{\tilde{x}_{nd,t}^{i}}%
{z^{i}_{1n,t}(\omega)}}_{\substack{\text{MC of most productive} \\\text{ firm
from other countries} }} \right\}
\end{equation}

\begin{assumption}
[Productivity draws]We follow the canonical \textcite{eaton_technology_2002}
assumption that and take $z_{s,t}^{i}(\omega) : \mathcal{A} \times\mathcal{D}
\times\mathcal{I} \times\Omega\to\mathbb{R}_{+}$ to be the realization of an
i.i.d. random variable, where $\mathcal{A}$ is the set of states of the world.
Productivity is distributed according to a Type II Extreme Value Distribution (Fr\'echet).%

\begin{equation}
F_{s,t}^{i}(z) = \exp\{ - \lambda_{s,t}^{i} z^{-\theta_{i}} \}
\end{equation}

\end{assumption}

The country-specific Fr\'{e}chet distribution has a region-commodity-specific
location parameter $\lambda_{s,t}^{i}$, which denotes absolute advantage
(better draws for all varieties), and a sector-specific scale parameter
$\theta_{i}$, which governs comparative advantage. The location parameter,
$\lambda_{s,t}^{i}$, describes the productivity of region $s$ in sector $i$
and thus determines its absolute advantages, whereas the dispersion parameter
$\theta_{i}$ governs the variation of productivity within sector $i$ between
countries and thus determines the strength of comparative advantage. A higher
$\theta_{i}$ implies less variability in productivity and thus lower potential
for diversification according to comparative advantage.

We show in the Appendix that prices in the destination region $d\in
\mathcal{D}$ will be, respectively:%

\begin{equation}
p_{d,t}^{i}=\boldsymbol{\Gamma}_{1}(\Phi_{d,t}^{i})^{-\frac{1}{\theta_{i}}%
}\label{eq: price_com_solve}%
\end{equation}

\noindent where $\boldsymbol{\Gamma}_{1}$ is a constant
\footnote{Specifically, $\boldsymbol{\Gamma}_{1} \equiv\Big[ 1 - \frac
{\sigma_{i}-1}{\theta_{i}} + \frac{\sigma_{i}-1}{\theta_{i}} \Big(\frac
{\sigma_{i}}{\sigma_{i}-1}\Big)^{-\theta_{i}} \Big] \Gamma\Big( \frac
{1-\sigma_{i}+\theta_{i}}{\theta_{i}} \Big)$, where $\Gamma(\cdot)$ is the
Gamma function}; $\Phi_{d,t}^{i} \equiv\sum_{s \in\mathcal{D}} \lambda
_{s,t}^{i} ( \tilde{x}_{sd,t}^{i}) ^{-\theta_{i}}$.

As there are infinitely many varieties in the unit interval, by the law of
large numbers, the expenditure share of destination region $d\in\mathcal{D}$
on goods coming from source country $s\in\mathcal{D}$ converges to its
expected value. $\pi_{sd,t}^{i}$ denotes the expenditure share of demand in
region $d\in\mathcal{D}$ on goods coming from region $s\in\mathcal{D}$ as a
share of total expenditure on commodity $i\in\mathcal{I}$:%

\begin{equation}
\pi_{sd,t}^{i}\equiv\frac{\lambda_{s,t}^{i}(\tilde{x}_{sd,t}^{i})^{-\theta
_{i}}}{\Phi_{d,t}^{i}} \label{eq: trade_share_com}%
\end{equation}

Sectoral commodities can be used both for final consumption and as
intermediate inputs. Given the assumptions above, $\pi_{t}^{i,j}$, which is
the expenditure on goods coming from $s$ to be used as intermediate inputs in sector $i$ of country $d$ as a share of their total expenditures on
goods from sector $j$ is equal to the trade shares for final consumption, i.e: $(\forall i,j) \pi_{sd,t}^{i,j} = \pi_{sd,t}^{j}$.

In the presence of Bertrand Competition, we show in the Appendix that source
firms realize a profit that is proportional to the total expenditure of
destination countries. In particular, profits are:%

\begin{equation}
\Pi_{s,t}^{i}=\frac{1}{1+\theta_{i}}\sum_{d\in\mathcal{D}}\pi_{sd,t}%
^{i}e_{d,t}^{i};\quad e_{d,t}^{i}=\sum_{ag\in\left\{  c,in,m\right\}  }%
e_{d,t}^{ag,i}%
\end{equation}

\noindent with ith $e_{d,t}^{ag,i}=p_{d,t}^{ag,i}q_{d,t}^{ag,i}$.

\subsection{Equilibrium}

Our model is characterized by a sequence of static equilibria satisfying
equilibrium equations in each of the periods $t$, consisting of equilibrium in the product market and the factor market, detailed in the Appendix. Solutions in the different periods are related in two ways. First, the capital stock in period $t$ is determined by investment in period $t$ and the capital stock in period $t-1$ (plus depreciation) as specified in equation (\ref{capacc}). Second, the country-sector-specific technology parameters $\lambda_{d,t}^{i}$ change over time as specified in equation (\ref{eq: final_law_of_motion}) below.

We abstract from intertemporal optimization of consumption, imposing instead a
fixed savings rate. This makes the model computationally more tractable and
leads to a more straightforward interpretation of the simulation results. We
focus on the long-run effects of decoupling of the global economy from 2020 to
2040 and not on the effects of decoupling on the trajectory followed by the
economy to 2040. However, abstracting from intertemporal optimization implies
that potential effects through changes in savings rates on capital
accumulation are also abstracted from\footnote{The assumption of a fixed trade
balance implies that the capital stock is also not affected by potential
changes in the capital balance in response to shocks. However, the
international finance literature suggests that standard open economy models
with intertemporal optimization have generated counterfactual predictions on
the direction of capital flows between developed and emerging countries in the
1990s and 2000s. Capital was flowing on net from emerging to developed
economies instead of capital flowing to emerging economies with higher
growth rates as predicted by the standard models.}.

\subsection{Dynamic innovation}

Unlike in the standard \textcite{eaton_technology_2002} model or in the
Bertrand-competition version developed in \textcite{bernard_plants_2003}, we
assume that each region's location parameter evolves over time. Each commodity
$i \in\mathcal{I}$ and each country $d \in\mathcal{D}$ has a different
period-specific productivity distribution $F_{d,t}^{i}(z)$.

Our model follows a strand of the literature that models ideas diffusion
through random matches between domestic and foreign managers\footnote{For a
detailed review of this literature, see the comprehensive review chapter
published by \textcite{buera_idea_2018}.}. Seminal examples of this work
include \textcite{jovanovic_growth_1989} and \textcite{kortum_research_1997}.
More recently, \textcite{alvarez_idea_2013} explored how idea diffusion is
intertwined with trade linkages. Like \textcite{buera_global_2020}, we assume
that a manager draws new insights as a by-product of sourcing a basket of inputs.

We extend this framework to a model with diffusion of ideas in a multi-sector
context and solve it in a recursive fashion that permits the assessment of the
long-run effects of policy experiments. The idea diffusion mechanism is
mediated by the input-output structure of production, such that both sector
cost shares and import trade shares characterize the source distribution of
ideas\footnote{As mentioned earlier, our work is closely related to
\textcite{santacreu_knowledge_2017}, who extend
\textcite{kortum_research_1997} to a multi-sector framework. We differ in that
they model diffusion as happening separately from trade, rather than as a
trade-externality.}.

\begin{assumption}
[Idea formation]New ideas are the transformation of two random variables,
namely: (i) original insights $o$, which arrive according to a power law:
$O_{t}(o) = Pr(O < o) = 1 - \alpha_{t} o^{-\theta}$; (ii) derived insights
$z^{\prime}$, drawn from a source distribution $G_{d,t}^{i}(z)$. After the
realization of those two random variables, the new idea has productivity $z =
o (z^{\prime})^{\beta}$, where $o$ is the original component of the new idea,
$z^{\prime}$ is the derived insight, and $\beta\in[0,1)$ captures the
contribution of the derived insights to new ideas. Local producers only adopt
new ideas if their quality dominates the quality of local varieties.
Therefore, for any period, the domestic technological frontier evolves according to \footnote{Here we simply use the fact that $o = z(z^{\prime-\beta}$ and note that, given an insight $z^{\prime}$, at any moment $t$ the arrival rate of ideas of quality better than $z$ is $Pr(O > o) = Pr(O > z(z^{\prime-\beta} ) = \alpha_{t} z^{-\theta}(z^{\prime})^{\beta\theta}$. We then integrate over all possible values of $z^{\prime}$.}:%

\[
F_{d,t+\Delta}^{i}(z) = \underbrace{F_{d,t}^{i}(z)}_{Pr\{\text{productivity}%
<z\text{ at } t\}} \times\underbrace{\Big( 1 - \int_{t}^{t+\Delta} \int%
\alpha_{\tau}z^{-\theta} (z^{\prime})^{\beta\theta} dG_{d,\tau}^{i}(z^{\prime
}) d\tau\Big)}_{Pr\{\text{no better draws in } (t,t+\Delta)\} }%
\]

\end{assumption}

\begin{lemma}
[Generic Law of Motion, \cite{buera_global_2020}]Given Assumption 2, if, for
any $t$, $F_{d,t}^{i}(z)$ is Fr\'echet with location parameter $\lambda
_{d,t}^{i} = \int_{-\infty}^{t} \alpha_{\tau}\int(z^{\prime})^{\beta\theta
_{i}} dG_{d,\tau}^{i}(z^{\prime}) d\tau$ and scale parameter $\theta_{i}$, the
former evolving according to the following law of motion:%

\begin{equation}
\label{eq: lambda_law_of_motion}\Delta\lambda_{d,t}^{i} = \alpha_{t}
\int(z^{\prime})^{\beta\theta_{i}} dG_{d,t}^{i}(z^{\prime})
\end{equation}

\noindent where $\alpha_{t}$ is a parameter that controls the arrival rate of
ideas and $\beta$ is the sensitivity of current productivity to derived
insights. The integral on the right-hand side of the equation denotes the
average productivity of ideas drawn from source distribution $G_{d,t}%
^{i}(z^{\prime})$\footnote{Equation (\ref{eq: lambda_law_of_motion}) is a
discrete-time approximation of the continuous-time law of motion derived in
the Appendix.}.
\end{lemma}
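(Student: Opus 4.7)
The plan is to verify the claim by directly substituting the Fréchet ansatz into the update rule provided by Assumption 2, showing that the family of Fréchet distributions is preserved under the update, and then reading off the implied law of motion for the location parameter.

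First I would plug the Fréchet form $F_{d,t}^{i}(z) = \exp\{-\lambda_{d,t}^{i} z^{-\theta_{i}}\}$ into the recursion
\[
F_{d,t+\Delta}^{i}(z) = F_{d,t}^{i}(z)\cdot\Bigl(1 - \int_{t}^{t+\Delta}\int\alpha_{\tau}z^{-\theta_{i}}(z')^{\beta\theta_{i}}\,dG_{d,\tau}^{i}(z')\,d\tau\Bigr).
\]
The key observation is that the inner integrand factors as $z^{-\theta_{i}}$ times quantities that do not depend on $z$, so the whole correction term is proportional to $z^{-\theta_{i}}$. For small $\Delta$ I would then use the approximation $1 - x \approx \exp(-x)$ (valid since the correction term vanishes as $\Delta\to 0$), combined with the mean-value approximation $\int_{t}^{t+\Delta}\alpha_{\tau}\int (z')^{\beta\theta_{i}}dG_{d,\tau}^{i}(z')\,d\tau \approx \Delta\cdot\alpha_{t}\int(z')^{\beta\theta_{i}}dG_{d,t}^{i}(z')$. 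Substituting back gives
\[
F_{d,t+\Delta}^{i}(z) \approx \exp\bigl\{-\bigl[\lambda_{d,t}^{i} + \Delta\cdot\alpha_{t}\textstyle\int(z')^{\beta\theta_{i}}dG_{d,t}^{i}(z')\bigr]z^{-\theta_{i}}\bigr\},
\]
which is again Fréchet with scale $\theta_{i}$, confirming the first part of the claim.

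From this expression I would simply read off the location parameter as $\lambda_{d,t+\Delta}^{i} = \lambda_{d,t}^{i} + \Delta\cdot\alpha_{t}\int(z')^{\beta\theta_{i}}dG_{d,t}^{i}(z')$, which immediately yields the law of motion $\Delta\lambda_{d,t}^{i} = \alpha_{t}\int(z')^{\beta\theta_{i}}dG_{d,t}^{i}(z')$. To close the loop with the integral representation stated in the lemma, I would verify by differentiation (or by induction starting from a distant past in which $\lambda$ is negligible) that this recursion is exactly solved by the cumulative integral $\lambda_{d,t}^{i} = \int_{-\infty}^{t}\alpha_{\tau}\int(z')^{\beta\theta_{i}}dG_{d,\tau}^{i}(z')\,d\tau$.

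The main obstacle will be handling the discrete-versus-continuous-time issue cleanly: in continuous time the Fréchet form is preserved exactly (since the multiplicative update becomes $\exp(-\int\ldots)$ rather than $(1-\int\ldots)$), whereas the statement as written is a discrete-time approximation with a residual of order $\Delta^{2}$. I would therefore present the argument in continuous time, derive the ODE $\dot\lambda_{d,t}^{i} = \alpha_{t}\int(z')^{\beta\theta_{i}}dG_{d,t}^{i}(z')$ rigorously, and then interpret the stated $\Delta\lambda_{d,t}^{i}$ as its first-order Euler discretization, matching the footnote that refers to the continuous-time derivation in the Appendix.
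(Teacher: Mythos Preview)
Your proposal is correct and follows essentially the same route as the paper. The only cosmetic difference is the order of operations: the appendix first computes $\frac{d}{dt}\ln F_{d,t}^{i}(z)$ directly from the recursion (without yet assuming Fr\'echet), integrates in $t$ to obtain $F_{d,t}^{i}(z)=F_{d,0}^{i}(z)\exp\{-z^{-\theta_i}(\lambda_{d,t}^{i}-\lambda_{d,0}^{i})\}$, and only then invokes a Fr\'echet initial condition; you instead posit the Fr\'echet form at $t$ and use $1-x\approx e^{-x}$ to verify preservation, but your last paragraph (pass to continuous time, derive the ODE, then Euler-discretize) is exactly the paper's argument.
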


\begin{proof}
Appendix.
\end{proof}

To fully characterize (\ref{eq: lambda_law_of_motion}), we need to define the
source distribution. We assume that managers learn from their suppliers, such
that $G_{d,t}^{i}(z^{\prime})$ is proportional to the sourcing decisions in
the production of commodity $i$ in country $d$. Productivity thus evolves
endogenously as a by-product of sourcing decisions. Additionally, we assume
that insights take time to come to fruition. Rather than drawing insights from
interactions with suppliers in the current period, we assume that insights
take one period to materialize. Intuitively, we are assuming that
entrepreneurs have to study their purchases for one period and only then draw
insights. This assumption will be convenient because it will allow us to
compute the law of motion for technology without relying on present-period
trade shares. Therefore, we will be able to solve the model recursively and
use it for counterfactual analysis of the long-run impact of policy experiments.

\begin{assumption}
[Source Distribution from Intermediates]The source distribution $G_{d,t}
^{i}(z^{\prime}) \equiv \sum_{j \in\mathcal{I}} \eta_{d,t-1}^{i,j} \sum_{s
\in\mathcal{D}} H_{sd,t-1}^{i,j}(z^{\prime})$, where $\eta_{d,t}^{i,j}$ is the
intermediate cost share of sector $j$ when producing good $i$ in region $d$;
and $H_{sd,t-1}^{i,j}(z^{\prime})$ is the fraction of commodities for which
the lowest cost supplier in period $t-1$ is a firm located in $s
\in\mathcal{D}$ with productivity weakly less than $z^{\prime}$.
\end{assumption}

\begin{proposition}
[Law of Motion in a Multi-Sector Framework]Given Assumptions 1-3, in the
multi-sector multi-region economy described in the previous section, the
country-sector-specific technology parameter evolves according to the
following process:%

\begin{equation}
\label{eq: final_law_of_motion}\Delta\lambda_{d,t}^{i} = \alpha_{t} \sum_{j
\in\mathcal{I}} \Gamma(1-\beta) \eta_{d,t-1}^{i,j} \sum_{s \in\mathcal{D}}
(\pi_{sd,t-1}^{i,j})^{1-\beta}(\lambda_{s,t-1}^{j})^{\beta}%
\end{equation}

\noindent where $\Gamma(\cdot)$ is the gamma function, $\eta_{d,t-1}^{i,j}$
are cost shares, and $\pi_{sd,t-1}^{i,j}$ are intermediate input trade shares.
\end{proposition}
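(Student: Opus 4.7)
The plan is to combine the Lemma's generic law of motion with Assumption 3 to reduce the problem to evaluating a single scalar moment for each supplier sector $j$ and source $s$. By linearity of the integral,
\[
\Delta\lambda_{d,t}^{i}=\alpha_{t}\sum_{j\in\mathcal{I}}\eta_{d,t-1}^{i,j}\sum_{s\in\mathcal{D}}\int(z')^{\beta\theta_{i}}\,dH_{sd,t-1}^{i,j}(z'),
\]
so the entire proof hinges on finding a closed form for the moment $\int(z')^{\beta\theta_{i}}\,dH_{sd,t-1}^{i,j}(z')$.

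The crux is deriving a closed form for $H_{sd,t-1}^{i,j}$. By its definition, this is the joint probability that the cheapest landed supplier of a sector-$j$ variety to $d$ in period $t-1$ is from $s$ \emph{and} has productivity at most $z'$. I would condition on the productivity $z$ of the candidate winner in $s$ and require every other source $s'\neq s$ to be more expensive, i.e., $z_{s'}<z\,\tilde{x}_{s'd,t-1}^{j}/\tilde{x}_{sd,t-1}^{j}$. Using independence of the Fréchet draws across sources,
\[
H_{sd,t-1}^{i,j}(z')=\int_{0}^{z'}\prod_{s'\neq s}F_{s',t-1}^{j}\!\left(z\,\frac{\tilde{x}_{s'd,t-1}^{j}}{\tilde{x}_{sd,t-1}^{j}}\right)f_{s,t-1}^{j}(z)\,dz.
\]
Substituting the Fréchet CDF and density and collecting exponents, the $\lambda$'s and $\tilde{x}$'s of all sources aggregate into $\Phi_{d,t-1}^{i,j}\equiv\sum_{s'}\lambda_{s',t-1}^{j}(\tilde{x}_{s'd,t-1}^{j})^{-\theta_{j}}$, and a routine substitution (mirroring the standard EK derivation of trade shares) evaluates the integral to
\[
H_{sd,t-1}^{i,j}(z')=\pi_{sd,t-1}^{i,j}\exp\!\big(-(\tilde{x}_{sd,t-1}^{i,j})^{\theta_{j}}\,\Phi_{d,t-1}^{i,j}\,(z')^{-\theta_{j}}\big),
\]
whose total mass as $z'\to\infty$ is precisely $\pi_{sd,t-1}^{i,j}$, recovering the EK expenditure share as a consistency check.

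Next I would evaluate the moment. Setting $K\equiv(\tilde{x}_{sd,t-1}^{i,j})^{\theta_{j}}\Phi_{d,t-1}^{i,j}$, differentiating $H$ gives density $\pi_{sd,t-1}^{i,j}\,K\theta_{j}(z')^{-\theta_{j}-1}\exp(-K(z')^{-\theta_{j}})$. The change of variables $v=K(z')^{-\theta_{j}}$ then transforms $\int(z')^{\beta\theta_{i}}\,dH$ into a standard Gamma integral, yielding $\pi_{sd,t-1}^{i,j}\,K^{\beta}\,\Gamma(1-\beta)$ under the shape convention $\theta_{i}=\theta_{j}$ consistent with the form of the proposition. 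Finally, the trade-share identity $\pi_{sd,t-1}^{i,j}=\lambda_{s,t-1}^{j}(\tilde{x}_{sd,t-1}^{i,j})^{-\theta_{j}}/\Phi_{d,t-1}^{i,j}$ lets me substitute $K=\lambda_{s,t-1}^{j}/\pi_{sd,t-1}^{i,j}$, collapsing the moment to $\Gamma(1-\beta)\,(\pi_{sd,t-1}^{i,j})^{1-\beta}(\lambda_{s,t-1}^{j})^{\beta}$. Plugging this into the first display gives (\ref{eq: final_law_of_motion}).

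I expect the main obstacle to be the derivation of $H_{sd,t-1}^{i,j}$: one must carefully track the inversion between productivities and prices (a higher $z$ corresponds to a \emph{lower} landed price) and recognize the resulting integral as a renormalized version of the EK trade-share derivation, so that $H$ factors cleanly into $\pi_{sd,t-1}^{i,j}$ times a Fréchet-like exponential tail. Once this factorization is in hand, the Gamma-integral moment and the final reassembly are essentially mechanical.
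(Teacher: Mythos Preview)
Your proposal is correct and reaches the stated law of motion, but it takes a more direct route than the paper. The paper's proof works through the full Bernard--Eaton--Jensen--Kortum machinery: it writes down the joint distribution of the top two productivities $(z_{1},z_{2})$ in each source, builds the bivariate object $\mathcal{F}_{sd,t-1}^{i,j}(z_{1},z_{2})$ (the probability that $s$ supplies $d$ with best productivity at most $z_{1}$ and second-best at most $z_{2}$), and then evaluates the moment as a double integral $\int_{0}^{\infty}\int_{z_{2}}^{\infty}z_{1}^{\beta\theta}\,\partial^{2}\mathcal{F}/\partial z_{1}\partial z_{2}\,dz_{1}dz_{2}$. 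You instead go straight to the marginal distribution of the winner's productivity, obtain $H_{sd,t-1}^{i,j}(z')=\pi_{sd,t-1}^{i,j}\exp\{-(\lambda_{s,t-1}^{j}/\pi_{sd,t-1}^{i,j})(z')^{-\theta}\}$, and compute the moment as a single Gamma integral. Both routes land on $\Gamma(1-\beta)(\pi_{sd,t-1}^{i,j})^{1-\beta}(\lambda_{s,t-1}^{j})^{\beta}$; yours is more economical precisely because the second-best productivity $z_{2}$ never enters the diffusion moment (it is needed only for Bertrand \emph{pricing}), whereas the paper reuses the bivariate construction that it has already set up for the price and profit derivations elsewhere in the appendix.

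Your flag about the shape parameters is also on point: the clean form with $\Gamma(1-\beta)$ and exponents $1-\beta,\ \beta$ requires a common $\theta$ across sectors (otherwise the exponent becomes $\beta\theta_{i}/\theta_{j}$). The paper handles this exactly as you suspect, by deriving the appendix result under a common $\theta$ and only restoring sector subscripts at the end.
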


\begin{proof}
Combining the result of the Lemma stated above and Assumption 2, we can express the law of motion as:
\begin{eqnarray*}
\Delta \lambda_{d,t}^{i} &=& \alpha_{t} \int z^{\beta \theta_i} dG_{d,t}^{i}(z) \\
&=& \alpha_{t} \sum_{j \in \mathcal{I}} \eta_{d,t-1}^{i,j} \sum_{s \in \mathcal{D}} \int z^{\beta \theta_i} dH_{sd,t-1}^{i,j}(z)
\end{eqnarray*}
The strategy of the proof relies on Bertrand competition and Assumption 1, regarding productivity draws. The joint distribution of the two most productive firms in a given industry $i$ in country $s$ is given by $F_{s,t}^{i}(z_1,z_2) =(1+ \lambda_{s,t}^{i}[z_2^{-\theta_i}-z_1^{-\theta_i}] )  \exp \{ - \lambda_{s,t}^{i} z_2^{-\theta_i}  \}$.
Incorporating landed input bundle costs $\tilde{x}_{sd,t}^{i}$, we calculate the probability that the lowest cost producer at destination $d$ is from $s$ and has productivity lower than $z_2$ or in the range $[z_2,z_1)$. In the Appendix, we work out each integral $\int z^{\beta \theta_i} dH_{sd,t-1}^{i,j}(z)$ and derive the result stated in the Proposition.
\end{proof}

This result extends the main proposition of \textcite{buera_global_2020} to a multi-sector framework. We will use equation (\ref{eq: final_law_of_motion}) and a calibrated path of $\alpha_{t}$ to solve for an endogenous path for $\lambda_{d,t}^{i}$.

\section{Discussion and Intuition of Ideas Diffusion in a Multi-sector
Framework\label{intuition}}

\label{section: intuition}

In this section, we provide some intuition regarding how the idea diffusion
mechanism operates in the multi-sector framework. We will use a simplified
two-country, two-region economy to show how the actual market equilibrium
outcome exhibits both within- and between-sector deviations from an outcome
maximizing ideas diffusion. More specifically, we will show with an example
that in a multi-sector framework differences in trade costs, unit costs and
productivities between sectors can already lead to deviations between the
actual and optimal import shares.

Below we denote industries as $i,-i$ and label regions as home ($h$) and
foreign ($f$). $\eta^{i}$ denotes own-cost share of industry $i$, assumed to
be identical in both countries; $\lambda_{h}^{i}$ is the productivity in
sector $i$ at home; and $\pi^{i,-i}_{h}$ stands for the domestic trade share
of $h$ in the total intermediate cost of inputs from industry $-i$ in the
production of industry $i$. We drop time subscripts for simplicity. In a
two-by-two symmetric economy, equation (\ref{eq: final_law_of_motion}) for
industry $i$ at home is proportional to a weighted average of the two sector
input shares:
\begin{align*}
\Delta\lambda^{i}  &  \propto\eta^{i} [(\pi_{h}^{i,i})^{1-\beta}(\lambda
_{h}^{i})^{\beta} + (1-\pi_{h}^{i,i})^{1-\beta}(\lambda_{f}^{i})^{\beta}]\\
&  + (1-\eta_{d}^{i}) [(\pi_{h}^{i,-i})^{1-\beta}(\lambda_{h}^{-i})^{\beta} +
(1-\pi_{h}^{i,-i})^{1-\beta}(\lambda_{f}^{-i})^{\beta}]
\end{align*}

What would the optimal total domestic trade shares in sectors $i,-i$ be? If a
planner were to choose $\pi_{h}^{i,i},\pi_{h}^{i,-i}$ to maximize idea
diffusion\footnote{Specifically, a planner is maximizing $\max_{\{\pi
_{h}^{i,i},\pi_{h}^{i,-i}\}}\eta^{i}[(\pi_{h}^{i,i})^{1-\beta}(\lambda_{h}%
^{i})^{\beta}+(1-\pi_{h}^{i,i})^{1-\beta}(\lambda_{f}^{i})^{\beta}%
]+(1-\eta_{d}^{i})[(\pi_{h}^{i,-i})^{1-\beta}(\lambda_{h}^{-i})^{\beta}%
+(1-\pi_{h}^{i,-i})^{1-\beta}(\lambda_{f}^{-i})^{\beta}]$, which is a
separable and strictly concave programming problem in $\pi_{h}^{i,i},\pi
_{h}^{i,-i}$.}, the ratio of optimal total expenditure shares within a given
sector would satisfy:%

\[
\Bigg( \frac{\eta^{i} \pi_{h}^{i,i}}{\eta^{i} (1-\pi_{h}^{i,i})}
\Bigg)^{\text{Diffusion Optimum}} = \frac{\lambda_{h}^{i}}{\lambda_{f}^{i}}%
\]

How does this compare with the total domestic trade shares that result from
market optimization by private agents? Market allocations incorporate unit
costs $x_{h}^{i}$ and trade costs $\tau\geq1$ (assumed to be symmetric):%

\[
\Bigg(\frac{\eta^{i}\pi_{h}^{i,i}}{\eta^{i}(1-\pi_{h}^{i,i})}%
\Bigg)^{\text{Actual Trade}}=\frac{\lambda_{h}^{i}(x_{h}^{i})^{-\theta}%
}{\lambda_{f}^{i}(\tau^{i}\cdot x_{f}^{i})^{-\theta}}%
\]

In general, the market allocation will be different from the diffusion optimal
one, except if differences in trade and unit costs exactly cancel out, i.e.:
$\tau^{i}=x_{h}^{i}/x_{f}^{i}$. This within-sector distortion mimics the
single-sector results of \textcite{buera_global_2020}. Below, we show that in
a multi-sector framework not only there are deviations within each sector, but
also that in general, they are not proportional across sectors: i.e., there are
cross-sector distortions. Consider first the ratio of domestic shares in total
trade expenditures in sectors $i,-i$ that induce optimal idea diffusion:%

\begin{equation}
\label{eq: planner}\Bigg( \frac{\eta^{i} \pi_{h}^{i,i}}{(1-\eta^{i}) \pi
_{h}^{i,-i}} \Bigg)^{\text{Diffusion Optimum}} = \underbrace{\frac{\eta^{i}%
}{1-\eta^{i}}}_{\text{cost share}} \times\underbrace{\frac{\lambda_{h}^{i}%
}{\lambda_{h}^{-i}}}_{\text{own-productivity}} \times\underbrace{
\Bigg( \frac{\lambda_{h}^{i} + \lambda_{f}^{i}}{\lambda_{h}^{-i} + \lambda
_{f}^{-i}} \Bigg)^{-1} }_{\text{industry-wise productivity}}%
\end{equation}

The ratio can be decomposed into a cost-share component, a relative
own-productivity component; and an industry-wise relative productivity
component. Intuitively, optimal domestic trade allocation in industry $i$ will
increase relative to industry $-i$ if intermediate cost share of industry $i$
increases and if the relative domestic productivity of industry $i$ goes up.
The ratio is decreasing in industry-wise relative productivity: if the productivity gap between foreign and home is larger in industry $i$ relative
to industry $-i$, optimal domestic trade share of industry $i$ will decrease
relative to industry $-i$.

How does this compare with the total domestic trade shares that result from
market optimization? The industry-wise productivity ratio is adjusted by unit
and trade costs:%

\begin{equation}
{\small {\label{eq: ft}\Bigg(\frac{\eta^{i}\pi_{h}^{i,i}}{(1-\eta^{i})\pi
_{h}^{i,-i}}\Bigg)^{\text{Actual Trade}}=\underbrace{\frac{\eta^{i}}%
{1-\eta^{i}}}_{\text{cost share}}\times\underbrace{\frac{\lambda_{h}^{i}%
(x_{h}^{i})^{-\theta}}{\lambda_{h}^{-i}(x_{h}^{-i})^{-\theta}}}_{\text{own
cost-adj. productivity}}\times\underbrace{\Bigg(\frac{\lambda_{h}^{i}%
(x_{h}^{i})^{-\theta}+\lambda_{f}^{i}(\tau^{i}\cdot x_{f}^{i})^{-\theta}%
}{\lambda_{h}^{-i}(x_{h}^{-i})^{-\theta}+\lambda_{f}^{-i}(\tau^{-i}\cdot
x_{f}^{-i})^{-\theta}}\Bigg)^{-1}}_{\text{industry-wise cost-adj.
productivity}}}}%
\end{equation}

These differences induce a gap between the diffusion optimal allocation and
the market allocation. Define $\aleph$ as the ratio of equations
(\ref{eq: ft}) for (\ref{eq: planner}):%

\begin{equation}
\aleph=\underbrace{\Bigg(\frac{x_{h}^{i}}{x_{h}^{-i}}\Bigg)^{-\theta}%
}_{\text{domestic cost gap}}\times\underbrace{\Bigg(\frac{\lambda_{h}%
^{i}(x_{h}^{i})^{-\theta}+\lambda_{f}^{i}(\tau^{i}\cdot x_{f}^{i})^{-\theta}%
}{\lambda_{h}^{i}+\lambda_{f}^{i}}\Bigg)^{-1}}_{\text{industry-wise
cost-induced deviation in }i}\times\underbrace{\Bigg(\frac{\lambda_{h}%
^{-i}(x_{h}^{-i})^{-\theta}+\lambda_{f}^{-i}(\tau^{-i}\cdot x_{f}%
^{-i})^{-\theta}}{\lambda_{h}^{-i}+\lambda_{f}^{-i}}\Bigg)}%
_{\text{industry-wise cost-induced deviation in }-i}\label{eq: aleph}%
\end{equation}

Whenever $\aleph\neq1$, there is a sectoral distortion in domestic trade
expenditure shares. If $\aleph> 1$ ($<1$), domestic trade expenditure share on
sector $i$ relative to sector $-i$ is above (below) the diffusion-optimal
ratio. Even if $\aleph=1$, that does not guarantee the absence of deviations from
the optimal diffusion point. Rather, it means that deviations (or absence
thereof) are proportional in both sectors, such that domestic trade share in
one sector is not disproportionately higher (lower) in sector $i$ relative to
sector $-i$.

In general, deviations need not be proportional. In fact, only in knife edge
cases $\aleph=1$\footnote{For instance, if countries have identical input
costs across industries $x_{h}^{i}=x_{h}^{-i},x_{f}^{i}=x_{f}^{-i}$; and
either industries in each country have identical productivity ($\lambda
_{h}^{i}=\lambda_{h}^{-i}$, $\lambda_{f}^{i}=\lambda_{f}^{-i}$); or
sector-specific productivities at home are a linear transformation of the
sector-specific productivities at foreign ($\lambda_{h}^{i}=\kappa\cdot
\lambda_{f}^{i}$, $\lambda_{f}^{-i}=\kappa\cdot\lambda_{f}^{-i}$, $\kappa
\in\mathbb{R}_{++}$)}. This underscores that, in a multi-sector framework,
there will not only be \textit{within sector distortions}, but also
\textit{between sector distortions}. Domestic sourcing will be biased towards
the industry with the lowest relative cost, even if that industry is not very
productive. For instance, if costs are disproportionately low in one industry
$i$ relative to industry $-i$, either domestically or industry-wise, domestic
trade share will be disproportionately high in industry $i$ under market
relative to the optimal allocation and $\aleph>1$.

More generally, differences between the optimal and market import shares
emerge from the interaction of differences in trade costs between sectors,
differences in productivities between countries and sectors, and differences
in unit costs between countries and sectors. A range of examples can be
explored of simple settings leading to an intuitive deviation between optimal and actual market shares. First, suppose that productivities and unit costs are identical across sectors and countries. Equation (\ref{eq: aleph}) shows that differences in trade costs between the two sectors then lead to $\aleph\neq1$.
More specifically, $\aleph>1$ if $\tau^{i}>\tau^{-i}$, implying that the
relative domestic spending share in sector $i$ relative to sector $-i$ is
above the optimum. Intuitively, region $h$ is importing too much in sector
$-i$ in this case since trade costs are smaller than in sector $i$. Second,
suppose trade costs are identical between sectors, productivities identical
between sectors and regions, and unit costs identical across sectors in region
$i$. In this case higher unit costs in region $f$ in sector $i$ compared to
sector $-i$, $x_{f}^{i}>x_{f}^{-i}$, implies again that $\aleph>1$.
Intuitively, the domestic spending share in sector $i$ relative to sector $-i$
is above the optimum in this case because region $h$ is importing too much in
sector $-i$, because of lower unit costs. Third, suppose now that trade costs
and unit costs are identical between countries and sectors and domestic
productivities are identical between the two sectors. $\lambda_{f}^{i}%
>\lambda_{f}^{-i}$ implies then $\aleph>1$. Intuitively, the domestic spending
share in region $h$ is too high in sector $i$ compared to sector $-i$ in this
case, because the region should import more from the trading partner $f$ in
the more productivity sector $i$. More complicated examples of the interaction
of differences in the two or three variables can be explored as well. The main
point to take away from these examples is that in a multi-sector framework
differences only in sectoral trade costs, sectoral productivities and unit
costs in a country's trading partner already lead to a discrepancy between the
actual and optimal domestic spending shares.

We next illustrate the additional complexities arising in a multi-sector
framework geometrically. First, consider what happens \textit{within} sector
$i$ in a fully symmetric two-by-two economy. Even with identical countries,
the strict concavity of the diffusion equation implies that idea diffusion is not
uniform as $\pi_{h}^{i,i}$ varies. The optimal diffusion point is $(\pi
_{h}^{i,i})^{\text{Diffusion Optimum}}=\lambda_{h}^{i}/(\lambda_{h}%
^{i}+\lambda_{f}^{i})=1/2$. Under the market allocation, trade costs induce
home bias such that domestic share is $(\pi_{h}^{i,i})^{\text{Actual Trade}%
}=1/(1+\tau^{-\theta})>1/2$ and ideas diffusion is below the optimal point. If
trade costs increase and $\tau\rightarrow\infty$, the home country moves to
autarky, and deviations from the optimal idea diffusion reach a maximum. We
plot the optimal, actual trade, and autarky points along the ideas diffusion
function for sector $i$ on the left-hand side panel of Figure
\ref{fig: two-by-two}.

\begin{figure}[th]
\centering
\includegraphics[width=\textwidth]{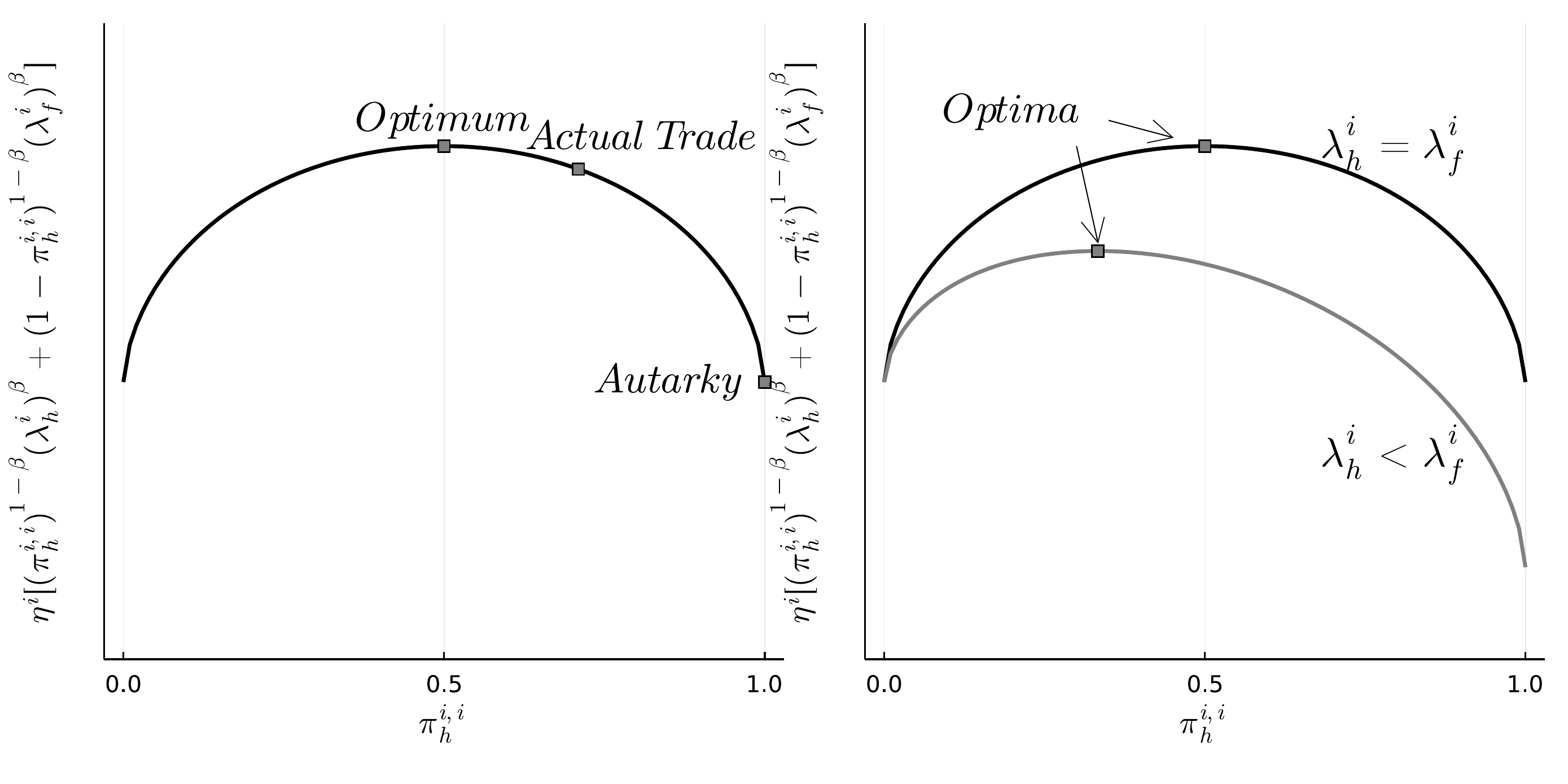}\caption{\textbf{Within
sector idea diffusion functions in a two-by-two economy.}
{\protect\footnotesize {\ Both panels plot the idea diffusion functions for
the home country in a two-by-two model within sector $i$: $\eta^{i} [(\pi
_{h}^{i,i})^{1-\beta}(\lambda_{h}^{i})^{\beta} + (1-\pi_{h}^{i,i})^{1-\beta
}(\lambda_{f}^{i})^{\beta}]$. The left-hand side panel shows the optimal,
actual trade, and autarky points along the ideas diffusion function when
countries are fully symmetric ($\lambda_{h}^{i} = \lambda_{f}^{i}$). The right
hand side panel plots the functions and diffusion optimal points for the cases
when countries have identical productivities $\lambda_{h}^{i} = \lambda
_{f}^{i}$ and the home country is less productive $\lambda_{h}^{i} <
\lambda_{f}^{i}$.} }}%
\label{fig: two-by-two}%
\end{figure}

The right-hand side panel illustrates what happens when the home country has a
lower productivity in sector $i$. The curve shifts down at the autarky point
and the optimal solution moves to the left (smaller domestic trade
share)\footnote{Formally, once countries are no longer symmetric, we need to
make the following regularity condition to guarantee convergence to the
autarky equilibrium: $\lim_{\tau\to\infty}\ (\tau x_{f}^{i})/x_{h}^{i} =
+\infty$ . Most models make this assumption either explicitly or implicitly.}.
When $\lambda_{h}^{i} < \lambda_{f}^{i}$, diffusion losses from high trade
costs are higher. This highlights a key characteristic of this class of
models: countries that are \textit{less productive} in a given sector have
\textit{higher dynamic gains from trade}\footnote{In fact, for any $\pi
^{i}_{h} \in(0,1]$, the marginal change in diffusion as $\pi^{i}_{h}$
increases will be increasing in a country's productivity. To see that, take
$\frac{\partial\Delta\lambda_{h}^{i}}{\partial\pi_{h}^{i}} = \alpha\cdot
\Gamma(1-\beta) \cdot\eta^{i} (1-\beta) [(\pi_{h}^{i,i})^{-\beta}(\lambda
_{h}^{i})^{\beta} - (1-\pi_{h}^{i,i})^{-\beta}(\lambda_{f}^{i})^{\beta}]$,
which is increasing in $\lambda_{h}^{i}$.}.

When considering a multi-sector framework, within-sector inefficiencies
accumulate. For instance, suppose that in sector $i$ domestic trade share
$\pi_{h}^{i}=0$ while, in sector $-i$, $\pi_{h}^{i}=1$. If $\eta^{i}=1/2$,
deviations from optimal idea diffusion will be at a maximum even though total
trade share will be at the optimal point ($1/2$). In a multi-sector framework,
the fact that \textit{total domestic trade share} is at its optimal point is
necessary but \textit{no longer sufficient for optimal diffusion}. To maximize
total diffusion, trade shares must be at their optimal point \textit{in every
sector}.

Figure \ref{fig: two-by-two3d} underlines this fact. It shows that there is a
unique point in the $[0,1]^{2} \times[0,\infty)$ space that maximizes idea
diffusion in a two-by-two symmetric model as $\pi_{h}^{i,i}, \pi_{h}^{i,-i}$
vary. With $\eta^{i}=1/2$, every point in the diagonal $\pi_{h}^{i,i}%
=1-\pi_{h}^{i,-i}$ will have total trade share at its optimal point $1/2$.
Additionally, every point in the counterdiagonal $\pi_{h}^{i,i}=\pi_{h}%
^{i,-i}$ has absence of between sector distortion ($\aleph=1$). However,
neither fact is sufficient to guarantee optimal diffusion. Only if trade
shares are optimal in both sectors (i.e., $\pi_{h}^{i,i}=\pi_{h}^{i,-i}=1/2$)
diffusion is maximized in this simplified economy. Any other point will have
some degree of inefficiency.

\begin{figure}[th]
\centering
\includegraphics[scale=0.4]{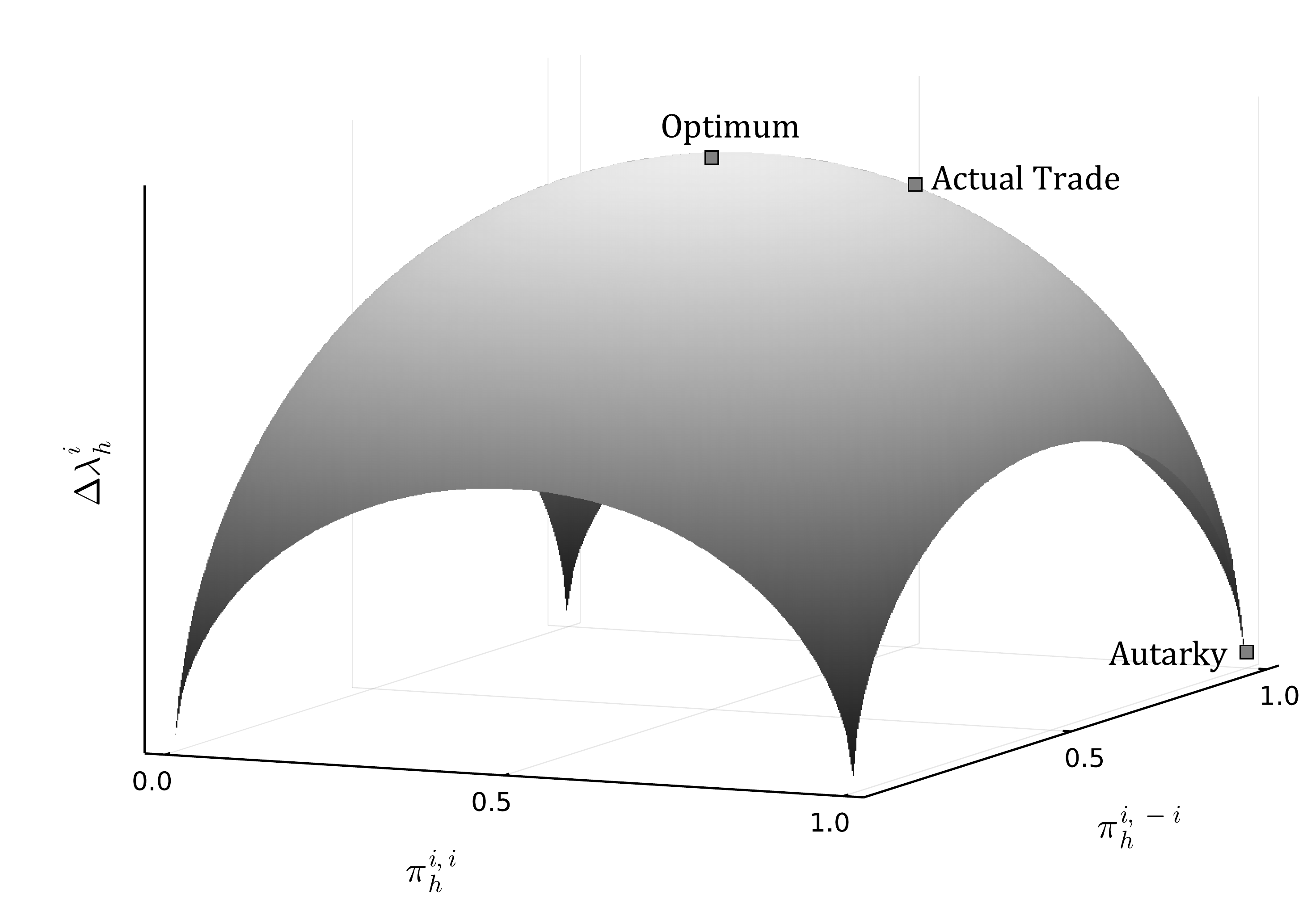}\caption{\textbf{Idea
diffusion function in a two-by-two economy.} {\protect\footnotesize {\ The
graph shows $\Delta\lambda^{i} \propto\eta^{i} [(\pi_{h}^{i,i})^{1-\beta
}(\lambda_{h}^{i})^{\beta} + (1-\pi_{h}^{i,i})^{1-\beta}(\lambda_{f}%
^{i})^{\beta}] + (1-\eta_{d}^{i}) [(\pi_{h}^{i,-i})^{1-\beta}(\lambda_{h}%
^{-i})^{\beta} + (1-\pi_{h}^{i,-i})^{1-\beta}(\lambda_{f}^{-i})^{\beta}]$ as a
function of domestic trade share in sectors $i,-i$: $\pi_{h}^{i}, \pi_{h}%
^{-i}$. If countries and sectors are identical and $\eta^{i}=1/2$, Optimal,
Actual Trade, and Autarky allocations are represented in this figure. The marginal contributions of each sector to total diffusion are as shown in the
left panel of Figure \ref{fig: two-by-two}} }}%
\label{fig: two-by-two3d}%
\end{figure}

It is easy to see how the problem increases in complexity and inefficiencies
accumulate as the number of countries and sectors increase. With $N$ countries
and $J$ sectors, there are $(N-1)J$ free parameters that would need to be at
their optimal points if diffusion were to be maximized. In a market
equilibrium, those parameters will not be at their optimal points and each one
of them will contribute to some deviation from optimal diffusion. Therefore, a
multi-sector framework is important to have a more realistic assessment of
diffusion in policy experiments. We derive multi-sector, multi-region versions
of equations \ref{eq: planner}, \ref{eq: ft}, and \ref{eq: aleph} in Appendix
\ref{appendix: optimaldiffusion}, but most of the intuition can be represented
with the simplified version presented in this section.

\section{Calibration and Setup of Policy Experiments\label{calibration}}

In this section, we first outline the employed baseline data and behavioral
parameters. The parameter determining the strength of diffusion of ideas is
calibrated by minimizing the difference between the historical and simulated
GDP\ growth rate across many countries in the world economy. We then motivate
and describe the detailed setup of the experiments.

\subsection{Data and Behavioral Parameters}

\subsubsection{Baseline Data}

The model is calibrated to trade and production data from the 2014 version of
the GTAP Data Base, Version GTAP10A. This means that all spending and cost
shares are set equal to the shares in the 2014 database, following the same
calibration procedure as in models employing exact hat algebra
\parencite{dekle_unbalanced_2007}. The data are aggregated into 10 regions and 6
sectors as specified in Table \ref{tab: aggregation}. The model is solved
until 2040 in a sequence of recursive dynamic simulations, thus solving the
model period per period, using the model solution in the previous period as
the starting point for the next period. Population grows based on
UN\ population projections and labor supply grows based on International
Monetary Fund projections for employment (until 2025) and United Nations
projections regarding working age population (from 2026 until 2040).

\begin{table}[ptbh]
\caption{Overview of regions and sectors}%
\label{tab: aggregation}
\centering
\begin{tabular}
[c]{llr|rrr}\hline
\multicolumn{2}{c}{Region} &  & \multicolumn{2}{c}{Sector} & \\
Code & Description &  & \multicolumn{1}{l}{Code} &
\multicolumn{1}{l}{Description} & \\\hline
\texttt{chn} & China &  & \multicolumn{1}{l}{\texttt{pri}} &
\multicolumn{1}{l}{Primary (agri \& natres)} & \\
\texttt{e27} & European Union 27 &  & \multicolumn{1}{l}{\texttt{lmn}} &
\multicolumn{1}{l}{Light manufacturing} & \\
\texttt{jpn} & Japan &  & \multicolumn{1}{l}{\texttt{hmn}} &
\multicolumn{1}{l}{Heavy manufacturing} & \\
\texttt{ind} & India &  & \multicolumn{1}{l}{\texttt{elm}} &
\multicolumn{1}{l}{Electronic Equipment} & \\
\texttt{lac} & Latin America &  & \multicolumn{1}{l}{\texttt{tas}} &
\multicolumn{1}{l}{Business services} & \\
\texttt{ode} & Other developed &  & \multicolumn{1}{l}{\texttt{ots}} &
\multicolumn{1}{l}{Other Services} & \\
\texttt{rwc} & Rest of the World - Eastern bloc &  &  &  & \\
\texttt{rwu} & Rest of the World - Western bloc &  &  &  & \\
\texttt{rus} & Russia &  &  &  & \\
\texttt{usa} & United States &  &  &  & \\\hline
\end{tabular}
\end{table}

The data in the GTAP\ Data Base do not include profit income as in our model
with Bertrand competition. Therefore, we have to modify the baseline data
employed, considering that profit income $\Pi_{s}^{i}$ is a share $\frac
{1}{1+\theta_{i}}$ of the total value of sales in sector $i$ in region $s$. We
have done this as follows, proceeding in two steps. First, we reduced the
value of payments to the production factor capital (capital income) by $50\%$
and reallocated it to profit income. With this step the share of profit income
in the value of sales is not yet equal to $\frac{1}{1+\theta_{i}}$. Therefore, in a second step we employ our model to modify the base data to target the
share of profit income in the value of sales for each country and sector. The
reason to proceed in two steps is that capital income in some cases is smaller
than profit income required by the model. This is especially the case in
sectors with large intermediate linkages and a small trade elasticity, because
profit income is a share of gross output in the Bertrand model, whereas
capital income is part of net output. The fact that capital income is for some
countries smaller than profit income also implies that numerical problems in
finding balanced data with profits can appear when raising the number of
countries and sectors. Therefore, the number of countries and sectors is
limited in the simulations presented here.\footnote{As an alternative
approach, we can reduce the value of payments to factors of production by an
identical share for all production factors and reallocate this to profit
income. The reallocation is set such that profit income $\Pi_{s,t}^{i}$
becomes a share $\frac{1}{1+\theta_{i}}$ of the value of sales. However, this
approach is not followed because there is a risk that factor income in the
data is smaller than profit income implied by the model. As discussed, this is
especially a risk in sectors with large intermediate linkages and a small
trade elasticity.}

\subsubsection{Behavioral Parameters}

The dispersion parameter of the Fr\'{e}chet distribution, $\theta_{i}$, equal
to the trade elasticity, is based on the estimates of trade elasticities in
\textcite{hertel_how_2007}. The substitution elasticity between value-added
and intermediates, $\rho$, between intermediates from different sectors, $\mu
$, are equal to zero, implying a Leontief structure. As such we follow the
approach employed in most CGE models, which finds empirical support in recent
estimates with US data (\cite{atalay_how_2017}). The substitution elasticities
between production factors, $\nu_{i}$, are based on the values in the GTAP
Data Base.

Table \ref{tab: behparameters} displays the values of the dispersion parameter
of the Fr\'echet distribution, $\theta_{i}$ and the substitution elasticity
between production factors $\nu_{i}$.

\begin{table}[ptbh]
\caption{Behavioral parameters}%
\label{tab: behparameters}
\centering
\begin{tabular}
[c]{lrrr}\hline
& $\theta_{i}$ & $\nu_{i}$ & \\\hline
Primary (agriculture \& natres) & 10.09 & 0.27 & \\
Light manufacturing & 4.60 & 1.20 & \\
Heavy manufacturing & 5.99 & 1.26 & \\
Electronic Equipment & 7.80 & 1.26 & \\
Business services & 2.80 & 1.26 & \\
Other Services & 2.90 & 1.42 & \\\hline
Source & \cite{hertel_how_2007} & \cite{hertel_how_2007} & \\\hline
\end{tabular}
\end{table}
%\end{table}%

Even though the location parameters of the sector-country specific Fr\'{e}chet
distribution $\lambda_{s,t}^{i}$ evolve endogenously in this model, their
starting values need to be calibrated. We calibrate the starting values
$\{\lambda_{s,0}^{i}\}_{s\in\mathcal{D},i\in\mathcal{I}}$ using the assumption
that this parameter is proportional to PPP-adjusted labor productivity in each
sector-country in our baseline year, 2014. This approach is similar to
\cite{buera_global_2020} who infer the location parameters based on total
factor productivity calculated as Solow-residuals. An alternative approach
would be to estimate both trade costs and the location parameters based on the
gravity equation as in \cite{levchenko_evolution_2016}. However, estimating these parameters
structurally for the small set of regions and sectors employed in the model,
respectively 10 and 6, will be complicated. Furthermore, with our approach and the chosen calibration of the ideas diffusion parameter we stay closer to observed data. As such, baseline values to which counterfactual experiments are applied are identical to actual values, ensuring that the impact of counterfactual experiments is not distorted. Therefore, we have chosen to identify the
location parameters of the Fr\'{e}chet distribution based on sectoral labor
productivity data.

We constructed a database of sectoral productivity by combining two sources: the
World Input-Output Database and the World Bank's Global Productivity Database.
We provide details of how we aggregated sectors and country groups in the
Appendix \ref{appendix: lambda}. Figure \ref{fig: labor_prod} shows the
distribution of the calibrated $\lambda_{s,0}^{i}$ parameters across
industries $i$ of each region $s$ in our model.

\begin{figure}[ptbh]
\centering
\includegraphics[scale=0.6]{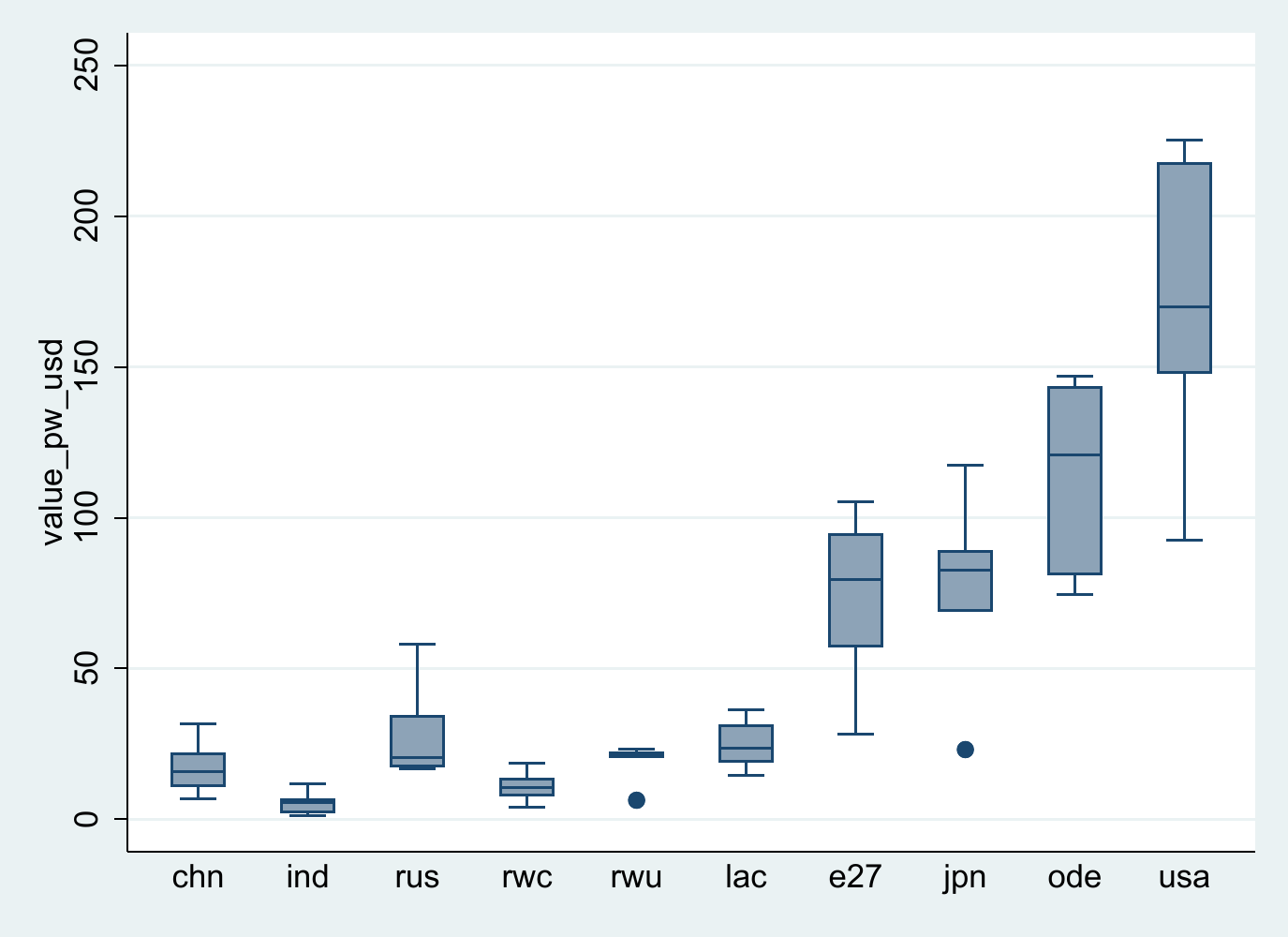}\caption{Distribution
of the calibrated $\lambda_{s,0}^{i}$ parameters across industries $i$ of each
region $s$ in our model. We assume that this parameter is proportional to
PPP-adjusted labor productivity in each sector-country. After the initial
period, the location parameter of the sector-country-specific Fr\'echet
distribution $\lambda_{s,t}^{i}$ evolves endogenously according to the model.}%
\label{fig: labor_prod}%
\end{figure}

\cite{buera_global_2020} set the growth rate of $\alpha_{t}$ equal to the
population growth rate of the US. We adopt the same heuristics and set the
growth rate of the autonomous arrival rate of ideas $\alpha_{t}$ at $1.18\%$
per year, equal to the projected global population growth rate from 2021 to 2040.

The ideas diffusion parameter $\beta$ is uniform across sectors and determined
based on model validation, using simulated methods of moments, as described
below. The variance of the growth rates of GDP rise as $\beta$ increases,
because at higher levels of $\beta$ there is more diffusion of ideas leading
to convergence of income levels thus implying larger differences between
growth rates of poor and rich countries. As discussed in Section
\ref{section: intuition}, countries starting with lower productivity
parameters have larger dynamic gains from trade. That effect increases in the
value of $\beta$\footnote{In the limiting point $\beta\rightarrow1$,
ideas diffuse instantly and every country that is not in autarky experiences
equal productivity gains in absolute terms. With equal gains in absolute
terms, those countries with lower productivity experience larger growth in
relative terms as $\beta$ increases.}.

We simulate the model from 2004 to 2019 imposing historical growth rates of
population and the labor force (based on IMF\ data) without further policy
changes varying the level of $\beta$ and evaluate for which values of $\beta$
the mean and variance of the growth rates of GDP are closest to the mean and
variance in historical data.\footnote{In this exercise the growth rate of
$\alpha_{t}$ is set at $1.18\%$ per year, the global population growth rate
between 2004 and 2019.} Formally, we are minimizing the following loss
function with $m$ the historical moment and $m\left(  \beta\right)  $ the simulated moment for either real GDP per capita growth or real GDP growth in
the 2004-19 period:%

\begin{align}
\label{eq: minbeta}\min_{\beta}  & \sum_{m\in\{\mu,\sigma\}} w^{GDPpc}%
(m(\beta)^{GDPpc,model}-m^{GDPpc,hist})^{2} +\\
&   (1-w^{GDPpc})(m(\beta)^{GDP,model}-m^{GDP,hist})^{2}\nonumber
\end{align}
where $w^{GDPpc}$ is the exogenous weight put on real GDP per capita growth,
rather than aggregate real GDP growth.

As a first step, we raise $\beta$ in steps of $0.05$ from $0$ to
$0.6$.\footnote{For values larger than $0.6$ the variance in the simulations
becomes unrealistically high, so these are disregarded.} This exercise
indicates that the simulated growth rates are closest to historical growth
rates for $\beta$ between $0.4$ and $0.5$ (See Appendix Table \ref{beta_0_06}%
). Therefore, as a second step, we simulate the model raising $\beta$ in steps
of $0.01$ from $0.4$ to $0.5$.

Figure (\ref{fig: loss_function}) plots the loss function (\ref{eq: minbeta})
for values of $\beta\in\lbrack0.4,0.5]$. Table \ref{tab: beta_version} in the Appendix displays additional summary statistics (mean, standard deviation, maximum and minimum) of average growth rates of GDP and GDP per capita between
2004 and 2019. Regardless of the weight, the loss function takes a minimum
when $\beta\in[0.44,0.45]$, a rather short interval. Taking an agnostic stance
and setting the weight $w^{GDPpc}=1/2$, we find that the loss function
(\ref{eq: minbeta}) is minimized for $\beta=0.44$, but results are virtually
unchanged by setting it to any value in the aforementioned interval.

\begin{figure}[pth]
\centering
\includegraphics[scale=0.7]{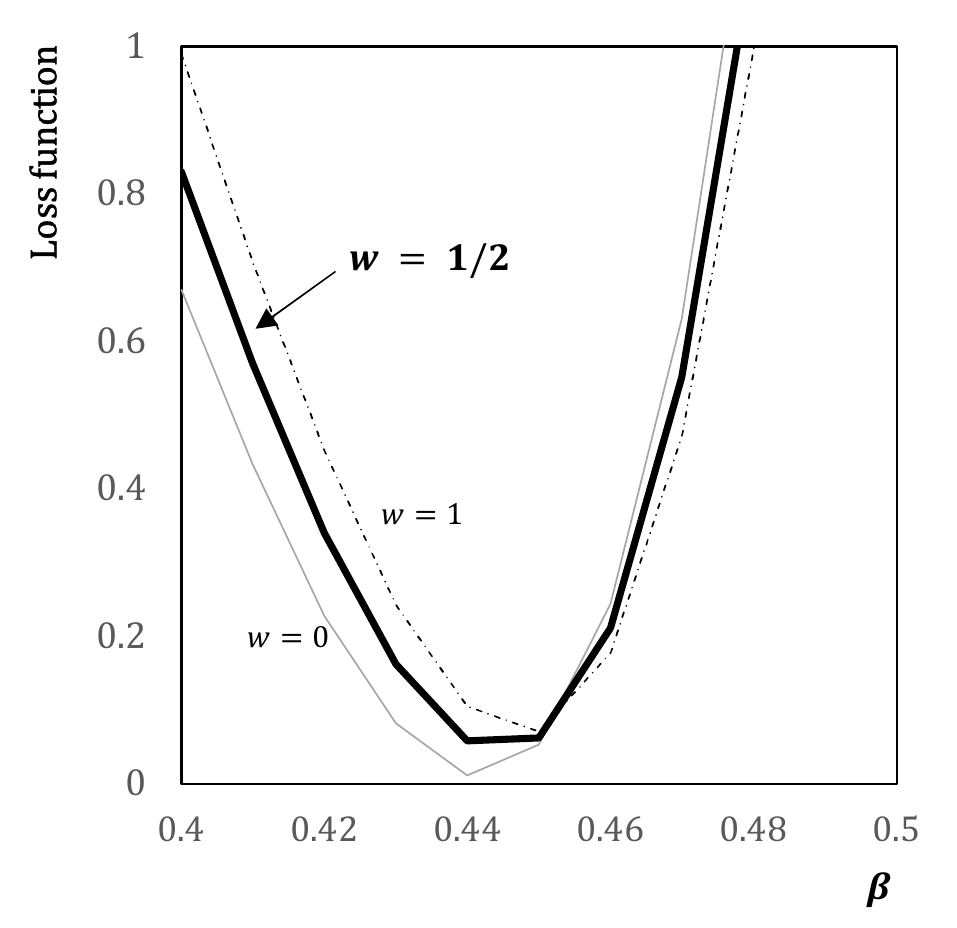}\caption{Plots of loss
function (\ref{eq: minbeta}) for values of $\beta\in[0.4,0.5]$. The solid grey
line shows the loss function with the parametrization of $w^{GDPpc}=0$, which
is minimized at $\beta=0.44$. The dotted grey line shows the loss function
with the parametrization of $w^{GDPpc}=1$, which is minimized at $\beta=0.45$
. The thicker black line shows the loss function with the parametrization of
$w^{GDPpc}=1/2$, which is minimized at $\beta=0.44$.}%
\label{fig: loss_function}%
\end{figure}

Figure \ref{scatter_hist_sim} compares projected GDP growth rates for
$\beta=0.44$ in individual regions with historical GDP growth rates. This figure shows that the simulated GDP growth rates are relatively close to the
historical growth rates, suggesting that also for individual regions the model does a good job at replicating historical growth rates. Furthermore, these
results can also be interpreted as an analysis of the under- and
overperformance of different regions compared to the projections of the model
with diffusion of ideas through trade.

\begin{figure}[pth]
\centering
\includegraphics[scale=0.4]{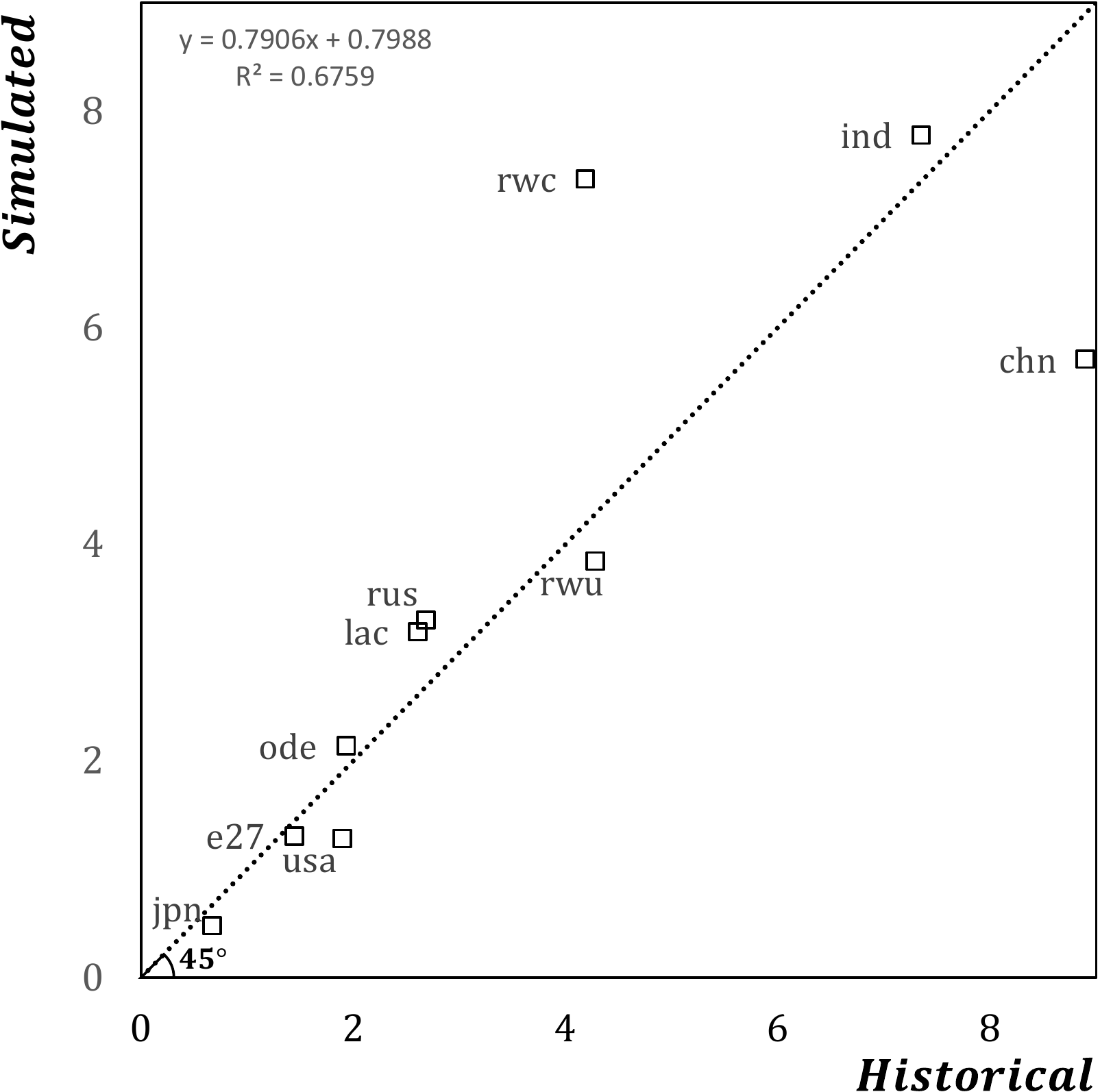}\caption{Historical and
simulated (for $\beta=0.44$) GDP growth rates (average 2004-2019)}%
\label{scatter_hist_sim}%
\end{figure}

In Appendix Table \ref{beta_0_06} we display the same summary statistics for
$\beta$ ranging from $0$ to $0.6$, in steps of $0.05$. This table makes clear
that calibration to historical data is important. An excessively high value of
$\beta$ will lead to higher than historically observed growth rates in
low-income regions. Conversely, a too low value $\beta$ will fail to show the
convergence patterns exhibited in the historical data.

In principle, the described approach could be employed to identify variation in
the ideas diffusion parameter between sectors employing data on growth rates
of sectoral output. However, consistent data on sectoral growth rates of real
output are not available for a broad set of countries employed in this study.

A possible concern of the employed method to calibrate $\beta$ is that
variation in growth between countries is not only driven by ideas diffusion
through trade implying technological catch-up. However, the calibration does
take variation in labor force growth into account and considers capital
accumulation as a source of growth. Furthermore, we follow an approach similar
to \cite{buera_global_2020} who calibrate $\beta$ based on targeting
US\ growth rates and find a value of $\beta$ substantially larger than in our
calibration, $0.6$.

\subsection{Motivation and Set-up of Policy Experiments\label{setup}}

\subsubsection{Motivation of Policy Experiments}

Our main motivation for simulating large-scale trade conflicts is the
possibility of receding globalization due to a political backlash. Challenges
to the international trade regime (and to globalization at large) might seem
like some circumstantial discontinuity in a long-run trend toward increasing
openness. However, as we show below, political scientists argue that there is
reason to believe that strategic geopolitical rivalries could trump economic
gains ---at least partially--- in their relationships between the U.S. and its
allies with China and Russia their allies.

There is ample evidence of substantial gains from trade openness, which can be
as large as 50\% of national income \parencite{ossa_why_2015} even in a static
setting. At the same time, recent empirical evidence about frictions in local
labor markets (\cite{autor_china_2013}; \cite{dix-carneiro_trade_2017})
highlights the distributional aspects of trade liberalization.

These concerns can translate into political grievances and may have led to an
increase in the number of populist and isolationist parties in Western
countries calling for less open trade policies
\parencite{colantone_trade_2018}. An increasing number of political parties
use anti-globalization rhetoric to rally the support of constituents that have
grievances against the distributional consequences of automation, structural
change, offshoring, and trade opening, as shown in the review of the political
science literature by \textcite{walter_backlash_2021}.

A clear example of the shift in the consensus during the last decade is the trade conflict between the U.S. and China, which started under the Trump
Administration. The economic discourse shifted away from emphasizing the gains from trade to a framing of trade as a zero-sum game and to the use of
national-security provisions of the international trade regime to engage in
protectionist policy-making\footnote{For a contemporaneous review of the
policies implemented, see \textcite{bown_trumps_2019}.}.

These geopolitical disputes are exemplified not only in the trade conflict
between the U.S. and China but also in industry-specific policy changes, such as the U.S. government pressuring allies against allowing the participation of Chinese telecommunications companies in new infrastructure developments or limited cooperation in science and education between the two countries \parencite{tang_decoupling_2021}.

\textcite{wei_towards_2019} provides a review of debates among Chinese
scholars. Some Chinese analysts see an escalating and continuous conflict
between China and the U.S. as a natural and ``structural'' development of a
shifting international system that is moving from a unipolar (the U.S. being
the only superpower) to a bipolar (China becoming a superpower on an equal
footing to the U.S.) balance of power\footnote{In the context above, the
balance of power between functionally equivalent states (the ``international
structure'') provides incentives for strategic behavior by governments that
try to maximize their power. We can interpret the unipolar or bipolar
configurations as equilibria and disruptions between such equilibria as
transition paths. This is known as the ``structural realism'' theory of
international politics, developed by Kenneth \textcite{waltz_theory_2010}.}.
They tie a scenario of a continuous confrontation between the U.S. and China
either to strategic geopolitical forces or to domestic political forces in
America \parencite{zhao_is_2019}.

In the West, political scientist Joseph S. Nye Jr.
\parencite*{nye_jr_power_2020} highlights that, while an abrupt decoupling
between the U.S. and China is unlikely, both parties will try to decrease
their (inter-)dependence with respect to each other's actions, except where
the costs of disengagement are too high to bear\footnote{Nye Jr. is mostly
known for his joint work with Robert Koehane on \textquotedblleft complex
interdependence\textquotedblright\ during the post-World War II era
\parencite{keohane_power_2011}. The authors focus their analysis on the
creation of international rules and practices in a world in which the use of
military force is very costly due to interdependence between multiple agents
that engage both internationally and domestically. For instance, a great
degree of trade integration increases the costs (and decreases the
probability) of outright military conflict.}. American policymakers and
academics also motivate the conflict between China and the U.S. on
geopolitical grounds. Although the tone of confrontation is stronger when
coming from right-of-center policymakers and scholars\footnote{See, for
instance, the remarks of former White House Trade Council Director to Congress
\parencite{navarro_peter_white_2018} or a policy blueprint for decoupling by
\textcite{scissors_partial_2020}, who is a scholar at the America Enterprise
Institute, a right-of-center think tank.}, both sides of the political
spectrum in the U.S. discuss the readjustment of the economic relationship with China due to geopolitical concerns \parencite{wyne_how_2020}.

The 2022 War in Ukraine and the global-scale retaliation against the Russian
Federation that followed is another example of how geopolitical interests can take precedence over gains from economic integration. The escalation began in 2014, after Russia's annexation of Crimea. The U.S. and its allies approved several rounds of sanctions against Russia, culminating in its expulsion from the G8. The confrontation reached another level in the aftermath of Russia's invasion of Ukraine. Western countries imposed sanctions on banking transactions, froze foreign reserves, and closed the airspace for Russian planes. In March 2022, the G7 and the European Union revoked their recognition of Russia's most-favored-nation status, opening the door for large tariff increases, and limited the operation of multinational corporations Russian subsidiaries\footnote{For a timeline of sanctions against Russia in the
context of the War in Ukraine, see this website maintained by the Peterson
Institute for International Economics: \url{https://www.piie.com/blogs/realtime-economic-issues-watch/russias-war-ukraine-sanctions-timeline}}.

Like in the case of China, the relationship between Russia and the West can
also be interpreted through the lens of a strategic game among great powers.
Scholars argue about the geopolitical nature of the conflict
\parencite{mearsheimer_why_2014} and the ``reawakening'' of geopolitics
\parencite{weber_russias_2022}. Therefore, in either case, sanctions and trade
conflicts fall within the larger backdrop of a strategic confrontation. These
simultaneous conflicts can potentially be interpreted as a larger clash
between the U.S. and its allies --- a Western bloc; and Russia, China, and
their allies --- an Eastern bloc. As scholars have argued, we could be
observing the emergence of a ``China-Russia entente''
\parencite{lukin_russiachina_2021}, which could lead to a ``New Cold War'' \parencite{abrams_new_2022}.

We use these facts as motivation to apply our model to conduct hypothetical
policy experiments of trade decoupling between East and West: namely, to
simulate the effects of large-scale geopolitical conflicts between these
blocs, in which players try to limit the level of interdependence between each
bloc due to political drivers, even if that leads to economic costs.

\subsubsection{Set-up of Policy Experiments}

We are agnostic about the degree of future decoupling between East and West.
Nonetheless, the fact that international relations scholars envisage
disengagement as a real possibility underscores that estimating the potential
economic consequences of decoupling is an important exercise. As our model
highlights, changes in trade patterns and sourcing decisions have not only
static effects, but also dynamic effects with respect to potential growth and
innovation. Our policy experiments try to disentangle the static and dynamic
costs of decoupling.

In order to develop the decoupling scenarios, we classify different regions as
belonging to either a Western or an Eastern bloc. We do so by resorting to the Foreign Policy Similarity Database, which uses the UN General Assembly voting records for a large set of countries to calculate foreign policy similarity
indices for each country pair \parencite{hage_choice_2011}. Intuitively, the
index takes countries who vote similarly in the United Nations (compared to
the expected level of similarity of a random voting pattern) as being similar
in their foreign policy.

We ordered country groups in terms of their foreign policy similarity with
China and the United States in order to place the ten regions of our model
either in a Western (U.S.) or Eastern (China) bloc. Figure \ref{fig:map} shows that Europe, Canada, Australia, Japan, South Korea
fall in the Western bloc. Latin America and Sub-Saharan Africa fall somewhere
in between, with the former being closer to the U.S. than the latter. India,
Russia, and most of North Africa and Southeast Asia fall closer to China.

\begin{figure}[th]
\centering
\includegraphics[width=\textwidth]{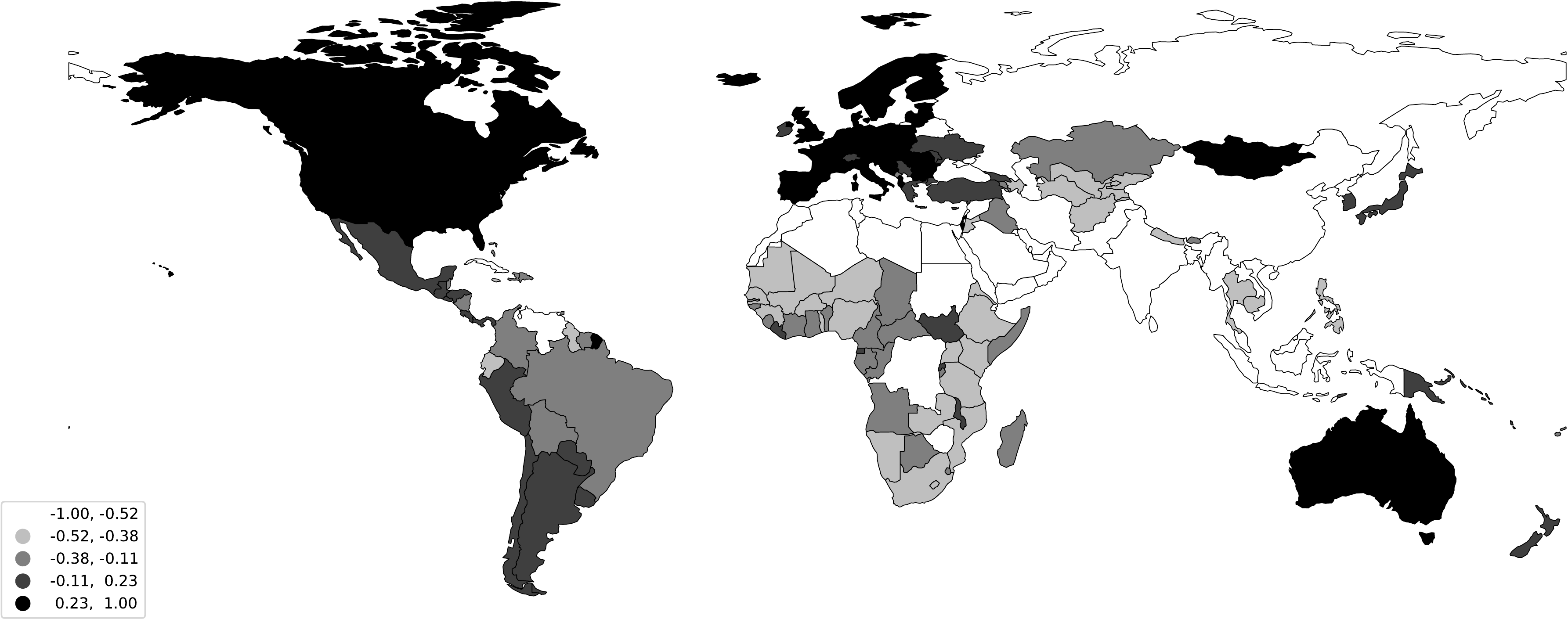}\caption{Differential Foreign
Policy Similarity Index. Values are normalized such that $1$ represents
maximum relative similarity with the U.S. and $-1$ represents maximum relative
similarity with China. {\protect\footnotesize {\ The map shows the difference
between pairwise similarity indices $\kappa_{i,US} - \kappa_{i,China}$. The
parameter $\kappa_{i,j}$ represents the foreign policy similarity of countries
$i,j$, based on vote similarity in the United Nations General Assembly. Given
vote possibilities $n,m \in\{ 1, \cdots, k\}$, one can calculate a matrix $P =
[p_{nm}]$, where entry $p_{nm}$ represents the share of votes in which country
$i$ took position $n$ and country $j$ took position $m$. Given matrix $P$,
$\kappa_{i,j} = 1 - \sum_{m\neq n} p_{mn} / \sum_{m\neq n} p_{m} p_{n}$, where
$p_{m},p_{n}$ are expected marginal propensity of any country to take position
$m,n$ at a random vote. For more details, see \textcite{hage_choice_2011}.}}}%
\label{fig:map}%
\end{figure}

As a robustness exercise, we repeated the same procedure but replaced China
for Russia as the center of gravity of the Eastern bloc. We plot the two
Differential Foreign Policy Similarity Indices for each country in our sample
in Figure~\eqref{fig:similarity_comp}. Most countries fall very close to the
45-degree line (i.e., the regression coefficient is close to 1) and the
correlation between the two series is very high. This suggests that,
qualitatively, results will be very similar using either Russia or China as
the main country in the Eastern bloc.

\begin{figure}[th]
\centering
\includegraphics[scale=0.7]{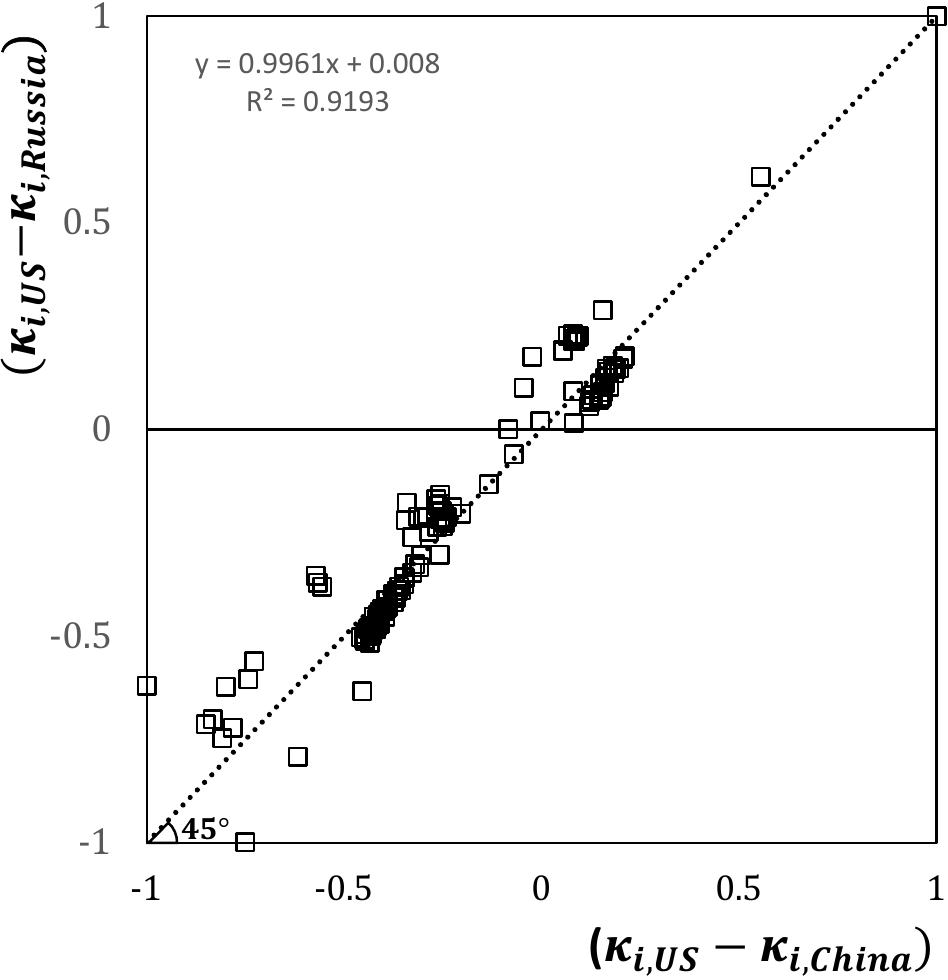}\caption{Differential
Foreign Policy Similarity Index. Values are normalized such that $1$
represents maximum relative similarity with the U.S. and $-1$ represents
maximum relative similarity with China or Russia. See the caption of Figure
\ref{fig:map} for further details on the indices.}%
\label{fig:similarity_comp}%
\end{figure}

After classifying the countries into Eastern and Western geopolitical blocs,
we design two different policy experiments. We first increase iceberg trade
costs $\tau_{sd,t}^{i}$ to a point where virtually all of the trade happens
exclusively within each bloc. In total, we increase bilateral trade costs by
$\sim160$ percentage points. We assume that the increases in trade costs is
permanent. We label this scenario \textbf{full decouple}. This provides an
important limiting case that can be useful for putting bounds on potential effects.

The second scenario relies on work by \textcite{nicita_cooperation_2018}, who
estimate that a move from cooperative to non-cooperative tariff setting would
increase average tariffs by 32 percentage points globally. We simulate what
would happen if countries kept cooperative tariff setting within their trade
blocs but moved to non-cooperative tariff setting across trade blocs. For
simplicity, we assume that regions in different blocs increase bilateral
tariffs $tm_{sd,t}^{i}$ by the globally average increase in tariffs when
moving from cooperative to non-cooperative tariffs: 32 percentage point
increases in tariff rates against regions outside the bloc. Again, we assume
that the increases in trade costs are permanent. These tariff increases seem
high. However, the average tariff increase in the China-U.S. trade war was
higher than 21pp. We call this scenario \textbf{tariff decouple}.

Besides the full and tariff decouple scenarios, we explore two additional
policy experiments. First, we evaluate what the impact would be of a switch of
the region Latin America and Caribbean (LAC)\ from the Western bloc to the
Eastern bloc. This sheds some light on the question many countries would face
if the decoupling would aggravate: which bloc to stay closest to?

Second, we explore a less hypothetical scenario: trade decoupling only in the
electronic equipment sector. We perform is a full decoupling between the
original blocs (with LAC in the Western bloc) but restrict the increase in
iceberg trade costs $\tau_{sd,t}^{i}$ only to the electronic equipment
($\mathtt{elm}$) sector. This scenario is motivated by U.S. and Chinese
authorities being increasingly at loggerheads with each other in the
technological arena.

One important example of this process has been the conflict involving Chinese
telecom giant Huawei Technologies. Since 2019, American corporations have been
banned from doing business with Huawei. In a similar move, the New York
Exchange delisted China Unicom, China Mobile, and China Telecom. Despite legal
challenges and a new administration, as of April 2021, neither decision has
been reversed.

Additionally, the U.S. has been using its foreign policy arsenal to pressure
allies to join them in limiting Chinese telecom companies' reach. In
particular, there is a desire to limit Chinese participation in 5G technology
auctions, citing national security and privacy concerns\footnote{North
American Treaty Organization (NATO) researchers \textcite{kaska_huawei_2019}
review the arguments put forth from a Western national security perspective.
This topic is extremely contentious and some Chinese commentators argue that
the U.S. is using national security concerns as an excuse to implement
industrial policy.}. So far, Australia, the United Kingdom and some European
allies have chosen to ban or limit Chinese participation in technological auctions.

This conflict suggests that a large increase in trade costs between the U.S.
allies and Chinese allies regarding technological equipment is a positive
probability scenario in the future. In this case, decoupling would mean a near-total separation of the electronic equipment sectors of the two blocs.

Huawei and Google break of their business connections after the U.S.
government sanctions against the Chinese corporation is a good illustration of what this separation could look like in real life. Huawei used Google's
\textit{Android} ecosystem in their smartphones, which gave their users access
to Google-approved updates and apps. After the ban issued by the Trump
administration, however, Google announced it would comply with the U.S.
government directives and Huawei was forced to shift away from Google software
and design their own operating system \textit{HarmonyOS}.

Since this separation is driven primarily by regulation rather than tariffs,
it is appropriate to think of it as an increase in iceberg trade costs
$\tau_{sd,t}^{i}$ between blocs in the electronic equipment sector and so this
is the scenario we will explore.

\section{Main Results\label{results}}

We have four main scenarios. We simulate full decouple and tariff decouple,
defined as explained above. We simulate either scenario both with and without
diffusion of ideas, in order to assess the additional impact of the diffusion
mechanism. After a discussion of the results of the full and tariff decouple
scenarios, we compare the impact of decoupling on productivity in a
multi-sector and single-sector setting. We finish this section with a
description of the results of the two additional policy experiments that
restrict decoupling to the electronics and equipment sector or change
LAC\ from the Western bloc to the Eastern bloc.

In the results below we report the results of a comparison of the simulation
results with and without policy experiments. We first simulate the dynamic
world economy with no policy change, then do the same with the policy change,
and report the long-run cumulative percentage difference between the two:
$\hat{x}=\sum_{t=p}^{T}(x_{t}^{\prime}-x_{t})/\sum_{t=p}^{T}x_{t}$, where $p$
is the date of the first policy change, $x_{t}^{\prime},$ $x_{t}$ are the
values of variable $x$ with and without the policy change, respectively.

\subsection{Full and Tariff Decouple}

As expected, all scenarios show large negative impacts on cross-bloc trade
after the introduction of the policy intervention. In the \textbf{full
decouple} scenario, trade between the countries in the Western bloc and the
Eastern bloc is virtually shut down, with imports and exports dropping by
$98\%$. Those countries also shift a substantial part of their trade to the
U.S., with trade flows increasing anywhere between $10-42\%$ depending on the
region and scenario. The domestic spending share in the U.S. increases by
about $7\%$. The converse happens in the Eastern bloc but with larger
dispersion across regions. Trade with the U.S. drops by $65-90\%$ while trade
with China increases by $9-60\%$. The domestic trade share in China increases
by $3\%.$ The \textbf{tariff decouple} scenario yields qualitatively similar
results but with smaller magnitudes. We show the results by region and
scenario in Figure \ref{fig:trade}.

\begin{figure}[ptb]
\centering
\includegraphics[scale=0.7]{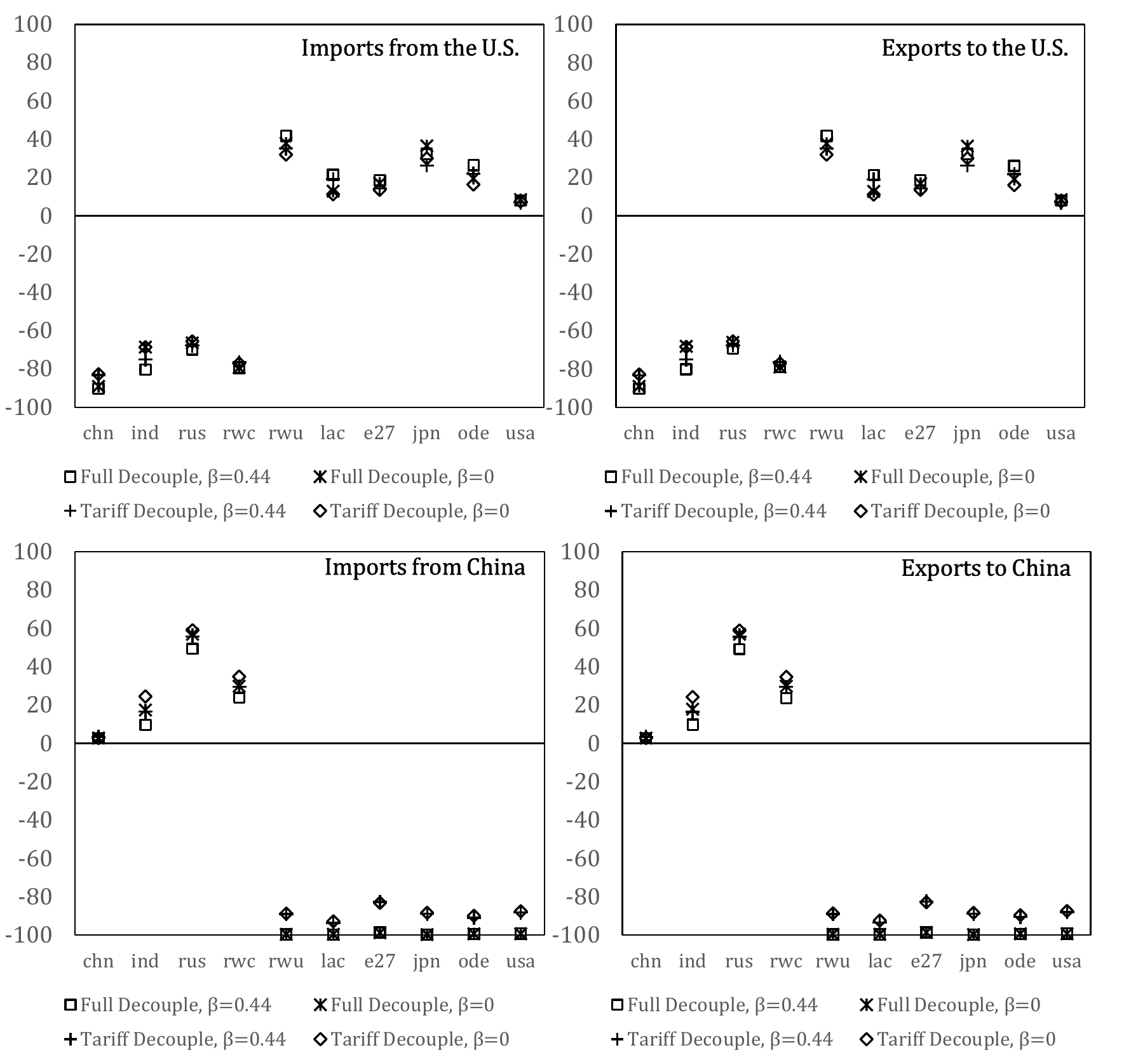}\caption{Cumulative Percentage
Change in Trade Flows with China and the United States, respectively, after
the policy change, by 2040. {\protect\footnotesize {\ \textit{Full Decouple}
increases iceberg trade costs $\tau_{sd,t}^{i}$ by $160$ percentage points.
\textit{Tariff decouple} increases bilateral tariffs $tm_{sd,t}^{i}$, across
groups, by $32$ percentage points. $\beta$ is a parameter that controls the diffusion of ideas according to equation \ref{eq: final_law_of_motion},
assumed to be homogeneous across sectors. Country codes: \texttt{chn}, China;
\texttt{ind}, India; \texttt{rus}, Russia; \texttt{rwc}, Rest of Eastern bloc;
\texttt{rwu}, Rest of Western bloc; \texttt{lac}, Latin America; \texttt{e27},
European Union; \texttt{ode}, Other Developed; \texttt{usa}, United States.
Tables with the values for these charts can be found in the Appendix.}}}%
\label{fig:trade}%
\end{figure}

One of the reasons behind the asymmetry in trade decreases between blocs is
the assumption of a fixed trade-balance-to-income ratio in all regions but
one. This implies that regions with large trade surpluses will shift
proportionally less of their trade flows away from regions in other trade
blocs in order to satisfy the fixed trade balance assumption.\begin{comment}
PROVIDE MORE DETAILS ON THIS STATEMENT, EXPLAINING WHY A TRADE SURPLUS MEANS THAT LESS TRADE CAN SHIFT AWAY.
\end{comment}

Figure (\ref{fig: welfare}) shows that both the increases in iceberg trade
costs (full decouple) and retaliatory tariff hikes (tariff decouple) induce
substantial welfare decreases for all countries. The effects, however, are
asymmetric. While welfare losses in the Western bloc range anywhere between
$-1\%$ and $-8\%$ (median: $-4\%$) in the Eastern bloc it falls in the $-8\%$
to $-12\%$ range (median: $-10.5\%$).

\begin{figure}[th]
\centering
\includegraphics[scale=0.5]{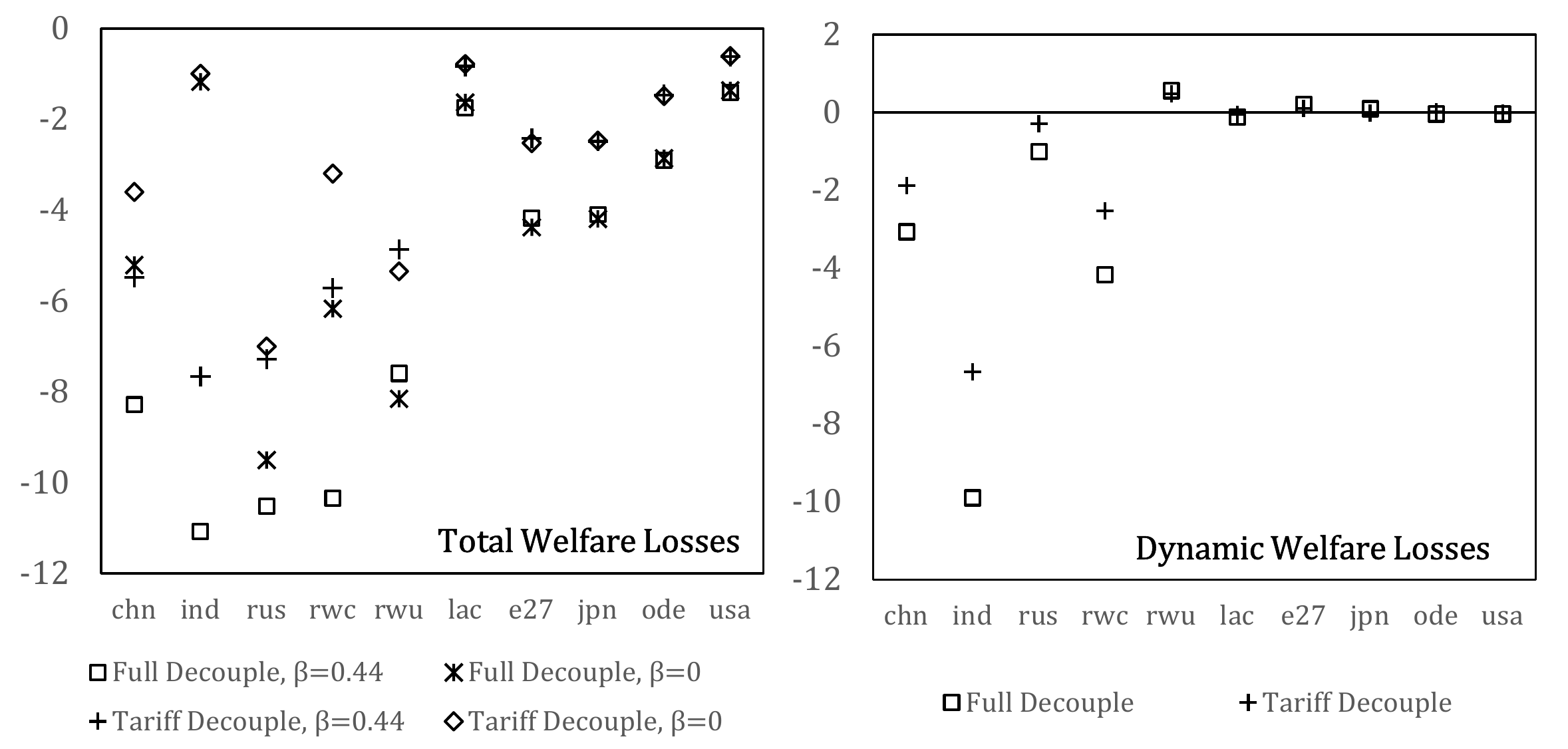}\caption{Cumulative
Percentage Change in Real Income, after the policy change, by 2040.
{\protect\footnotesize {\ \textit{Full Decouple} increases iceberg trade costs
$\tau_{sd,t}^{i}$ by $160$ percentage points. \textit{Tariff decouple}
increases bilateral tariffs $tm_{sd,t}^{i}$, across groups, by $32$ percentage
points. $\beta$ is a parameter that controls the diffusion of ideas
according to equation \ref{eq: final_law_of_motion}, assumed to be homogeneous
across sectors. Country codes: \texttt{chn}, China; \texttt{ind}, India;
\texttt{rus}, Russia; \texttt{rwc}, Rest of Eastern bloc; \texttt{rwu}, Rest
of Western bloc; \texttt{lac}, Latin America; \texttt{e27}, European Union;
\texttt{ode}, Other Developed; \texttt{usa}, United States. Tables with the
values for these charts can be found in the Appendix.}}}%
\label{fig: welfare}%
\end{figure}

The underlying factor driving the divergence in results between the two blocs
is a difference in the evolution of productivity, represented by the scale
parameter of the Fr\'{e}chet distribution of different sectors. Sourcing goods
from high productive countries puts domestic managers in contact with better
quality designs that inspire better ideas through innovation or imitation.

Importantly, the dynamics governed by equation (\ref{eq: final_law_of_motion})
incorporate the input-output structure of production, such that domestic
managers in each sector innovate in proportion to the quality and share of
their inputs. Losing access to high quality designs does not only lead to
static losses, but also to a lower level of future innovation, which implies
larger dynamic losses. Additionally, the input-output structure of the model
implies that cutting ties to innovative regions is particularly costly if the
destination country sources many intermediate inputs from such regions prior
to the policy change.

\begin{figure}[th]
\centering
\includegraphics[scale=0.55]{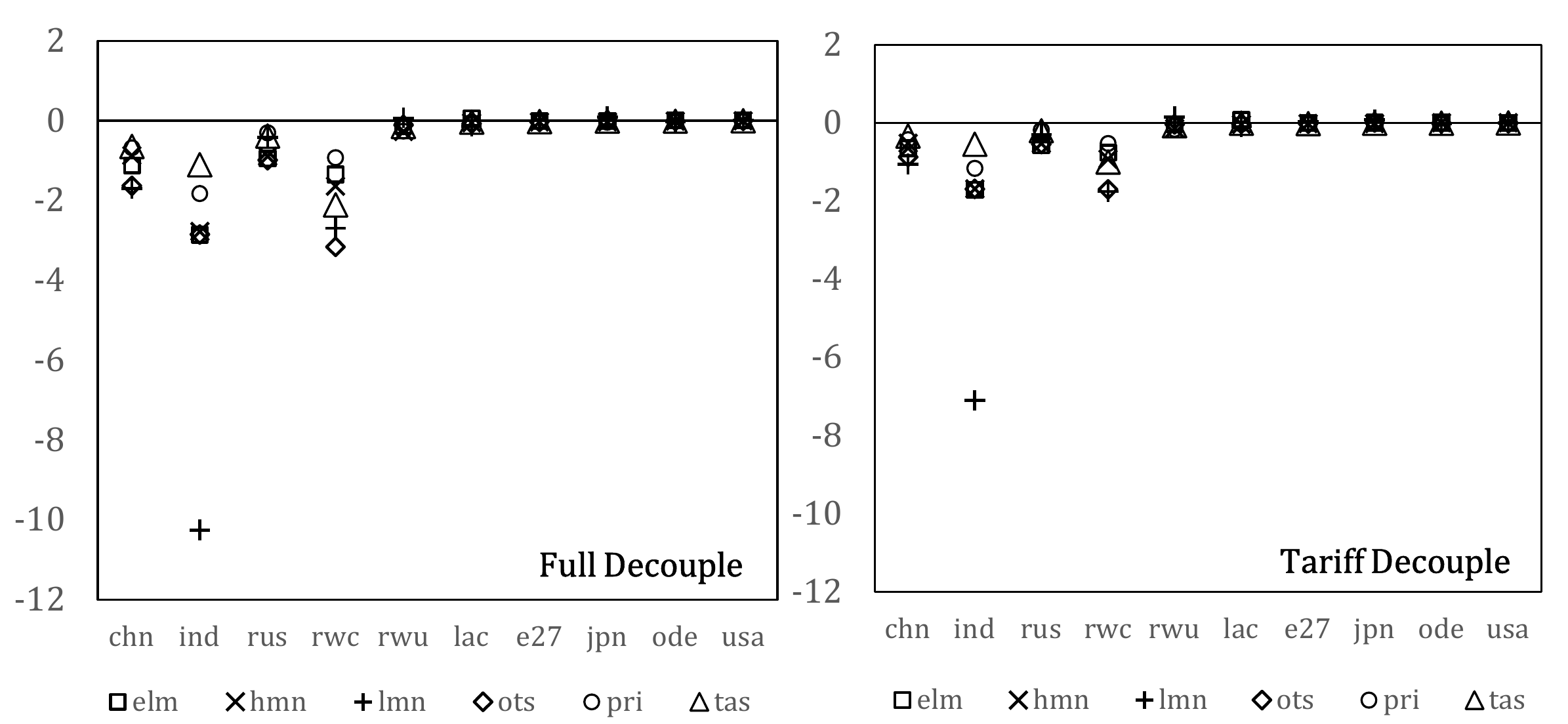}\caption{Cumulative Percentage
Change in the Fr\'echet Distribution location parameter $\lambda_{d,t}^{i}$,
after policy change, by 2040. {\protect\footnotesize {\ \textit{Full Decouple}
increases iceberg trade costs $\tau_{sd,t}^{i}$ by $160$ percentage points.
\textit{Tariff decouple} increases bilateral tariffs $tm_{sd,t}^{i}$, across
groups, by $32$ percentage points. $\beta$ is a parameter that controls the
the diffusion of ideas according to equation \ref{eq: final_law_of_motion},
assumed to be homogeneous across sectors. Country codes: \texttt{chn}, China;
\texttt{ind}, India; \texttt{rus}, Russia; \texttt{rwc}, Rest of Eastern bloc;
\texttt{rwu}, Rest of Western bloc; \texttt{lac}, Latin America; \texttt{e27},
European Union; \texttt{ode}, Other Developed; \texttt{usa}, United States.
Sector codes: \texttt{elm}, Electronic Equipment; \texttt{hmn}, Heavy
manufacturing; \texttt{lmn}, Light manufacturing; \texttt{ots}, Other
Services; \texttt{pri}, Primary Sector; \texttt{tas}, Business services.
Tables with the values for these charts can be found in the Appendix.}}}%
\label{fig:lambda}%
\end{figure}

For those reasons, in our policy experiments, countries in the Eastern bloc
that currently have a lower level of productivity and have larger ties with
innovative countries have larger losses. By looking at results in Figure
\ref{fig:lambda}, one can see the stark contrast between the differential
evolution of $\lambda_{d,t}^{i}$ for those countries in the Western bloc and
those in the Eastern bloc. By cutting ties with richer and innovative markets
such as OECD countries, destination countries such as China, India, and parts
of Asia and Africa shift their supply chains towards lower-quality inputs,
which, in turn, induces less innovation. By contrast, while countries in the
Western bloc do suffer welfare losses, their innovation paths are virtually
unchanged after decoupling, suggesting that nearly all of their losses are
static, rather than dynamic.

There is large dispersion across both sectors and countries in differential
productivity losses. The two most affected regions are India and the ``rest of
the Eastern bloc'' region. Starting from a lower income level than China and
Russia, those regions have a much slower productivity catch-up after severing
trade ties with the West. Sectors with larger supply chain linkages to the
West prior to the policy change, such as manufacturing in India, experience
larger losses.

Among those regions in the Eastern bloc, differential productivity losses are
larger in the manufacturing sectors ($-1.5\%$ and $-3\%$ with full decoupling
and tariff decoupling, respectively; this includes \texttt{elm}, \texttt{lmn},
and \texttt{hmn}) than in the services ($-0.8\%$ and $-1.6\%$, respectively;
\texttt{ots} \texttt{tas}) or primary ($-0.5\%$ and $-1\%$, respectively;
\texttt{pri}) sectors.

Finally, we address the contrast between the static effect (when the diffusion
of ideas mechanism is shut down) and the dynamic effect. For the two poorer
regions of the Eastern bloc, dynamic losses far outsize static losses, which
can be explained by the loss of access to higher-quality inputs. In the
right panel of Figure (\ref{fig: welfare}), we show the dynamic losses for
each region.

In India, static welfare losses amount to $1-2\%$ while dynamic losses range
from $7-10\%$, depending on the decoupling scenario. Static losses to real
income are small because India is a relatively large country and its domestic
trade share in the market equilibrium is large, which limits the range of
goods affected by changes in terms of trade. However, because it is relatively
poor, its losses in the diffusion of ideas version of the model are very
large. By severing ties with the Western bloc, it limits the role of
trade-induced innovation, which is a by-product of having access to high-quality suppliers.

By contrast, in Russia including dynamics leads only to small additional
effects: welfare losses are very similar with or without the ideas diffusion
mechanism. As explained above, this stems both from a higher income starting
point and relatively limited input-output linkages with the West.

\subsection{Diffusion Inefficiencies a Multi-sector vs. a Single-sector
Framework}

In Section \ref{section: intuition}, we stressed that, except in knife-edge
cases, within- and between-sector inefficiencies accumulate as the number of
countries and sectors increase. The concavity of the diffusion process implies
that \textit{total} trade shares being at their optimal points is no longer
sufficient for optimal diffusion. Optimal diffusion requires trade shares to
be at their optimal points \textit{at every sector}. This suggests that, in
most cases, diffusion inefficiencies should increase with the number of sectors.

\begin{figure}[th]
\centering
\includegraphics[scale=0.5]{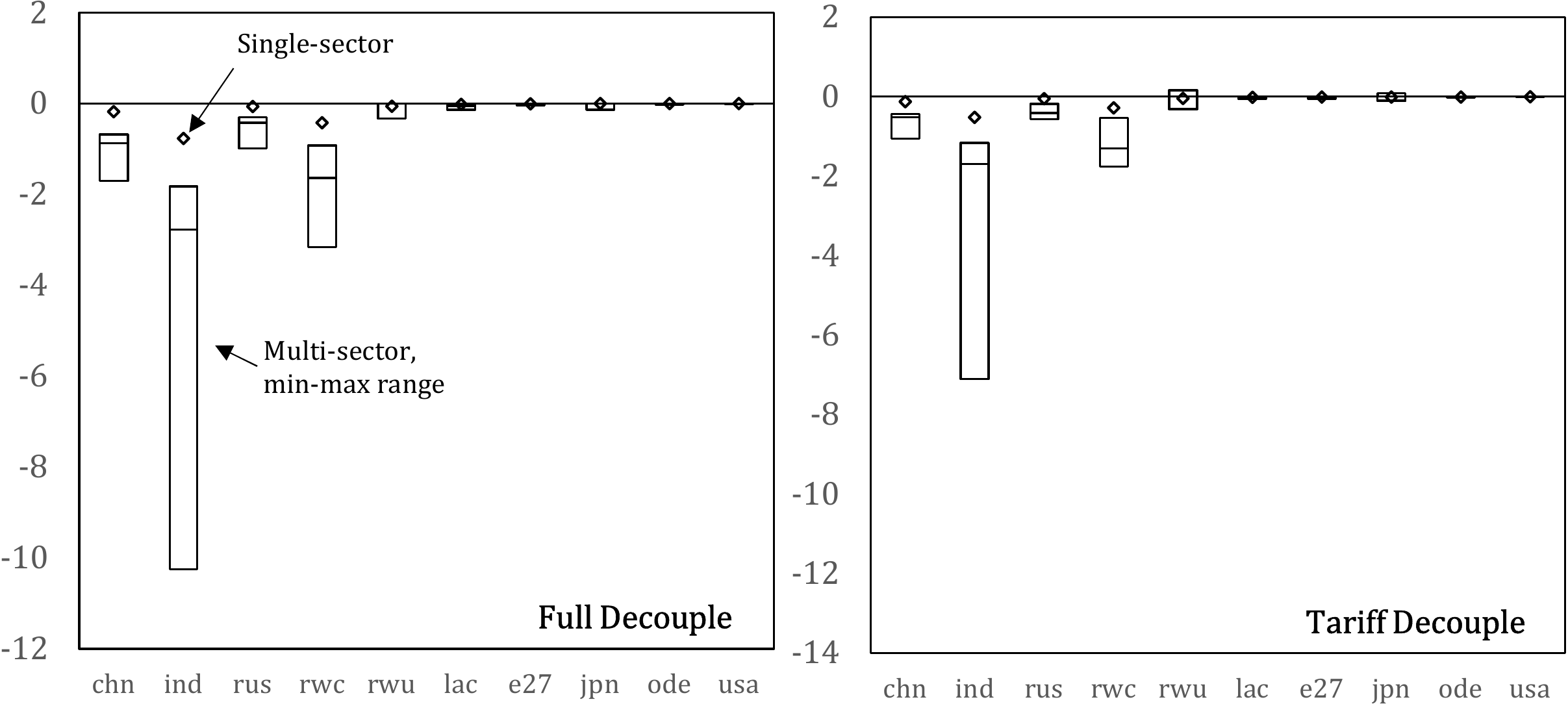}\caption{Multi-sector vs.
Single-sector: Cumulative Percentage Change in the Fr\'echet Distribution
location parameter $\lambda_{d,t}^{i}$, after policy change, by 2040.
{\protect\footnotesize {\ \textit{Full Decouple} increases iceberg trade costs
$\tau_{sd,t}^{i}$ by $160$ percentage points. \textit{Tariff decouple}
increases bilateral tariffs $tm_{sd,t}^{i}$, across groups, by $32$ percentage
points. Country codes: \texttt{chn}, China; \texttt{ind}, India; \texttt{rus},
Russia; \texttt{rwc}, Rest of Eastern bloc; \texttt{rwu}, Rest of Western
bloc; \texttt{lac}, Latin America; \texttt{e27}, European Union; \texttt{ode},
Other Developed; \texttt{usa}, United States. Sector codes: \texttt{elm},
Electronic Equipment; \texttt{hmn}, Heavy manufacturing; \texttt{lmn}, Light
manufacturing; \texttt{ots}, Other Services; \texttt{pri}, Primary Sector;
\texttt{tas}, Business services. Tables with the values for these charts can
be found in the Appendix.}}}%
\label{fig:multi-sec}%
\end{figure}

Our empirical results confirm that theoretical intuition. Figure
\ref{fig:multi-sec} contrasts the results of either the \textbf{full decouple}
or the \textbf{tariff decouple} scenarios under the baseline specification
presented in the previous section and an alternative simulation in which we
collapse the model to a single-sector framework.

In both scenarios, countries in the Easter bloc face higher cumulative diffusion inefficiencies (as
measured by the reduction in the Fr\'echet parameters $\lambda_{d,t}^{i}$) in a multi-sector framework. In fact, the single-sector dynamic productivity losses are outside the min-max range of the sectoral productivity changes for all countries in the Eastern bloc. These results
underscore one important takeaway of this paper: modeling trade diffusion in a simplified single-sector framework can underestimate the level of dynamic losses induced by an increase in trade costs.

\subsection{Consequences of Bloc Membership}

In this section, we consider the consequences of moving one of the regions
---Latin America and the Caribbean (LAC) ---from the Western to the Eastern
bloc. Intuitively, we expect that, by losing access to the highest
productivity suppliers, LAC will experience less productivity growth.
Nonetheless, the quantitative exercise allows us to have a sense of the
magnitude induced by the change in group membership.

\begin{figure}[th]
\centering
\includegraphics[scale=0.7]{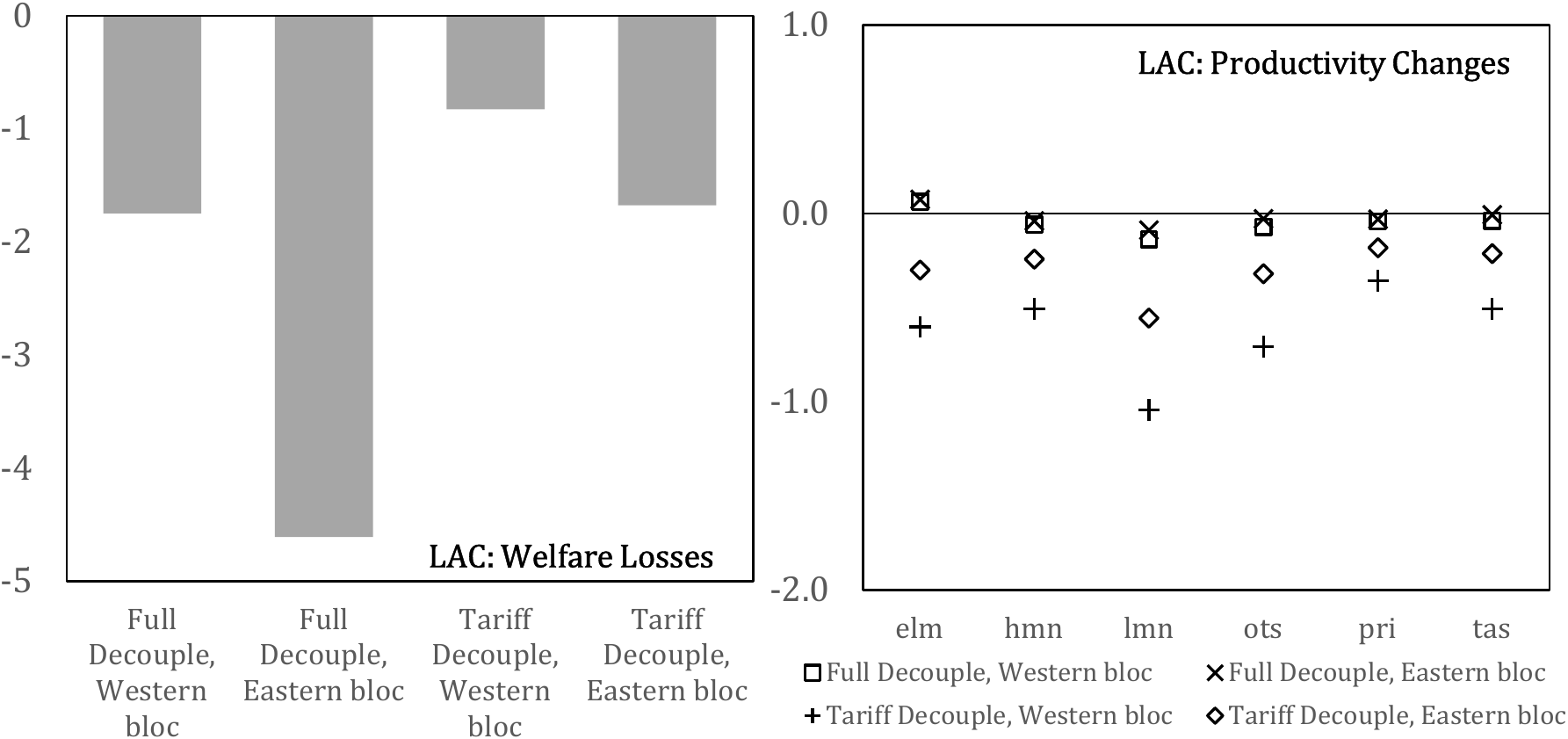}\caption{Left Panel: Cumulative
Percentage Change in Real Income in LAC Region, by scenario. Right Panel:
Cumulative Percentage Change of the Fr\'echet Distribution scale parameter
$\lambda_{d,t}^{i}$ in LAC Region, by scenario.
{\protect\footnotesize {\ \textit{Full Decouple} increases iceberg trade costs
$\tau_{sd,t}^{i}$ by $160$ percentage points. \textit{Tariff decouple}
increases bilateral tariffs $tm_{sd,t}^{i}$, across groups, by $32$ percentage
points. In all cases, we set the parameter that controls the diffusion of
ideas to $\beta=0.44$. Sector codes: \texttt{elm}, Electronic
Equipment; \texttt{hmn}, Heavy manufacturing; \texttt{lmn}, Light
manufacturing; \texttt{ots}, Other Services; \texttt{pri}, Primary Sector;
\texttt{tas}, Business services. Tables with the values for these charts can
be found in the Appendix.}}}%
\label{fig: lac}%
\end{figure}

Figure \ref{fig: lac} compares the results of identical decoupling scenarios,
simulating either \textit{full decouple} or \textit{tariff decouple}. The only
difference is LAC bloc membership. As expected, most of the changes are
concentrated in the LAC region. The left panel of Figure \ref{fig: lac} shows
that welfare losses in LAC are about $100-150\%$ larger when it is included in
the Eastern bloc, for both scenarios. The domestic trade share in LAC\ is
virtually identical under both settings (with LAC\ in the Western or in the
Eastern bloc), implying similar static welfare losses. This suggests that the
increased losses from switching blocs stem almost entirely from dynamic losses.

Moving LAC to the Eastern bloc reduces the welfare losses of decoupling in
India and China by about $2p.p.$ ($16\%$) and $1p.p.$ ($15\%$), respectively
(results not reported). The reason is twofold. First, LAC has a higher income
than India and the Rest of the Eastern bloc. All else equal, on average, its
inclusion in the bloc raises average productivity and decreases dynamic
losses. Second, lower tariff or iceberg trade costs between the Eastern bloc
and LAC induce lower static losses for those countries.

The right-hand side panel of Figure \ref{fig: lac} shows the differential
productivity changes in the LAC region for different sectors. When LAC is
included in the Western bloc, there are essentially no dynamic productivity
losses in any sector: the evolution of the Fr\'echet Distribution scale
parameter $\lambda_{d,t}^{i}$ in the LAC Region behaves very similarly to a
scenario with no policy changes.

In contrast, all sectors have dynamic productivity losses weakly greater than
$1\%$ when we simulate decoupling with LAC as part of the Eastern bloc. There
is large sectoral heterogeneity. Under full decoupling, productivity losses
range from $1\%$ in Electronic Equipment (\texttt{elm}) to $0.4\%$ in Business
Services (\texttt{tas}). These differences are induced by input-output linkages.

This experiment underscores that the costs of decoupling might be unbearably
high for low and middle-income countries that are excluded from the Western
bloc. Many countries in Latin America and Africa benefit from increasingly
large trade ties to China through both having larger market access and access
to lower input costs. However, as the dynamic costs of severing ties with the
West would be very high, and political leaders in those countries might have the incentive to keep an equidistant relationship between East and West, by
preserving both mid-term gains from the relationship with China and longer
term dynamic gains from having access to Western supply chains.

\subsection{Electronic Equipment Decoupling}

Finally, we compare our baseline scenario of \textit{full decouple} in
\textbf{all sectors} with a \textit{full decouple} \textbf{restricted to the
electronics equipment sector}. In both scenarios, we assume that the ideas
diffusion mechanism works as described by equation
(\ref{eq: final_law_of_motion}) and we set $\beta=0.44$, according to the
calibration described before.

\begin{figure}[th]
\centering
\includegraphics[scale=0.6]{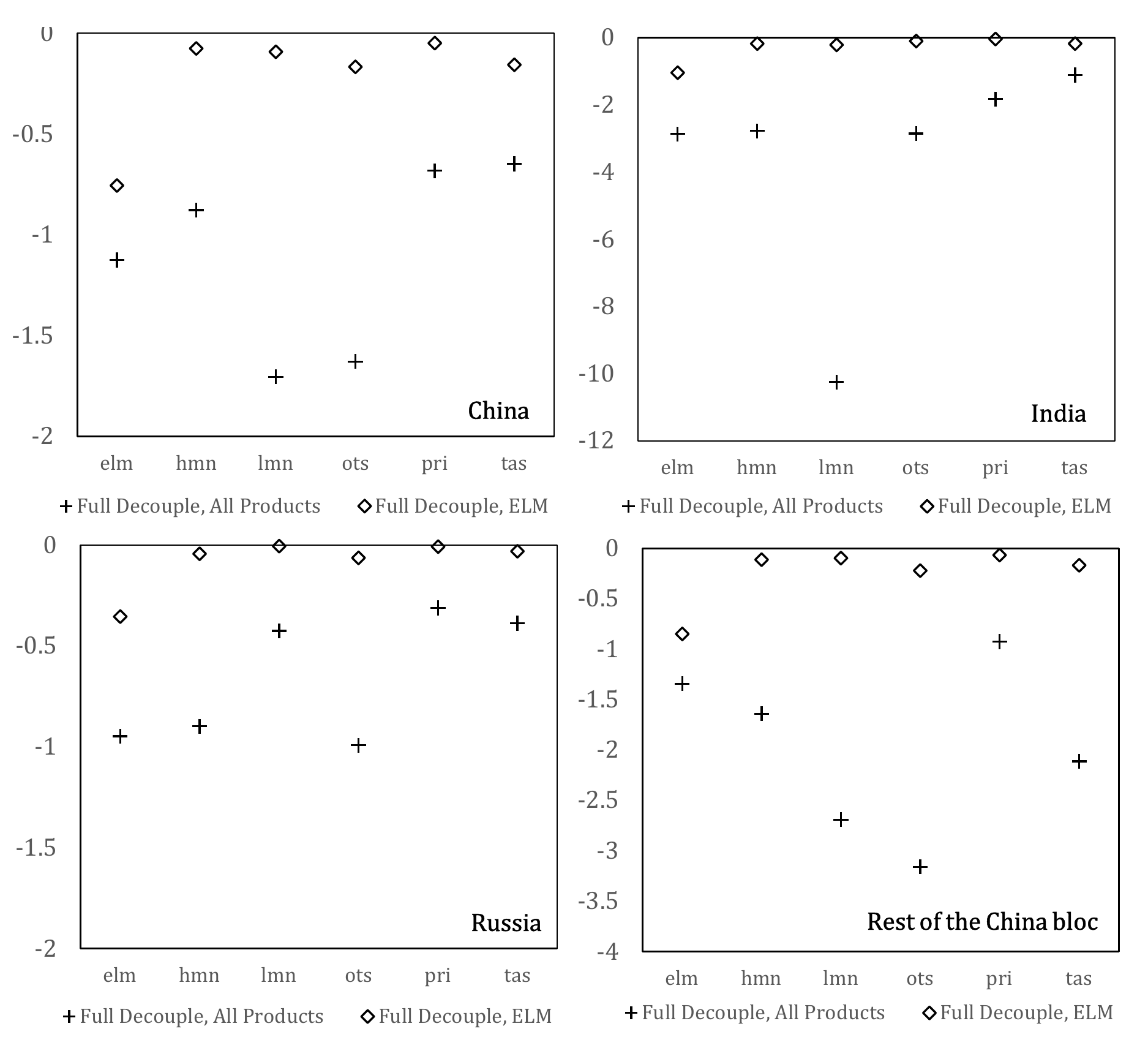}\caption{Cumulative Percentage
Change of the Fr\'echet Distribution scale parameter $\lambda_{d,t}^{i}$, by
scenario. {\protect\footnotesize {\ \textit{Full Decouple} increases iceberg
trade costs $\tau_{sd,t}^{i}$ by $160$ percentage points in either all sectors
or only in the Electronic Equipment (\texttt{elm}) sector. In both cases, we
set the parameter that controls the diffusion of ideas to $\beta=0.44$.
Country codes: \texttt{chn}, China; \texttt{ind}, India; \texttt{rus}, Russia;
\texttt{rwc}, Rest of Eastern bloc; \texttt{rwu}, Rest of Western bloc;
\texttt{lac}, Latin America; \texttt{e27}, European Union; \texttt{ode}, Other
Developed; \texttt{usa}, United States. Tables with the values for these
charts can be found in the Appendix.}}}%
\label{fig:elm_lambda}%
\end{figure}

Note that, due to the multi-sector structure of the model, an increase in
iceberg trade costs in one particular sector potentially has an indirect
effect in all sectors of the economy. The magnitude of such impact in a given
sector can be split between a direct effect (proportional to input use from
the \texttt{elm} sector as intermediates) and an indirect effect (proportional
to the use of the \texttt{elm} sector in the production of intermediates inputs).

Results in Figure \ref{fig:elm_lambda} show the productivity losses induced by
policy changes represented by the evolution of the Fr\'echet Distribution
scale parameter $\lambda_{d,t}^{i}$ for those regions in the Eastern bloc.
Contrasting the full decoupling in all sectors and one restricted to
electronic equipment shows that, across all regions, productivity losses are
substantially reduced and mostly restricted to the \texttt{elm} sector.

While there is some negative spillover effect to other sectors due to
input-output linkages, particularly to business services (\texttt{tas}), these
are very small for most regions. Regions such as Russia, which already had
limited exposure to Western intermediate sourcing in the main scenario, see
productivity losses go down to nearly zero across all sectors under the
scenario that limits decoupling to the \texttt{elm} sector. China's losses in
the \texttt{elm} sector are roughly similar to losses when decoupling happens
in all products, but other sectors are not substantially affected.

All other regions have non-negligible losses in the \texttt{elm} sector. The
largest changes happen for India and the Rest of the Eastern bloc. Those
regions have a lower productivity starting point and benefit proportionately
more from exposure to higher-quality intermediate inputs. For that reason,
full decoupling in all products leads to large differential losses in
productivity in those regions. The more restricted full decoupling in
\texttt{elm} scenario limits losses, since those are proportional to the use
of Western electronic equipment as inputs in the production of other sectors.

\begin{figure}[th]
\centering
\includegraphics[scale=0.7]{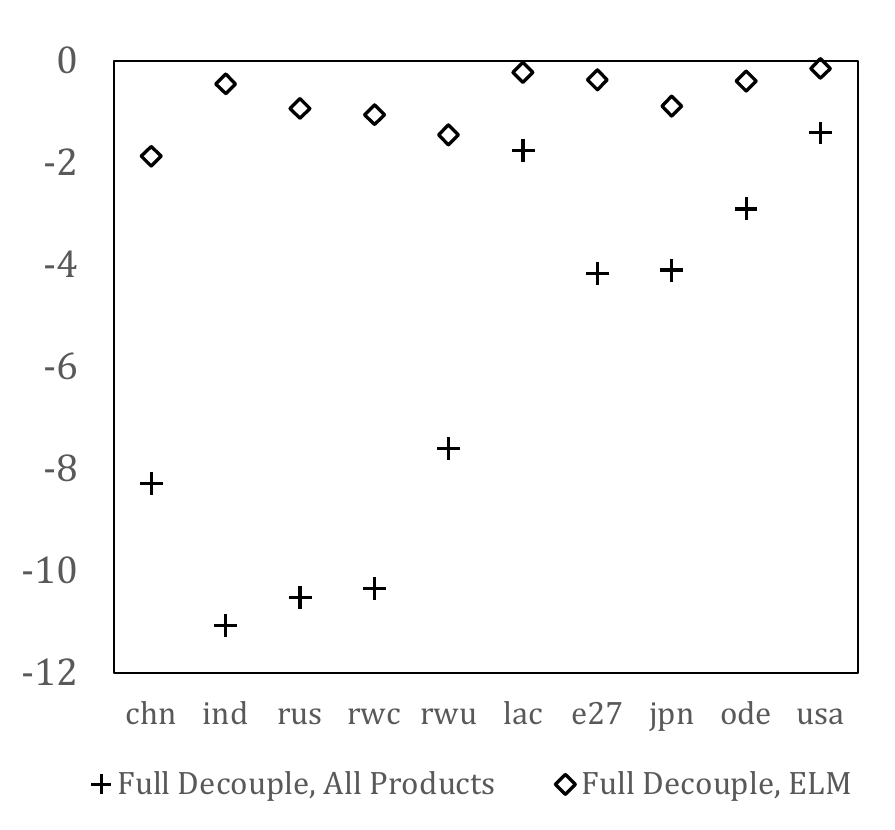}\caption{Cumulative Percentage
Change in Welfare (Real Income), by scenario.
{\protect\footnotesize {\ \textit{Full Decouple} increases iceberg trade costs
$\tau_{sd,t}^{i}$ by $160$ percentage points in either all sectors or only in
the Electronic Equipment (\texttt{elm}) sector. In both cases, we set the parameter that controls the diffusion of ideas to $\beta=0.44$.
Country codes: \texttt{chn}, China; \texttt{ind}, India; \texttt{rus}, Russia;
\texttt{rwc}, Rest of Eastern bloc; \texttt{rwu}, Rest of Western bloc;
\texttt{lac}, Latin America; \texttt{e27}, European Union; \texttt{ode}, Other
Developed; \texttt{usa}, United States. Tables with the values for these
charts can be found in the Appendix.}}}%
\label{fig:elm_gdp}%
\end{figure}

Changes in productivity map into changes in welfare, pictured in Figure
\ref{fig:elm_gdp}. While welfare losses are substantial, ranging from
$0.4-1.9\%$, they are very different in magnitude to the devastating results
of a full decoupling in all products, in which losses range between $8-12\%$.

These results underline two important observations. First, the costs of
sector-specific decoupling might be limited enough for this scenario to be
feasible. Second, input-output structures play an important role in magnifying
dynamic losses. Limiting decoupling to one specific sector tapers down
indirect magnification effects that happen through the input-output network.

\section{Conclusion\label{conclusion}}

We build a multi-sector multi-region general equilibrium model with dynamic
sector-specific knowledge diffusion in order to realistically investigate the
impact of large and persistent geopolitical conflicts on global trade
patterns, economic growth, and innovation. Canonical trade models typically
start from a fixed technology assumption, which thus misses a crucial source
of gains from trade through the diffusion of ideas.

In our theoretical contribution, we show that large trade costs can lead to
dynamic inefficiencies in sectoral knowledge diffusion. Furthermore, we show
that in a multi-sector framework deviations from optimal knowledge diffusion
happen both within and between sectors. Additionally, sectoral deviations
accumulate, such that total trade shares being close to their aggregate
optimal diffusion points is no longer sufficient to guarantee optimal
diffusion. A takeaway of our theoretical discussion is that, as the number of
sectors increases, so do the number of deviations from optimality and
diffusion losses tend to be higher with multiple sectors.

We then use this toolkit to simulate the trade, innovation, and welfare
effects of potential receding globalization characterized by economic decoupling between the East and West, yielding three main insights. First, the projected
welfare losses for the global economy of a decoupling scenario can be drastic,
being as large as 12\% in some regions, and are largest in the lower-income
regions as they would benefit less from technology spillovers from richer
areas. Second, the described size and pattern of welfare effects are specific
to the model with diffusion of ideas. Without diffusion of ideas the size and
variation across regions of the welfare losses would be substantially smaller.
Third, a multi-sector framework exacerbates diffusion inefficiencies induced
by trade costs relative to a single-sector one.

This has important implications for the role of the multilateral trading
system. First, the current system with global trade rules guaranteeing open
and free trade between all major players is especially important for the
lowest-income regions. Second, if geopolitical considerations would lead to a
split of the big players into two blocs, it would be important that an
institutional framework remains in place for smaller countries to keep open
trade relations with both blocs, in particular for the lowest income regions.

The toolkit we have built is versatile and can be employed for many other
research questions, in particular, focused on the analysis of policy. Future
research could be extended in various directions. We mention two. First, there
is ample empirical evidence for spillover effects from FDI, so the model could
be extended with FDI and sales by multinational affiliates. Second, the
framework with both technology spillovers and profits can be employed to
analyze the economic effects of subsidy policies in different regions, an
important policy topic in the multilateral trading system.

\printbibliography

@article{wyne_how_2020,
	title = {How to {Think} about {Potentially} {Decoupling} from {China}},
	volume = {43},
	issn = {0163-660X},
	url = {https://doi.org/10.1080/0163660X.2020.1735854},
	doi = {10.1080/0163660X.2020.1735854},
	number = {1},
	urldate = {2022-01-17},
	journal = {The Washington Quarterly},
	author = {Wyne, Ali},
	month = jan,
	year = {2020},
	pages = {41--64},
}

@article{levchenko_evolution_2016,
	title = {The evolution of comparative advantage: {Measurement} and welfare implications},
	volume = {78},
	issn = {0304-3932},
	shorttitle = {The evolution of comparative advantage},
	url = {https://www.sciencedirect.com/science/article/pii/S0304393216000088},
	doi = {10.1016/j.jmoneco.2016.01.005},
	abstract = {Using novel estimates of sectoral total factor productivities for 72 countries across 5 decades we provide evidence of relative productivity convergence: productivity grew systematically faster in initially relatively less productive sectors. These changes have had a significant impact on trade volumes and patterns, and a non-negligible welfare impact. Had productivity in each country׳s manufacturing sector relative to the US remained the same as in the 1960s, trade volumes would be higher, cross-country export patterns more dissimilar, and intra-industry trade lower than in the data. Relative sectoral productivity convergence – holding average growth fixed – had a modest negative welfare impact.},
	language = {en},
	urldate = {2023-01-23},
	journal = {Journal of Monetary Economics},
	author = {Levchenko, Andrei A. and Zhang, Jing},
	month = apr,
	year = {2016},
	keywords = {Ricardian models of trade, Sectoral TFP, Technological change, Welfare},
	pages = {96--111},
}

@article{deng_specialization_2016,
	title = {Specialization {Dynamics}, {Convergence}, and {Idea} {Flows}},
	url = {https://ideas.repec.org//p/bai/series/series_wp_09-2016.html},
	abstract = {This paper studies the dynamics of international trade from the perspective of knowledge spillover. Building into an idea-flow model the industry dimension, I integrate four channels of knowledge spillover: each firm could learn from domestic producers as well as foreign sellers, and learning is both intra- and inter-industry. The theoretical framework yields the law of motion of industry-level productivity across countries, capturing strong interdependence of evolution of comparative advantage. I calibrate the model to a large sample of countries. My quantitative results capture important patterns in the data: strong convergence in comparative advantage and substantial mobility in specialization. Based on the law of motion, my decomposition exercise suggests international and inter-industry channels play a major role in knowledge spillover. Various measures are proposed to identify the “key player”, that is, the country or country-industry pair that contributes most to global productivity growth, in the knowledge diffusion network. The calibrated model also suggests dynamic gains from trade are at least one-third of static gains from trade.},
	language = {en},
	urldate = {2023-01-23},
	journal = {SERIES},
	author = {Deng, Liuchun},
	month = nov,
	year = {2016},
	note = {Number: 09-2016
Publisher: Dipartimento di Economia e Finanza - Università degli Studi di Bari "Aldo Moro"},
	keywords = {comparative advantage, convergence, economic growth, industrial productivity, international trade, knowledge spillover, specialization dynamics},
}

@article{dix-carneiro_trade_2017,
	title = {Trade {Liberalization} and {Regional} {Dynamics}},
	volume = {107},
	issn = {0002-8282},
	url = {https://www.jstor.org/stable/44871772},
	abstract = {We study the evolution of trade liberalization's effects on Brazilian local labor markets. Regions facing larger tariff cuts experienced prolonged declines informal sector employment and earnings relative to other regions. The impact of tariff changes on regional earnings 20 years after liberalization was three times the effect after 10 years. These increasing effects on regional earnings are inconsistent with conventional spatial equilibrium models, which predict declining effects due to spatial arbitrage. We investigate potential mechanisms, finding empirical support for a mechanism involving imperfect interregional labor mobility and dynamics in labor demand, driven by slow capital adjustment and agglomeration economies. This mechanism gradually amplifies the effects of liberalization, explaining the slow adjustment path of regional earnings and quantitatively accounting for the magnitude of the long-run effects.},
	number = {10},
	urldate = {2022-04-18},
	journal = {The American Economic Review},
	author = {Dix-Carneiro, Rafael and Kovak, Brian K.},
	year = {2017},
	note = {Publisher: American Economic Association},
	pages = {2908--2946},
}

@misc{abrams_new_2022,
	title = {The {New} {Cold} {War}},
	url = {https://www.cfr.org/blog/new-cold-war-0},
	abstract = {Putin's invasion of Ukraine and partnership with Xi have forced the United States into a new Cold War. Can the United States gain victory this time?},
	language = {en},
	urldate = {2022-03-15},
	journal = {Council on Foreign Relations},
	author = {Abrams, Elliot},
	month = mar,
	year = {2022},
}

@article{atalay_how_2017,
	title = {How {Important} {Are} {Sectoral} {Shocks}?},
	volume = {9},
	issn = {1945-7707},
	url = {https://www.jstor.org/stable/26528417},
	abstract = {I quantify the contribution of sectoral shocks to business cycle fluctuations in aggregate output. I develop and estimate a multi-industry general equilibrium model in which each industry employs the material and capital goods produced by other sectors. Using data on US industries’ input prices and input choices, I find that the goods produced by different industries are complements to one another as inputs in downstream industries’ production functions. These complementarities indicate that industry-specific shocks are substantially more important than previously thought, accounting for at least half of aggregate volatility.},
	number = {4},
	urldate = {2022-03-23},
	journal = {American Economic Journal: Macroeconomics},
	author = {Atalay, Enghin},
	year = {2017},
	note = {Publisher: American Economic Association},
	pages = {254--280},
}

@misc{weber_russias_2022,
	title = {Russia’s {War} on {Ukraine}: the {EU}’s {Geopolitical} {Awakening}},
	shorttitle = {Russia’s {War} on {Ukraine}},
	url = {https://www.gmfus.org/news/russias-war-ukraine-eus-geopolitical-awakening},
	abstract = {In reaction to the Russian invasion of Ukraine, the European Union has started making use of its full toolkit, from fierce economic sanctions to the activation of the European Peace Facility to equip Ukraine with arms. In reaction to the Russian invasion of Ukraine, the European Union has started making use of its full toolkit, from fierce economic sanctions to the activation of the European Peace Facility to equip Ukraine with arms.},
	language = {en},
	urldate = {2022-03-17},
	journal = {GMFUS},
	author = {Weber, Gesine and Scheffer, Alexandra},
	month = mar,
	year = {2022},
}

@article{mearsheimer_why_2014,
	title = {Why the {Ukraine} {Crisis} {Is} the {West}'s {Fault}: {The} {Liberal} {Delusions} {That} {Provoked} {Putin}},
	volume = {93},
	issn = {0015-7120},
	shorttitle = {Why the {Ukraine} {Crisis} {Is} the {West}'s {Fault}},
	url = {https://www.jstor.org/stable/24483306},
	number = {5},
	urldate = {2022-03-15},
	journal = {Foreign Affairs},
	author = {Mearsheimer, John J.},
	year = {2014},
	note = {Publisher: Council on Foreign Relations},
	pages = {77--89},
}

@article{lukin_russiachina_2021,
	title = {The {Russia}–{China} entente and its future},
	volume = {58},
	issn = {1740-3898},
	url = {https://doi.org/10.1057/s41311-020-00251-7},
	doi = {10.1057/s41311-020-00251-7},
	abstract = {The ‘strategic partnership’ between Moscow and Beijing is already more than two decades old and continues to evolve toward more consolidation, a trend which the coronavirus pandemic will only serve to accelerate. Its current state can be characterized as a quasi-alliance, or entente. The article first examines the Russia–China cooperation in the two most crucial areas—geo-economic and military. Then, it asks the question whether Moscow and Beijing could be on the verge of forming an alliance. The article proceeds to analyze the Russian–Chinese interaction in the areas of Eurasia where both of them have significant stakes and intersecting interests: East Asia, the post-Soviet space (with the focus on Central Asia), and the Arctic. Finally, the author draws up several scenarios envisioning the future of the Sino-Russian relationship.},
	language = {en},
	number = {3},
	urldate = {2022-03-15},
	journal = {International Politics},
	author = {Lukin, Artyom},
	month = jun,
	year = {2021},
	pages = {363--380},
}

@article{rivera-batiz_international_1991,
	title = {International trade with endogenous technological change},
	volume = {35},
	issn = {0014-2921},
	url = {https://www.sciencedirect.com/science/article/pii/001429219190048N},
	doi = {10.1016/0014-2921(91)90048-N},
	abstract = {To explain why trade restrictions sometimes speed up worldwide growth and sometimes slow it down, we exploit an analogy with the theory or consumer behavior. Substitution effects make demand curves slope down, but income effects can increase or decrease the slope, and can sometimes overwhelm the substitution effect. We decompose changes in the worldwide growth rate into two effects (integration and redundancy) that unambiguously slow down growth, and a third effect (allocation) that can either speed it up or slow it down. We study two types of trade restrictions to illustrate the use of this decomposition. The First is across the board restrictions on traded goods in an otherwise perfect market. The second is selective protection of knowledge-intensive goods in a world with imperfect intellectual property rights. In both examples, we show that for trade between similar regions such as Europe and North America, the first two effects dominate; starting from free trade, restrictions unambiguously reduce worldwide growth.},
	language = {en},
	number = {4},
	urldate = {2022-02-21},
	journal = {European Economic Review},
	author = {Rivera-Batiz, Luis A. and Romer, Paul M.},
	month = may,
	year = {1991},
	pages = {971--1001},
}

@article{hertel_how_2007,
	title = {How confident can we be of {CGE}-based assessments of {Free} {Trade} {Agreements}?},
	volume = {24},
	issn = {0264-9993},
	url = {https://www.sciencedirect.com/science/article/pii/S026499930700003X},
	doi = {10.1016/j.econmod.2006.12.002},
	abstract = {Computable General Equilibrium models, widely used for the analysis of Free Trade Agreements, are often criticized for having poor econometric foundations. This paper improves the linkage between econometric estimates of key parameters and their usage in CGE analysis in order to better evaluate the likely outcome of a Free Trade Area of the Americas (FTAA). Our econometric work focuses on estimation of a particular parameter, the elasticity of substitution among imports from different countries, which we show to be central to our evaluation of the normative impacts of the FTAA. We match the data in the econometric exercise to the policy experiment at hand, and employ both point estimates and the associated standard errors in our FTAA analysis which takes explicit account of the degree of uncertainty in the underlying parameters. In particular, we sample from the distribution of parameter values given by our econometric estimates in order to generate a distribution of model results, from which we can construct confidence intervals. We find that imports increase in all regions of the world as a result of the FTAA, and this outcome is robust to variation in the trade elasticities. Nine of the thirteen FTAA regions experience a welfare gain in which we are more than 95\% confident. We conclude that there is great potential for combining econometric work with CGE-based policy analysis in order to produce a richer set of results that are likely to prove more satisfying to the sophisticated policy maker.},
	language = {en},
	number = {4},
	urldate = {2022-01-17},
	journal = {Economic Modelling},
	author = {Hertel, Thomas and Hummels, David and Ivanic, Maros and Keeney, Roman},
	month = jul,
	year = {2007},
	keywords = {Free Trade Agreements, General equilibrium, Systematic sensitivity analysis, Trade elasticities},
	pages = {611--635},
}

@article{dekle_unbalanced_2007,
	title = {Unbalanced {Trade}},
	volume = {97},
	issn = {0002-8282},
	url = {https://www.aeaweb.org/articles?id=10.1257/aer.97.2.351},
	doi = {10.1257/aer.97.2.351},
	language = {en},
	number = {2},
	urldate = {2022-01-17},
	journal = {American Economic Review},
	author = {Dekle, Robert and Eaton, Jonathan and Kortum, Samuel},
	month = may,
	year = {2007},
	keywords = {Empirical Studies of Trade, Current Account Adjustment, Short-term Capital Movements},
	pages = {351--355},
}

@article{navarro_peter_white_2018,
	title = {White {House} {National} {Trade} {Council} {Director} {Peter} {Navarro} on {Chinese} {Economic} {Aggression} {June} 28th, 2018},
	language = {en},
	author = {{Navarro, Peter}},
	year = {2018},
	pages = {5},
}

@book{scissors_partial_2020,
	title = {Partial decoupling from {China}: {A} brief guide},
	shorttitle = {Partial decoupling from {China}},
	publisher = {JSTOR},
	author = {Scissors, Derek},
	year = {2020},
}

@article{tang_decoupling_2021,
	title = {Decoupling in science and education: {A} collateral damage beyond deteriorating {US}–{China} relations},
	volume = {48},
	issn = {0302-3427, 1471-5430},
	shorttitle = {Decoupling in science and education},
	url = {https://academic.oup.com/spp/article/48/5/630/6322422},
	doi = {10.1093/scipol/scab035},
	abstract = {Abstract
            Deteriorating relations between the USA and China since 2018 have extended to education and scientific research arenas. We put the US–China science and education in a historical perspective and describe the win–win situation when both countries collaborated. We discuss an ongoing loss–loss scenario of the decoupling and speculate its far-reaching adverse impacts beyond bilateral territories. We call for actions to be taken for a brighter future by the leaderships in both countries.},
	language = {en},
	number = {5},
	urldate = {2022-01-17},
	journal = {Science and Public Policy},
	author = {Tang, Li and Cao, Cong and Wang, Zheng and Zhou, Zhuo},
	month = oct,
	year = {2021},
	pages = {630--634},
}

@article{walter_backlash_2021,
	title = {The {Backlash} {Against} {Globalization}},
	volume = {24},
	issn = {1094-2939, 1545-1577},
	url = {https://www.annualreviews.org/doi/10.1146/annurev-polisci-041719-102405},
	doi = {10.1146/annurev-polisci-041719-102405},
	abstract = {In recent years, the world has seen a rising backlash against globalization. This article reviews the nature, causes, and consequences of the globalization backlash. It shows that, contrary to a popular narrative, the backlash is not associated with a large swing in public opinion against globalization but is rather a result of its politicization. The increasing influence of globalization-skeptic actors has resulted in more protectionist, isolationist, and nationalist policies, some of which fundamentally threaten pillars of the contemporary international order. Both material and nonmaterial causes drive the globalization backlash, and these causes interact and mediate each other. The consequences are shaped by the responses of societal actors, national governments, and international policy makers. These responses can either yield to and reinforce the global backlash or push back against it. Understanding these dynamics will be an important task for future research.},
	language = {en},
	number = {1},
	urldate = {2022-01-16},
	journal = {Annual Review of Political Science},
	author = {Walter, Stefanie},
	month = may,
	year = {2021},
	pages = {421--442},
}

@article{zhao_is_2019,
	title = {Is a {New} {Cold} {War} {Inevitable}? {Chinese} {Perspectives} on {US}–{China} {Strategic} {Competition}},
	volume = {12},
	issn = {1750-8916, 1750-8924},
	shorttitle = {Is a {New} {Cold} {War} {Inevitable}?},
	url = {https://academic.oup.com/cjip/article/12/3/371/5544745},
	doi = {10.1093/cjip/poz010},
	abstract = {Abstract
            In recent years, Chinese scholars and policy elites have discussed the ever intensifying strategic competition between the United States and China and its multifaceted implications for Chinese foreign policy. Some even worry about the possibility of a new Cold War between the United States and China. This article aims to offer an analysis of Chinese perspectives on US–China strategic competition. In the view of most Chinese observers, US–China strategic competition is inevitable because China is closing the national power gap between itself and the United States, while the latter resolutely upholds its global primacy. Other factors, including ideological disagreements, may fuel the major power competition that has extended to most aspects of US–China relations. Chinese observers believe that economic and technological rivalry between the United States and China has heightened and that the Western Pacific is the focal point of US–China strategic competition. Meanwhile, certain Chinese scholars attach greater importance to US–China competition over international prestige and leadership. However, Chinese analysts are not overly pessimistic about the prospects for US–China relations and have raised policy recommendations geared to managing US–China strategic competition and restoring a new equilibrium between the two major powers.},
	language = {en},
	number = {3},
	urldate = {2022-01-16},
	journal = {The Chinese Journal of International Politics},
	author = {Zhao, Minghao},
	month = sep,
	year = {2019},
	pages = {371--394},
}

@article{melitz_impact_2003,
	title = {The {Impact} of {Trade} on {Intra}-{Industry} {Reallocations} and {Aggregate} {Industry} {Productivity}},
	volume = {71},
	issn = {0012-9682, 1468-0262},
	url = {http://doi.wiley.com/10.1111/1468-0262.00467},
	doi = {10.1111/1468-0262.00467},
	language = {en},
	number = {6},
	urldate = {2021-09-08},
	journal = {Econometrica},
	author = {Melitz, Marc J.},
	month = nov,
	year = {2003},
	pages = {1695--1725},
}

@article{armington_theory_1969,
	title = {A {Theory} of {Demand} for {Products} {Distinguished} by {Place} of {Production} ({Une} theorie de la demande de produits differencies d'apres leur origine) ({Una} teoria de la demanda de productos distinguiendolos segun el lugar de produccion)},
	volume = {16},
	issn = {00208027},
	url = {http://www.palgrave-journals.com/doifinder/10.2307/3866403},
	doi = {10.2307/3866403},
	number = {1},
	urldate = {2021-09-08},
	journal = {Staff Papers - International Monetary Fund},
	author = {Armington, Paul S.},
	month = mar,
	year = {1969},
	pages = {159},
}

@article{bernard_plants_2003,
	title = {Plants and {Productivity} in {International} {Trade}},
	volume = {93},
	issn = {0002-8282},
	url = {https://pubs.aeaweb.org/doi/10.1257/000282803769206296},
	doi = {10.1257/000282803769206296},
	abstract = {We reconcile trade theory with plant-level export behavior, extending the Ricardian model to accommodate many countries, geographic barriers, and imperfect competition. Our model captures qualitatively basic facts about U.S. plants: (i) productivity dispersion, (ii) higher productivity among exporters, (iii) the small fraction who export, (iv) the small fraction earned from exports among exporting plants, and (v) the size advantage of exporters. Fitting the model to bilateral trade among the United States and 46 major trade partners, we examine the impact of globalization and dollar appreciation on productivity, plant entry and exit, and labor turnover in U.S. manufacturing.},
	language = {en},
	number = {4},
	urldate = {2021-09-05},
	journal = {American Economic Review},
	author = {Bernard, Andrew B and Eaton, Jonathan and Jensen, J. Bradford and Kortum, Samuel},
	month = aug,
	year = {2003},
	pages = {1268--1290},
}

@article{buera_idea_2018,
	title = {Idea {Flows} and {Economic} {Growth}},
	volume = {10},
	url = {https://doi.org/10.1146/annurev-economics-063016-104414},
	doi = {10.1146/annurev-economics-063016-104414},
	abstract = {We review new theories of learning that posit specific, distinct roles for the learner or innovator and their intellectual environment. We consider applications to the dynamics of individual earnings over the life cycle, the diffusion of knowledge within an organization and firm dynamics, and the role of trade in the diffusion of ideas.},
	number = {1},
	urldate = {2021-04-29},
	journal = {Annual Review of Economics},
	author = {Buera, Francisco J. and Lucas, Robert E.},
	year = {2018},
	note = {\_eprint: https://doi.org/10.1146/annurev-economics-063016-104414},
	pages = {315--345},
}

@techreport{kaska_huawei_2019,
	address = {Tallinn},
	title = {Huawei, {5G} and {China} as a {Security} {Threat}},
	url = {https://www.ccdcoe.org/uploads/2019/03/CCDCOE-Huawei-2019-03-28-FINAL.pdf},
	language = {en},
	institution = {NATO Cooperative Cyber Defence Centre of Excellence (CCDCOE)},
	author = {Kaska, Kadri and Beckvard, Henrik and Minárik, Tomáš},
	year = {2019},
}

@article{buera_global_2020,
	title = {The {Global} {Diffusion} of {Ideas}},
	volume = {88},
	issn = {1468-0262},
	url = {https://onlinelibrary.wiley.com/doi/abs/10.3982/ECTA14044},
	doi = {https://doi.org/10.3982/ECTA14044},
	abstract = {We provide a tractable, quantitatively-oriented theory of innovation and technology diffusion to explore the role of international trade in the process of development. We model innovation and diffusion as a process involving the combination of new ideas with insights from other industries or countries. We provide conditions under which each country's equilibrium frontier of knowledge converges to a Fréchet distribution, and derive a system of differential equations describing the evolution of the scale parameters of these distributions, that is, countries' stocks of knowledge. The model remains tractable with many asymmetric countries and generates a rich set of predictions about how the level and composition of trade affect countries' frontiers of knowledge. We use the framework to quantify the contribution of bilateral trade costs to long-run changes in TFP and individual post-war growth miracles. For our preferred calibration, we find that both gains from trade and the fraction of variation of TFP growth accounted for by changes in trade more than double relative to a model without diffusion.},
	language = {en},
	number = {1},
	urldate = {2021-04-08},
	journal = {Econometrica},
	author = {Buera, Francisco J. and Oberfield, Ezra},
	year = {2020},
	keywords = {Economic growth, diffusion, knowledge, trade},
	pages = {83--114},
}

@article{nye_jr_power_2020,
	title = {Power and {Interdependence} with {China}},
	volume = {43},
	issn = {0163-660X},
	url = {https://doi.org/10.1080/0163660X.2020.1734303},
	doi = {10.1080/0163660X.2020.1734303},
	number = {1},
	urldate = {2021-04-08},
	journal = {The Washington Quarterly},
	author = {Nye Jr., Joseph},
	month = jan,
	year = {2020},
	pages = {7--21},
}

@article{colantone_trade_2018,
	title = {The {Trade} {Origins} of {Economic} {Nationalism}: {Import} {Competition} and {Voting} {Behavior} in {Western} {Europe}},
	volume = {62},
	issn = {1540-5907},
	shorttitle = {The {Trade} {Origins} of {Economic} {Nationalism}},
	url = {https://onlinelibrary.wiley.com/doi/abs/10.1111/ajps.12358},
	doi = {https://doi.org/10.1111/ajps.12358},
	abstract = {We investigate the impact of globalization on electoral outcomes in 15 Western European countries over 1988–2007. We employ both official election results at the district level and individual-level voting data, combined with party ideology scores from the Comparative Manifesto Project. We compute a region-specific measure of exposure to Chinese imports, based on the historical industry specialization of each region. To identify the causal impact of the import shock, we instrument imports to Europe using Chinese imports to the United States. At the district level, a stronger import shock leads to (1) an increase in support for nationalist and isolationist parties, (2) an increase in support for radical-right parties, and (3) a general shift to the right in the electorate. These results are confirmed by the analysis of individual-level vote choices. In addition, we find evidence that voters respond to the shock in a sociotropic way.},
	language = {en},
	number = {4},
	urldate = {2021-04-08},
	journal = {American Journal of Political Science},
	author = {Colantone, Italo and Stanig, Piero},
	year = {2018},
	pages = {936--953},
}

@article{eaton_technology_2002,
	title = {Technology, {Geography}, and {Trade}},
	volume = {70},
	issn = {1468-0262},
	url = {https://onlinelibrary.wiley.com/doi/abs/10.1111/1468-0262.00352},
	doi = {https://doi.org/10.1111/1468-0262.00352},
	abstract = {We develop a Ricardian trade model that incorporates realistic geographic features into general equilibrium. It delivers simple structural equations for bilateral trade with parameters relating to absolute advantage, to comparative advantage (promoting trade), and to geographic barriers (resisting it). We estimate the parameters with data on bilateral trade in manufactures, prices, and geography from 19 OECD countries in 1990. We use the model to explore various issues such as the gains from trade, the role of trade in spreading the benefits of new technology, and the effects of tariff reduction.},
	language = {en},
	number = {5},
	urldate = {2021-01-19},
	journal = {Econometrica},
	author = {Eaton, Jonathan and Kortum, Samuel},
	year = {2002},
	keywords = {bilateral, geography, gravity, integration, research, technology, trade},
	pages = {1741--1779},
}

@article{hage_choice_2011,
	title = {Choice or {Circumstance}? {Adjusting} {Measures} of {Foreign} {Policy} {Similarity} for {Chance} {Agreement}},
	volume = {19},
	issn = {1047-1987, 1476-4989},
	shorttitle = {Choice or {Circumstance}?},
	url = {https://www.cambridge.org/core/journals/political-analysis/article/abs/choice-or-circumstance-adjusting-measures-of-foreign-policy-similarity-for-chance-agreement/A2AA44B24EE65FC7401B7C44847B074D},
	doi = {10.1093/pan/mpr023},
	abstract = {The similarity of states' foreign policy positions is a standard variable in the dyadic analysis of international relations. Recent studies routinely rely on Signorino and Ritter's (1999, Tau-b or not tau-b: Measuring the similarity of foreign policy positions. International Studies Quarterly 43:115–44) S to assess the similarity of foreign policy ties. However, S neglects two fundamental characteristics of the international state system: foreign policy ties are relatively rare and individual states differ in their innate propensity to form such ties. I propose two chance-corrected agreement indices, Scott's (1955, Reliability of content analysis: The case of nominal scale coding. The Public Opinion Quarterly 19:321–5) π and Cohen's (1960, A coefficient of agreement for nominal scales. Educational and Psychological Measurement 20:37–46) κ, as viable alternatives. Both indices adjust the dyadic similarity score for a large number of common absent ties. Cohen's κ also takes into account differences in individual dyad members' total number of ties. The resulting similarity scores have stronger face validity than S. A comparison of their empirical distributions and a replication of Gartzke's (2007, The capitalist peace. American Journal of Political Science 51:166–91) study of the ‘Capitalist Peace’ indicate that the different types of measures are not substitutable.},
	language = {en},
	number = {3},
	urldate = {2021-04-09},
	journal = {Political Analysis},
	author = {Häge, Frank M.},
	year = {2011},
	note = {Publisher: Cambridge University Press},
	pages = {287--305},
}

@book{keohane_power_2011,
	address = {Boston},
	edition = {4th edition},
	title = {Power \& {Interdependence}},
	isbn = {978-0-205-08291-9},
	abstract = {A landmark work of international relations theory, Power and Interdependence first published in 1977 and posited a radically comprehensive explanation of the mechanics driving world affairs–“power politics” on one hand and “complex interdependence” on the other hand.   This widely influential book reexamined the military and economic interests of state and non-state actors, and in an argument made before the end of the Cold War, the authors broadened the prevailing realist worldview of the time and anticipated many of the developments in our modern era of globalization. With a new preface by the authors and a foreword by Fareed Zakaria that looks at world affairs after the Cold War, the terrorist attacks of September 11, and the global financial crisis, Power and Interdependence is required reading for all students of international relations.},
	language = {English},
	publisher = {Pearson},
	author = {Keohane, Robert and Nye Jr, Joseph},
	month = feb,
	year = {2011},
}

@book{waltz_theory_2010,
	address = {Long Grove, Ill},
	edition = {1st edition},
	title = {Theory of {International} {Politics}},
	isbn = {978-1-57766-670-7},
	abstract = {The seminal text on neorealist analysis! From Theory of International Politics: National politics is the realm of authority, of administration, and of law. International politics is the realm of power, of struggle, and of accommodation. . . . States, like people, are insecure in proportion to the extent of their freedom. If freedom is wanted, insecurity must be accepted. Organizations that establish relations of authority and control may increase security as they decrease freedom. If might does not make right, whether among people or states, then some institution or agency has intervened to lift them out of nature s realm. The more influential the agency, the stronger the desire to control it becomes. In contrast, units in an anarchic order act for their own sakes and not for the sake of preserving an organization and furthering their fortunes within it. Force is used for one s own interest. In the absence of organization, people or states are free to leave one another alone. Even when they do not do so, they are better able, in the absence of the politics of the organization, to concentrate on the politics of the problem and to aim for a minimum agreement that will permit their separate existence rather than a maximum agreement for the sake of maintaining unity. If might decides, then bloody struggles over right can more easily be avoided. TABLE OF CONTENTS: 1. Laws and Theories 2. Reductionist Theories 3. Systemic Approaches and Theories 4. Reductionist and Systemic Theories 5. Political Structures 6. Anarchic Orders and Balances of Power 7. Structural Causes and Economic Effects 8. Structural Causes and Military Effects 9. The Management of International Affairs},
	language = {English},
	publisher = {Waveland Press},
	author = {Waltz, Kenneth N.},
	month = feb,
	year = {2010},
}

@article{wei_towards_2019,
	title = {Towards {Economic} {Decoupling}? {Mapping} {Chinese} {Discourse} on the {China}–{US} {Trade} {War}},
	volume = {12},
	issn = {1750-8916},
	shorttitle = {Towards {Economic} {Decoupling}?},
	url = {https://doi.org/10.1093/cjip/poz017},
	doi = {10.1093/cjip/poz017},
	abstract = {Tremendous changes have occurred in China–US relations since Donald J. Trump took office in the White House, but the most significant event is the outbreak of the China–US trade war, whose unprecedented scale and impact has drawn wide attention in China and sparked considerable debate. This article aims to review the four main stages of this great debate on the China–US trade war among China’s influential intellectuals and prominent thinkers. It covers the three-year period since beginning of the Trump administration. The initial argument among Chinese academics was whether or not there could ever be a trade war, which led to a dividing line between optimists and pessimists. Soon after, when China-US relations took a sharp downturn and a trade war seemed inevitable, Chinese scholars shifted their focus to reasons why the United States should wage a trade war against China. This culminated in three different perspectives embodied respectively in the structural conflict theory, institutional conflict theory, and the theory on exporting domestic problems. After China and the United States began slapping additional tariffs on one another’s exported goods, and with the flare-up of the Zhongxing Telecommunication Equipment Corporation (ZTE) and Huawei incidents, the debate on how to respond to a trade war cleaved between the ‘whole nation system school’ and the ‘market reform school’. As to coping with the technology war, Chinese intellectuals were divided between the paths of ‘independent innovation’ and ‘open innovation’. More recently, in view of the many signs that the China–US trade war could continue indefinitely, and an economic decoupling of the two nations is no longer unimaginable, a fierce debate has arisen between those that advocate full preparations for such economic decoupling and those who insist on further links with the US economy. Although the main themes of the four stages of the debate differ somewhat, they share a certain degree of consistency as regards the conceptual pedigree of the debating parties. Their divergence derives, in essence, from different perspectives on such issues as state power versus market force, independence versus interdependence, and zero-sum competition of power versus win-win economic cooperation. This also exemplifies the competing views in Chinese academia on classical political economy.},
	number = {4},
	urldate = {2021-04-08},
	journal = {The Chinese Journal of International Politics},
	author = {Wei, Li},
	month = dec,
	year = {2019},
	pages = {519--556},
}

@article{bown_trumps_2019,
	title = {Trump's {Assault} on the {Global} {Trading} {System}: {And} {Why} {Decoupling} from {China} {Will} {Change} {Everything} {Essays}},
	volume = {98},
	shorttitle = {Trump's {Assault} on the {Global} {Trading} {System}},
	url = {https://heinonline.org/HOL/P?h=hein.journals/fora98&i=975},
	language = {eng},
	number = {5},
	urldate = {2021-04-08},
	journal = {Foreign Affairs},
	author = {Bown, Chad P. and Irwin, Douglas A.},
	year = {2019},
	pages = {125--137},
}

@article{nicita_cooperation_2018,
	title = {Cooperation in {WTO}’s {Tariff} {Waters}?},
	volume = {126},
	issn = {0022-3808, 1537-534X},
	url = {https://www.journals.uchicago.edu/doi/10.1086/697085},
	doi = {10.1086/697085},
	language = {en},
	number = {3},
	urldate = {2021-04-08},
	journal = {Journal of Political Economy},
	author = {Nicita, Alessandro and Olarreaga, Marcelo and Silva, Peri},
	month = jun,
	year = {2018},
	pages = {1302--1338},
}

@article{krugman_scale_1980,
	title = {Scale {Economies}, {Product} {Differentiation}, and the {Pattern} of {Trade}},
	volume = {70},
	issn = {0002-8282},
	url = {https://www.jstor.org/stable/1805774},
	number = {5},
	urldate = {2021-04-08},
	journal = {The American Economic Review},
	author = {Krugman, Paul},
	year = {1980},
	note = {Publisher: American Economic Association},
	pages = {950--959},
}

@techreport{alvarez_idea_2013,
	title = {Idea {Flows}, {Economic} {Growth}, and {Trade}},
	url = {https://www.nber.org/papers/w19667},
	abstract = {Founded in 1920, the NBER is a private, non-profit, non-partisan organization dedicated to conducting economic research and to disseminating research findings among academics, public policy makers, and business professionals.},
	language = {en},
	number = {w19667},
	urldate = {2021-04-08},
	institution = {National Bureau of Economic Research},
	author = {Alvarez, Fernando E. and Buera, Francisco J. and Lucas, Robert E. and Jr},
	month = nov,
	year = {2013},
	doi = {10.3386/w19667},
}

@article{autor_china_2013,
	title = {The {China} {Syndrome}: {Local} {Labor} {Market} {Effects} of {Import} {Competition} in the {United} {States}},
	volume = {103},
	issn = {0002-8282},
	shorttitle = {The {China} {Syndrome}},
	url = {https://www.aeaweb.org/articles?id=10.1257/aer.103.6.2121},
	doi = {10.1257/aer.103.6.2121},
	abstract = {We analyze the effect of rising Chinese import competition between
1990 and 2007 on US local labor markets, exploiting cross-
market
variation in import exposure stemming from initial differences in
industry specialization and instrumenting for US imports using
changes in Chinese imports by other high-income countries. Rising
imports cause higher unemployment, lower labor force participation,
and reduced wages in local labor markets that house import-competing
manufacturing industries. In our main specification,
import competition explains one-quarter of the contemporaneous
aggregate decline in US manufacturing employment. Transfer
benefits payments for unemployment, disability, retirement, and
healthcare also rise sharply in more trade-exposed labor markets.},
	language = {en},
	number = {6},
	urldate = {2021-04-08},
	journal = {American Economic Review},
	author = {Autor, David H. and Dorn, David and Hanson, Gordon H.},
	month = oct,
	year = {2013},
	keywords = {Aggregate Human Capital, Empirical Studies of Trade, Trade and Labor Market Interactions, Industry Studies: Manufacturing: General, Measurement of Economic Growth, Aggregate Productivity, Cross-Country Output Convergence, Size and Spatial Distributions of Regional Economic Activity, Urban, Rural, Regional, Real Estate, and Transportation Economics: Regional Migration, Employment, Intergenerational Income Distribution, Neighborhood Characteristics, Population, Regional Labor Markets, Unemployment, Wages},
	pages = {2121--2168},
}

@article{grossman_product_1989,
	title = {Product {Development} and {International} {Trade}},
	volume = {97},
	issn = {0022-3808},
	url = {https://www.jstor.org/stable/1833238},
	abstract = {We develop a multicountry, dynamic general equilibrium model of product innovation and international trade to study the creation of comparative advantage through research and development and the evolution of world trade over time. In our model, firms must incur resource costs to introduce new products, and forward-looking potential producers conduct R \& D and enter the product market whenever profit opportunities exist. Trade has both intraindustry and interindustry components, and the different incentives that face agents in different countries for investment and savings decisions give rise to intertemporal trade. We derive results on the dynamics of trade patterns and trade volume and on the temporal emergence of multinational corporations.},
	number = {6},
	urldate = {2021-03-09},
	journal = {Journal of Political Economy},
	author = {Grossman, Gene M. and Helpman, Elhanan},
	year = {1989},
	note = {Publisher: University of Chicago Press},
	pages = {1261--1283},
}

@article{arkolakis_new_2012,
	title = {New {Trade} {Models}, {Same} {Old} {Gains}?},
	volume = {102},
	issn = {0002-8282},
	url = {https://www.aeaweb.org/articles?id=10.1257/aer.102.1.94},
	doi = {10.1257/aer.102.1.94},
	abstract = {Micro-level data have had a profound influence on research in international trade over the last ten years. In many regards, this research agenda has been very successful. New stylized facts have been uncovered and new trade models have been developed to explain these facts. In this paper we investigate to what extent answers to new micro-level questions have affected answers to an old and central question in the field: how large are the welfare gains from trade? A crude summary of our results is: "So far, not much." (JEL F11, F12)},
	language = {en},
	number = {1},
	urldate = {2021-01-23},
	journal = {American Economic Review},
	author = {Arkolakis, Costas and Costinot, Arnaud and Rodríguez-Clare, Andrés},
	month = feb,
	year = {2012},
	keywords = {Models of Trade with Imperfect Competition and Scale Economies, Neoclassical Models of Trade},
	pages = {94--130},
}

@article{kortum_research_1997,
	title = {Research, {Patenting}, and {Technological} {Change}},
	volume = {65},
	issn = {0012-9682},
	url = {https://www.jstor.org/stable/2171741},
	doi = {10.2307/2171741},
	abstract = {This paper develops a search-theoretic model of technological change that accounts for some puzzling trends in industrial research, patenting, and productivity growth. In the model, researchers sample from probability distributions of potential new production techniques. Past research generates a technological frontier representing the best techniques for producing each good in the economy. Technological breakthroughs, resulting in patents, become increasingly hard to find as the technological frontier advances. This explains why patenting has been roughly constant as research employment has risen sharply over the last forty years. Productivity is determined by the position of the technological frontier and hence by the stock of past research. If researchers sample from Pareto distributions, then productivity growth is proportional to the growth of the research stock. The Pareto specification accounts for why productivity growth has neither risen as research employment has grown nor fallen as patenting has failed to grow. The growth of research employment itself is driven, in equilibrium, by population growth. Calibrating the model's four parameters, the implied social return to research is over twenty percent.},
	number = {6},
	urldate = {2021-01-22},
	journal = {Econometrica},
	author = {Kortum, Samuel S.},
	year = {1997},
	note = {Publisher: [Wiley, Econometric Society]},
	pages = {1389--1419},
}

@techreport{santacreu_knowledge_2017,
	title = {Knowledge {Diffusion}, {Trade} and {Innovation} across {Countries} and {Sectors}},
	url = {https://research.stlouisfed.org/wp/more/2017-029},
	abstract = {We provide a uniﬁed framework for quantifying the cross-country and cross-sector interactions among trade, innovation, and knowledge diﬀusion. We study the eﬀect of trade liberalization in an endogenous growth model in which comparative advantage and the stock of knowledge are determined by innovation and diﬀusion. We calibrate the model to match observed cross-country and cross-sector heterogeneity in production, innovation eﬃciency and knowledge spillovers. Our counterfactual analysis shows that a reduction in trade costs induces a re-allocation of R\&D and comparative advantage across sectors. Heterogeneous knowledge diﬀusion ampliﬁes the specialization eﬀects of trade-induced R\&D re-allocation, becoming an important source of growth and welfare.},
	language = {en},
	urldate = {2021-01-22},
	author = {Santacreu, Ana Maria and Li, Nan and Cai, Jie (April)},
	year = {2017},
	doi = {10.20955/wp.2017.029},
}

@article{jovanovic_growth_1989,
	title = {The {Growth} and {Diffusion} of {Knowledge}},
	volume = {56},
	issn = {0034-6527},
	url = {https://www.jstor.org/stable/2297501},
	doi = {10.2307/2297501},
	abstract = {This paper analyzes a decentralized process for the diffusion of knowledge. In equilibrium, the economy converges from an initial distribution of knowledge over agents to the steady-state distribution, which is unique. Because of the public good aspect of information, too little learning takes place, and ideas are implemented too early. The key difference between earlier formulations of search externalities by Diamond, Mortensen, and Spence on the one hand, and our own on the other, is that here spillovers of knowledge depend not only on how hard people are trying, but also on the differences in what they known: if all of us know the same thing, we cannot learn from each other. The model also addresses the following two substantive questions: first, the relationship between inequality and growth, noted some time ago by Kuznets, and second, the effect on growth of improvements in the communication technology.},
	number = {4},
	urldate = {2021-01-22},
	journal = {The Review of Economic Studies},
	author = {Jovanovic, Boyan and Rob, Rafael},
	year = {1989},
	note = {Publisher: [Oxford University Press, Review of Economic Studies, Ltd.]},
	pages = {569--582},
}

@article{ossa_why_2015,
	title = {Why trade matters after all},
	volume = {97},
	issn = {0022-1996},
	url = {http://www.sciencedirect.com/science/article/pii/S0022199615001178},
	doi = {10.1016/j.jinteco.2015.07.002},
	abstract = {I show that accounting for cross-industry variation in trade elasticities greatly magnifies the estimated gains from trade. The main idea is as simple as it is general: while imports in the average industry do not matter too much, imports in some industries are critical to the functioning of the economy, so that a complete shutdown of international trade is very costly overall.},
	language = {en},
	number = {2},
	urldate = {2021-01-21},
	journal = {Journal of International Economics},
	author = {Ossa, Ralph},
	month = nov,
	year = {2015},
	keywords = {Gains from trade, Multi-industry model, Trade elasticity},
	pages = {266--277},
}

@article{romer_endogenous_1990,
	title = {Endogenous {Technological} {Change}},
	volume = {98},
	url = {http://www.jstor.org/stable/2937632},
	language = {en},
	number = {5,},
	journal = {The Journal of Political Economy},
	author = {Romer, Paul M.},
	year = {1990},
	pages = {S71-- S102},
}

@article{eaton_international_1999,
	title = {International {Technology} {Diffusion}: {Theory} and {Measurement}},
	volume = {40},
	issn = {0020-6598},
	shorttitle = {International {Technology} {Diffusion}},
	url = {https://www.jstor.org/stable/2648766},
	abstract = {We model the invention of new technologies and their diffusion across countries. In our model all countries grow at the same steady-state rate, with each country's productivity ranking determined by how rapidly it adopts ideas. Research effort is determined by how much ideas earn at home and abroad. Patents affect the return to ideas. We relate the decision to patent an invention internationally to the cost of patenting in a country and to the expected value of patent protection in that country. We can thus infer the direction and magnitude of the international diffusion of technology from data on international patenting, productivity, and research. We fit the model to data from the five leading research economies. A rough summary of our findings is that the world lies about two-thirds of the way from an extreme of technological autarky to an extreme of free trade in ideas. Research performed abroad is about two-thirds as potent as domestic research. Together the United States and Japan drive at least two-thirds of the growth in each of the countries in our sample.},
	number = {3},
	urldate = {2021-01-19},
	journal = {International Economic Review},
	author = {Eaton, Jonathan and Kortum, Samuel},
	year = {1999},
	note = {Publisher: [Economics Department of the University of Pennsylvania, Wiley, Institute of Social and Economic Research, Osaka University]},
	pages = {537--570},
}

\newpage

ONLINE APPENDIX

\appendix 

\setcounter{equation}{0}
\renewcommand{\theequation}{A-\arabic{equation}}

\newpage

\section{Additional tables calibration exercise}
\setcounter{table}{0}
\renewcommand{\thetable}{A\arabic{table}}

\begin{table}[htp!]
\caption{Growth Rate of Real GDP and Real GDP per Capita, respectively, Historically and in Simulations, using different values of $\beta$}%
\centering
\begin{tabular}
[c]{lcccc}\hline
$\beta$ & Mean & St.Dev. & $\max$ & $\min$\\\hline
GDP &  &  &  & \\
Historical & 3.60 & 2.66 & 8.90 & 0.67\\\hline
0.40 & 3.09 & 2.02 & 6.26 & 0.41\\
0.41 & 3.21 & 2.13 & 6.58 & 0.43\\
0.42 & 3.34 & 2.26 & 6.94 & 0.44\\
0.43 & 3.48 & 2.40 & 7.34 & 0.46\\
0.44 & 3.64 & 2.56 & 7.78 & 0.48\\
0.45 & 3.82 & 2.73 & 8.26 & 0.50\\
0.46 & 4.02 & 2.92 & 8.79 & 0.53\\
0.47 & 4.24 & 3.13 & 9.36 & 0.56\\
0.48 & 4.48 & 3.36 & 9.97 & 0.59\\
0.49 & 4.75 & 3.60 & 10.62 & 0.63\\
0.50 & 5.04 & 3.87 & 11.32 & 0.68\\\hline
& & & & \\\hline
GDP per capita &  &  &  & \\
Historical & 2.70 & 2.51 & 8.36 & 0.75\\\hline
0.40 & 2.20 & 1.65 & 4.91 & 0.47\\
0.41 & 2.32 & 1.76 & 5.23 & 0.48\\
0.42 & 2.44 & 1.89 & 5.59 & 0.50\\
0.43 & 2.59 & 2.03 & 5.98 & 0.51\\
0.44 & 2.75 & 2.19 & 6.41 & 0.53\\
0.45 & 2.92 & 2.36 & 6.89 & 0.54\\
0.46 & 3.12 & 2.55 & 7.41 & 0.57\\
0.47 & 3.34 & 2.76 & 7.97 & 0.59\\
0.48 & 3.58 & 2.98 & 8.57 & 0.62\\
0.49 & 3.84 & 3.22 & 9.22 & 0.65\\
0.50 & 4.13 & 3.48 & 9.91 & 0.69\\\hline
\label{tab: beta_version}
\end{tabular}
\end{table}

\newpage

\begin{table}[htp!]
\caption{The squared difference between the sum of the historical and
simulated mean and standard deviation of GDP, GDP per capita and their sum}%
\label{beta_mm}%
\centering
\begin{tabular}
[c]{rrrr}\hline%
\multicolumn{1}{l}{$\beta$} & \multicolumn{1}{l}{GDP} & \multicolumn{1}{l}{GDP pc} & \multicolumn{1}{l}{Sum}\\\hline
0.40 & 0.67 & 1.00 & 1.67\\
0.41 & 0.43 & 0.72 & 1.15\\
0.42 & 0.23 & 0.46 & 0.69\\
0.43 & 0.08 & 0.25 & 0.33\\
0.44 & 0.01 & 0.11 & 0.12\\
0.45 & 0.06 & 0.07 & 0.13\\
0.46 & 0.25 & 0.17 & 0.42\\
0.47 & 0.64 & 0.46 & 1.10\\
0.48 & 1.28 & 0.98 & 2.26\\
0.49 & 2.22 & 1.79 & 4.01\\
0.50 & 3.54 & 2.96 & 6.50\\\hline
\end{tabular}
\end{table}

\newpage

\begin{table}[htp!]
\caption{Growth Rate of Real GDP and Real GDP per Capita, respectively, Historically and in Simulations, using different values of $\beta$ between $0$ and $0.6$}%
\label{beta_0_06}
\centering
\begin{tabular}
[c]{lcccc}\hline
$\beta$ & Mean & St.Dev. & $\max$ & $\min$\\\hline
GDP &  &  &  & \\
Historical & 3.60 & 2.66 & 8.90 & 0.67\\\hline
0 & 2.17 & 1.10 & 3.79 & 0.32\\
0.5 & 2.19 & 1.12 & 3.84 & 0.32\\
0.10 & 2.20 & 1.13 & 3.87 & 0.32\\
0.15 & 2.23 & 1.16 & 3.93 & 0.32\\
0.20 & 2.27 & 1.20 & 4.02 & 0.33\\
0.25 & 2.34 & 1.26 & 4.19 & 0.33\\
0.30 & 2.46 & 1.39 & 4.49 & 0.35\\
0.35 & 2.68 & 1.61 & 5.08 & 0.37\\
0.40 & 3.09 & 2.02 & 6.26 & 0.41\\
0.45 & 3.82 & 2.73 & 8.26 & 0.50\\
0.50 & 5.04 & 3.87 & 11.32 & 0.68\\
0.55 & 6.89 & 5.42 & 15.39 & 1.01\\
0.60 & 9.50 & 7.32 & 20.53 & 1.66\\\hline
& & & & \\\hline
GDP per capita &  &  &  & \\
Historical & 2.70 & 2.51 & 8.36 & 0.75\\\hline
0 & 1.29 & 0.72 & 2.27 & 0.40\\
0.5 & 1.31 & 0.75 & 2.33 & 0.40\\
0.10 & 1.32 & 0.76 & 2.36 & 0.40\\
0.15 & 1.35 & 0.78 & 2.41 & 0.40\\
0.20 & 1.39 & 0.82 & 2.53 & 0.40\\
0.25 & 1.46 & 0.89 & 2.74 & 0.41\\
0.30 & 1.58 & 1.01 & 3.10 & 0.42\\
0.35 & 1.80 & 1.24 & 3.75 & 0.44\\
0.40 & 2.20 & 1.65 & 4.91 & 0.47\\
0.45 & 2.92 & 2.36 & 6.89 & 0.54\\
0.50 & 4.13 & 3.48 & 9.91 & 0.69\\
0.55 & 5.96 & 5.02 & 13.92 & 0.98\\
0.60 & 8.54 & 6.88 & 18.97 & 1.55\\\hline
\end{tabular}
\end{table}

\newpage

\begin{table}[htp!]
\caption{Growth Rate of Real GDP and Real GDP per Capita, respectively, Historically and in Simulations, using different values of $\beta$ between $0$ and $0.6$ with an Autonomous Technology Growth Rate of $\alpha=2.36$}%
\label{beta_0_06_alpha}
\centering
\begin{tabular}
[c]{lcccc}\hline
$\beta$ & Mean & St.Dev. & $\max$ & $\min$\\\hline
GDP &  &  &  & \\
Historical & 3.60 & 2.66 & 8.90 & 0.67\\\hline
0 & 2.17 & 1.10 & 3.79 & 0.32\\
0.5 & 2.19 & 1.12 & 3.84 & 0.32\\
0.10 & 2.21 & 1.14 & 3.88 & 0.32\\
0.15 & 2.23 & 1.16 & 3.94 & 0.32\\
0.20 & 2.28 & 1.21 & 4.04 & 0.33\\
0.25 & 2.35 & 1.28 & 4.23 & 0.33\\
0.30 & 2.49 & 1.41 & 4.56 & 0.35\\
0.35 & 2.73 & 1.65 & 5.22 & 0.37\\
0.40 & 3.17 & 2.09 & 6.48 & 0.42\\
0.45 & 3.95 & 2.85 & 8.60 & 0.52\\
0.50 & 5.23 & 4.03 & 11.76 & 0.71\\
0.55 & 7.16 & 5.62 & 15.92 & 1.08\\
0.60 & 9.84 & 7.52 & 21.17 & 1.77\\\hline
& & & & \\\hline
GDP per capita &  &  &  & \\
Historical & 2.70 & 2.51 & 8.36 & 0.75\\\hline
0 & 1.29 & 0.72 & 2.27 & 0.40\\
0.5 & 1.31 & 0.75 & 2.33 & 0.40\\
0.10 & 1.33 & 0.77 & 2.36 & 0.40\\
0.15 & 1.35 & 0.79 & 2.43 & 0.40\\
0.20 & 1.40 & 0.83 & 2.56 & 0.40\\
0.25 & 1.47 & 0.91 & 2.79 & 0.41\\
0.30 & 1.61 & 1.04 & 3.18 & 0.42\\
0.35 & 1.85 & 1.28 & 3.89 & 0.44\\
0.40 & 2.28 & 1.72 & 5.13 & 0.48\\
0.45 & 3.05 & 2.48 & 7.22 & 0.56\\
0.50 & 4.32 & 3.64 & 10.35 & 0.72\\
0.55 & 6.22 & 5.21 & 14.45 & 1.04\\
0.60 & 8.87 & 7.08 & 19.56 & 1.66\\\hline
\end{tabular}
\end{table}

\newpage

\section{Construction of Initial Sectoral Productivity Parameters $\lambda_{s,0}^i$}
\label{appendix: lambda}

In our model, the location parameter of Fr\'echet distribution of a given industry-country $\lambda_{d,t}^i$ evolves endogenously according to a law of motion, as described by equation (\ref{eq: final_law_of_motion}). To calibrate the model, we need initial values $(\lambda_{d,0}^i)_{d \in \mathcal{D}, i \in \mathcal{I}}$. We proxy for the initial values using labor productivity in different sectors for each aggregate region and industry in our sample in the base-year 2014.

We do so by combining two different databases: the World Input Output Database's Social Economic Accounts (WIOD-SEA \textemdash \url{http://wiod.org/database/seas16}) and the World Bank's Global Productivity Database (WB-GPD \textemdash \url{https://www.worldbank.org/en/research/publication/global-productivity}). WIOD-SEA a reports value added in local currency and employed population for 42 countries and 56 industries. WB-GPD reports value added in local currency and employed population for 103 countries and 9 industries. For countries whose data are available in both databases, we use the data from WIOD-SEA, which is more granular. 

The first step is to create a cross-walk between WIOD-SEA industries and the more aggregate sectors in our model, namely: \texttt{elm}, Electronic Equipment; \texttt{hmn}, Heavy manufacturing; \texttt{lmn}, Light manufacturing; \texttt{ots}, Other Services; \texttt{pri}, Primary Sector; \texttt{tas}, Business services. We then convert value added in local currency to PPP-USD and market-rate USD using a panel of PPP and market exchange rates from the World Bank's World Development Indicators.

Afterwards, we did a similar cross-walk for the country-sector pairs in the WB-GPD database. The detailed cross-walk can be found at the end of this Appendix. However, WB-GPD only reports one aggregate manufacturing sector, while our model disaggregates manufacturing into three subsectors. In order to make them compatible, we take the following steps: (a) we classify countries as advanced and emerging markets in both the WIOD-SEA and the WB-GPD databases; (b) we calculate the average share of value added and employed workers in total manufacturing for each of the manufacturing subsectors (\texttt{elm}, \texttt{hmn}, and \texttt{lmn}) for emerging markets and advanced economies, respectively, in the WIOD-SEA database; and (c) we use those shares and reported value added and employed workers from the WB-GPD database in order to input, for each country, a disaggregation of total manufacturing into \texttt{elm}, \texttt{hmn}, and \texttt{lmn}. We then convert value added in local currency to PPP-USD and market-rate USD using a panel of PPP and market exchange rates from the World Bank's World Development Indicators.

Finally, we collapse PPP-USD value added, market-rate USD value added, and number of workers for the regions of our model (\texttt{chn}, China; \texttt{ind}, India; \texttt{rus}, Russia; \texttt{rwc}, Rest of Eastern bloc; \texttt{rwu}, Rest of Western bloc;  \texttt{lac}, Latin America; \texttt{e27}, European Union; \texttt{ode}, Other Developed; \texttt{usa}, United States), and calculate, for each region-industry pair, labor productivity as:

\begin{equation*}
    \lambda_{d,0}^i = \frac{{PPP\$VA}_{d,0}^i}{L_{d,0}^i}
\end{equation*}

using PPP-USD value added per worker.

\newpage

\begin{table}
\caption{Cross Walk Between WIOD-SEA and Model}
\vspace{3pt}
\begin{center}
\begin{tabular}{l|l}
\hline
\textbf{WIOD-SEA Sector}                   & \textbf{Model Sector}     \\
\hline
A01-03, B                                                              & pri          \\
C10-19, C31-32                                                         & lmn          \\
C20-25,    C28-30                                                      & hmn          \\
C26-27                                                                 & elm          \\
C33, D35, E36, F, G45-47, H50-53, I, L68,   N, O84, P85, Q, R, S, T, U & ots          \\
J58, J61, K64-66, M71-73                                               & tas         \\
\hline
\end{tabular}
\end{center}
\end{table}

\begin{table}
\caption{Cross Walk Between WB-GPD and Model}
\vspace{3pt}
\begin{center}
\begin{tabular}{l|l}
\hline
\textbf{WB-GPD Sector}                   & \textbf{Model Sector}     \\
\hline
1.Agriculture                   & pri               \\
2.Mining                        & pri               \\
3.Manufacturing                 & (see methodology) \\
4.Utilities                     & ots               \\
5.Construction                  & ots               \\
6.Trade services                & tas               \\
7.Transport services            & ots               \\
8.Finance amd business services & tas               \\
9.Other services                & ots              \\
\hline
\end{tabular}
\end{center}
\end{table}

\begin{itemize}
    \item \textbf{Countries in WIOD-SEA}: Austria, Belgium, Brazil, Bulgaria, Canada, China, Croatia, Cyprus, Czech Republic, Denmark, Estonia, Finland, France, Germany, Greece, Hungary, India, Indonesia, Ireland, Italy, Japan, Korea, Rep., Latvia, Lithuania, Luxembourg, Malta, Mexico, Netherlands, Norway, Poland, Portugal, Romania, Russian Federation, Slovak Republic, Slovenia, Spain, Sweden, Switzerland, Turkey, United Kingdom, United States.
    \item \textbf{Countries in WB-GPD}: Angola, Argentina, Australia, Austria, Azerbaijan, Belgium, Burkina Faso, Bangladesh, Bulgaria, Belize, Bolivia, Brazil, Botswana, Canada, Switzerland, Chile, China, Cameroon, Colombia, Costa Rica, Cyprus, Czech Republic, Germany, Denmark, Dominican Republic, Algeria, Ecuador, Egypt, Spain, Estonia, Ethiopia, Finland, Fiji, France, United Kingdom, Georgia, Ghana, Greece, Guatemala, China, Hong Kong SAR, Honduras, Croatia, Hungary, Indonesia, India, Ireland, Iran, Iceland, Italy, Jamaica, Jordan, Japan, Kenya, Republic of Korea, Lao People's Dem Rep, Saint Lucia, Sri Lanka, Lesotho, Lithuania, Luxembourg, Latvia, Morocco, Mexico, Montenegro, Mongolia, Mozambique, Mauritius, Malawi, Malaysia, Namibia, Nigeria, Netherlands, Norway, Nepal, New Zealand, Pakistan, Philippines, Poland, Portugal, Paraguay, Qatar, Romania, Russian Federation, Rwanda, Senegal, Singapore, Sierra Leone, Serbia, Slovakia, Slovenia, Sweden, Eswatini, Thailand, Turkey, Chinese Taipei, United Republic of Tanzania, Uganda, Uruguay, United States, St. Vincent and the Grenadines, Viet Nam, South Africa, Zambia.
\end{itemize}

\newpage

\section{Mathematical Derivation of Dynamic Innovation}

    \subsection{Evolution of the Productivity Frontier} In this section, we largely follow the steps of the mathematical appendix to \textcite{buera_global_2020} to the particularities of our model. In the proofs, for simplicity, we initially abstract away from sectoral specific elasticities, using $\theta$ rather than $\theta_i$, but at the end generalize the results to accommodate them. For any period, domestic technological frontier evolves according to:

    \begin{equation*}
        F_{d,t+\Delta}^{i}(z) = 
        \underbrace{F_{d,t}^{i}(z)}_{Pr\{\text{productivity}<z\text{ at } t\}} \times \underbrace{\Big( 1 - \int_t^{t+\Delta} \int \alpha_\tau z^{-\theta} (z')^{\beta\theta} dG_{d,\tau}^{i}(z') d\tau \Big)}_{Pr\{\text{no better draws in } (t,t+\Delta)\} }
    \end{equation*}
    
    Rearranging and using the definition of the derivative:
    
    \begin{equation*}
        \frac{d}{dt} \ln F_{s,t}^{i}(z) = \lim_{\Delta \to 0} \frac{F_{s,t+\Delta}^{i}(z) - F_{s,t}^{i}(z)}{F_{s,t}^{i}(z)} = -   \int \alpha_t z^{-\theta} (z')^{\beta\theta} dG_{d,t}^{i}(z')
    \end{equation*}
    
    Define $\lambda_{s,t}^{i} = \int_{-\infty}^t  \alpha_\tau  \int (z')^{\beta\theta} dG_{s,\tau}^{i}(z') d\tau$ and integrate both sides wrt to time:
    
    \begin{eqnarray*}
        \int_0^t \frac{d}{d\tau} \ln F_{s,\tau}^{i}(z) d\tau &=& - z^{-\theta}  \int_0^t \int \alpha_\tau (z')^{\beta\theta} dG_{d,\tau}^{i}(z') d\tau \\
        \ln \Big( \frac{F_{s,\tau}^{i}(z)}{F_{s,0}^{i}(z)} \Big) &=& - z^{-\theta}  (\lambda_{s,t}^{i} - \lambda_{s,0}^{i})     \\
        F_{s,t}^{i}(z) &=& F_{s,0}^{i}(z) \exp \{ - z^{-\theta}  (\lambda_{s,t}^{i} - \lambda_{s,0}^{i}) \}
    \end{eqnarray*}       
    
    Assuming that the initial distribution is Fr\'echet $F_{s,0}^{i}(z) = \exp \{ -\lambda_{s,0}^{i} z^{-\theta} \}$ guarantees that the distribution will be Fr\'echet at any point in time:
    
    \begin{equation}
        F_{s,t}^{i}(z) = \exp \{ -  \lambda_{s,t}^{i} z^{-\theta} \} 
    \end{equation}
    
    \subsection{Law of Motion of Productivity}
    
    As seen above, we have defined:
    
    \begin{equation*}
        \lambda_{s,t}^{i} = \int_{-\infty}^t  \alpha_\tau  \int (z')^{\beta\theta} dG_{s,\tau}^{i}(z') d\tau
    \end{equation*}
    
    Differentiating this definition with respect to time and applying Leibnitz's Lemma yields:
    
    \begin{equation*}
        \dot{\lambda}_{s,t}^{i} = \alpha_t  \int (z')^{\beta\theta} dG_{s,t}^{i}(z')
    \end{equation*}
    
    We use these results and work with a discrete approximation of the law of motion for productivity:
    
    \begin{equation}
        \Delta\lambda_{s,t}^{i} = \alpha_t  \int (z')^{\beta\theta} dG_{s,t}^{i}(z')
    \end{equation}
    
    The source distribution $G_{d,t}^{i}(z') \equiv \sum_{j \in \mathcal{I}} \eta_{d,t-1}^{i,j} \sum_{s \in \mathcal{D}}  H_{sd,t-1}^{i,j}(z')$, where $\eta_{d,t}^{i,j}$ is the expenditure share of sector $j$ in the cost of intermediates when producing good $i$ in region $d$; and  $H_{sd,t-1}^{i,j}(z')$ is the fraction of commodities for which the lowest cost supplier in period $t-1$ is a firm located in $s \in \mathcal{D}$ with productivity weakly less than $z'$. 

    We focus our attention on the integral $\int z^{\beta \theta} dH_{sd,t}^{i,j}(z)$. Let $F_{s,t}^{i}(z_2,z_2) = \exp \{ - \lambda_{s,t}^{i} z_2^{-\theta}  \}$ and $F_{s,t}^{i}(z_1,z_2) =(1+ \lambda_{s,t}^{i}[z_2^{-\theta}-z_1^{-\theta}] )  \exp \{ - \lambda_{s,t}^{i} z_2^{-\theta}  \}$ are, respectively, the probability that a productivity draw is below $z_2$, and that the maximum productivity is $z_1$ and the second highest productivity is $z_2$\footnote{To see the latter, note that:
    \begin{eqnarray*}
        Prob(z_1 \le Z_1, z_2 \le Z_2 ) &=& F_{s,t}^{i}(Z_2) + \int_0^{Z_2} \int_{Z_2}^{Z_1} f_{s,t}^{i}(y) f_{s,t}^{i}(y') dy' dy \\
        &=& F_{s,t}^{i}(Z_2) + F_{s,t}^{i}(Z_2) ( F_{s,t}^{i}(Z_1) - F_{s,t}^{i}(Z_2) ) \\
        &=& (1+ \lambda_{s,t}^{i}[Z_2^{-\theta}-Z_1^{-\theta}] )  \exp \{ - \lambda_{s,t}^{i} Z_2^{-\theta}  \}
        \end{eqnarray*}
        }. Let for each $n$, $ A^i_{n,t} \equiv  \nicefrac{\tilde{x}_{nd,t}^{i}}{\tilde{x}_{sd,t}^{i}}$, such that $s$ will have a lower cost than $d$ iff $A^i_{n,t} z_{n,t}^{i}(\omega) < z_{s,t}^{i}(\omega)$. Region $s$ with highest productivity producers $z_1, z_2$ will supply the commodity $i \in \mathcal{I}$ in region $d$ with the following probability:
    \begin{eqnarray*}
        \mathcal{F}^{i,j}_{sd,t-1}(z_1,z_2) &=& \int_0^{z_2} \Pi_{n \neq s} F^{j}_{n,t-1} \Big(A^i_{n,t} y, A^i_{n,t} y  \Big) d F^{j}_{s,t-1} (y,y) \\
        &+& \int_{z_2}^{z_1} \Pi_{n \neq s} F^{j}_{n,t-1} \Big(A^i_{n,t} z_2, A^i_{n,t} z_2  \Big) \frac{d}{dz_1} F_{s,t-1}^{j}(z_1,z_2)
    \end{eqnarray*}
    
    The first term in the right hand side denotes the probability that the lowest cost producer at destination $d$ is from $s$ and has productivity lower than $z_2$, while the second term accounts for the probability that the lowest cost producer at destination $d$ is from $s$ and has productivity in the range $[z_2,z_1)$. We will evaluate each integral separately. First, take the first term:
    
    \begin{eqnarray*}
    & & \int_0^{z_2} \Pi_{n \neq s} F^{j}_{n,t-1} \Big(A_{n,t-1}^{i} y, A_{n,t-1} y  \Big) d F^{j}_{s,t-1} (y,y) \\
    & & =  \int_0^{z_2}  \exp \Bigg\{ - \sum_{n \neq s} \lambda_{n,t-1}^{j} \Big(A_{n,t-1}^{i} y\Big)^{-\theta} \Bigg\} \theta \lambda_{s,t}^{m,j} y^{-\theta-1}  \exp \{ - \lambda_{s,t-1}^{j} y^{-\theta}  \} dy \\
    & & =  \lambda_{s,t-1}^{j} \int_0^{z_2}  \theta  y^{-\theta-1}  \exp \Bigg\{ - \sum_{n} \lambda_{n,t-1}^{j} \Big(A_{n,t-1}^{i} \Big)^{-\theta} y^{-\theta} \Bigg\}  dy \\
    & & =  \lambda_{s,t-1}^{j} \Bigg[ \frac{1}{\sum_{n} \lambda_{n,t-1}^{j} \Big(A_{n,t-1}^{i} \Big)^{-\theta}} \exp \Bigg\{ - \sum_{n} \lambda_{n,t-1}^{j} \Big(A_{n,t-1}^{i} \Big)^{-\theta} y^{-\theta} \Bigg\} \Bigg]_{y=0}^{y=z_2} \\
    & & =  \pi_{sd,t-1}^{i,j} \exp \Big\{ - \frac{\lambda_{s,t-1}^{j}}{\pi_{sd,t-1}^{i,j}} z_2^{-\theta} \Big\} 
    \end{eqnarray*}
    
    Now consider the second term.
    
    \begin{eqnarray*}
    & & \int_{z_2}^{z_1} \Pi_{n \neq s} F^{j}_{n,t-1} \Big(A^i_{n,t} z_2, A^i_{n,t} z_2  \Big) \frac{d}{dz_1} F_{s,t-1}^{j}(z_1,z_2) \\
    & & =  \int_{z_2}^{z_1}\exp \Bigg\{ - \sum_{n \neq s} \lambda_{n,t-1}^{i} \Big(A_{n,t-1}^{i} z_2\Big)^{-\theta} \Bigg\}  \theta \lambda_{s,t}^{m,j} z_1^{-\theta-1}   \exp \{ - \lambda_{s,t-1}^{j} z_2^{-\theta}  \} dz_1 \\
    & & =  \exp \Bigg\{ - \sum_{n} \lambda_{n,t-1}^{j} \Big(A_{n,t-1}^{i} z_2\Big)^{-\theta} \Bigg\}  \lambda_{s,t-1}^{j} \int_{z_2}^{z_1}  \theta  z_1^{-\theta-1}   dz_1\\
    & & =  \exp \Big\{ - \frac{\lambda_{s,t-1}^{j}}{\pi_{sd,t-1}^{i,j}} z_2^{-\theta} \Big\}   \lambda_{s,t-1}^{j} (z_2^{-\theta} - z_1^{-\theta} ) \\
    \end{eqnarray*}
    
    Therefore:
    
    \begin{eqnarray}
        \mathcal{F}^{i,j}_{sd,t-1}(z_1,z_2) = \exp \Big\{ - \frac{\lambda_{s,t-1}^{j}}{\pi_{sd,t-1}^{i,j}} z_2^{-\theta} \Big\} \Big( \pi_{sd,t-1}^{i,j} + \lambda_{s,t-1}^{j} (z_2^{-\theta} - z_1^{-\theta} ) \Big)
    \end{eqnarray}
    
    Note that:
    
    \begin{equation}\label{eq: intermediate_joint_density}
        \int z^{\beta \theta} dH_{sd,t}^{i,j}(z) = \int_0^{\infty} \int_{z_2}^\infty z_{1}^{\beta \theta} \frac{\partial^2 \mathcal{F}^{i,j}_{sd,t-1}(z_1,z_2)}{\partial z_1 \partial z_2} dz_1 dz_2
    \end{equation}
    
    and that we can calculate the joint density explicitly:
    
    \begin{eqnarray*}
        \frac{\partial^2 \mathcal{F}^{i}_{sd,t-1}(z_1,z_2)}{\partial z_1 \partial z_2} &=& \frac{\partial}{\partial z_2} \exp \Big\{ - \frac{\lambda_{s,t-1}^{j}}{\pi_{sd,t-1}^{i,j}} z_2^{-\theta} \Big\}  \theta \lambda_{s,t-1}^{j} z_1^{-\theta-1} \\
        &=& \frac{1}{\pi_{sd,t-1}^{i,j}} \exp \Big\{ - \frac{\lambda_{s,t-1}^{j}}{\pi_{sd,t-1}^{i,j}} z_2^{-\theta} \Big\}  (\theta \lambda_{s,t-1}^{j} z_1^{-\theta-1}) (\theta \lambda_{n,t-1}^{j} z_2^{-\theta-1})
    \end{eqnarray*}
    
    Plugging this into (\ref{eq: intermediate_joint_density}):
    
    \begin{eqnarray*}
       & & \int_0^{\infty} \int_{z_2}^\infty z_{1}^{\beta \theta} \frac{1}{\pi_{sd,t-1}^{i,j}} \exp \Big\{ - \frac{\lambda_{s,t-1}^{j}}{\pi_{sd,t-1}^{i,j}} z_2^{-\theta} \Big\}  (\theta \lambda_{s,t-1}^{j} z_1^{-\theta-1}) (\theta \lambda_{n,t-1}^{j} z_2^{-\theta-1}) dz_1 dz_2 \\
       & & = \int_0^{\infty} \frac{1}{\pi_{sd,t-1}^{i,j}} \exp \Big\{ - \frac{\lambda_{s,t-1}^{j}}{\pi_{sd,t-1}^{i,j}} z_2^{-\theta} \Big\}  (\theta \lambda_{s,t-1}^{j} z_2^{-\theta-1}) \lambda_{s,t-1}^{j} \int_{z_2}^\infty(\theta  z_1^{-\theta(1-\beta)-1})  dz_1 dz_2 \\
       & & = \int_0^{\infty} \frac{1}{\pi_{sd,t-1}^{i,j}} \exp \Big\{ - \frac{\lambda_{s,t-1}^{j}}{\pi_{sd,t-1}^{i,j}} z_2^{-\theta} \Big\}  (\theta \lambda_{s,t-1}^{j} z_2^{-\theta-1}) \lambda_{s,t-1}^{j} \frac{1}{1-\beta} z_2^{-\theta(1-\beta)} dz_2 
    \end{eqnarray*}
    
    Using a change of variables, let $\gamma \equiv \frac{\lambda_{s,t-1}^{i}}{\pi_{sd,t-1}^{i}} z_2^{-\theta}$, which implies that $d\gamma = - \theta \frac{\lambda_{s,t-1}^{i}}{\pi_{sd,t-1}^{i}} z_2^{-\theta-1}  dz$
    
    Replacing above:
    
    \begin{eqnarray*}
        & & (\lambda_{s,t-1}^{j})^{\beta}  (\pi_{sd,t-1}^{i,j})^{1-\beta} \frac{1}{1-\beta} \int_0^{\infty}  \exp \Big\{ - \gamma \Big\}  \eta^{(1-\beta)} d\gamma \\
        & & = (\lambda_{s,t-1}^{j})^{\beta}  (\pi_{sd,t-1}^{i,j})^{1-\beta}  \frac{1}{1-\beta} \Gamma(2-\beta) \\
        & & = (\lambda_{s,t-1}^{j})^{\beta}  (\pi_{sd,t-1}^{i,j})^{1-\beta} \Gamma(1-\beta) \qquad (\because \Gamma(y+1) = y \Gamma(y))
    \end{eqnarray*}
    
    Therefore, replacing into the law of motion for the location parameter of the Fr\'echet distribution:

    \begin{eqnarray*}
        \Delta \lambda_{d,t}^{i} &=& \alpha_{t} \int z^{\beta \theta} dG_{d,t}^{i}(z) \\
         &=& \alpha_{t} \sum_{j \in \mathcal{I}} \eta_{d,t-1}^{i,j} \sum_{s \in \mathcal{D}} \int z^{\beta \theta} dH_{sd,t-1}^{i,j}(z) \\ 
         &=& \alpha_{t} \Gamma(1-\beta) \sum_{j \in \mathcal{I}}  \eta_{d,t-1}^{i,j} \sum_{s \in \mathcal{D}} (\lambda_{s,t-1}^{j})^{\beta}  (\pi_{sd,t-1}^{i,j})^{1-\beta} 
    \end{eqnarray*}
    
    \noindent which is the same expression as in equation (\ref{eq: final_law_of_motion}).

    \newpage

    \section{Optimal Diffusion Levels}\label{appendix: optimaldiffusion}
    
    \subsection{Two-by-Two Economy}
    
    If a Benevolent Planner were to chose domestic trade shares to maximize idea diffusion to a given sector at the home economy, she would solve the following concave programming problem:
    
    \begin{equation}
        \max_{\{\pi_{h}^{i,i}, \pi_{h}^{i,-i}\}}   \footnotesize{\eta^{i}  [(\pi_{h}^{i,i})^{1-\beta}(\lambda_h^i)^{\beta} + (1-\pi_{h}^{i,i})^{1-\beta}(\lambda_f^i)^{\beta}] +  (1-\eta_{d}^{i}) [(\pi_{h}^{i,-i})^{1-\beta}(\lambda_h^{-i})^{\beta} + (1-\pi_{h}^{i,-i})^{1-\beta}(\lambda_f^{-i})^{\beta}]}
    \end{equation}
    
    For $\pi_{h}^{i,i}$, the first order condition satisfies:
    
    \begin{eqnarray*}
        \eta^i (1-\beta) [(\pi_{h}^{i,i})^{-\beta}(\lambda_h^i)^{\beta} - (1-\pi_{h}^{i,i})^{-\beta}(\lambda_f^i)^{\beta}] &=& 0 \\
    (\pi_{h}^{i,i})^{-\beta}(\lambda_h^i)^{\beta} &=& (1-\pi_{h}^{i,i})^{-\beta}(\lambda_f^i)^{\beta} \\
    (\pi_{h}^{i,i})^{\text{Diffusion Optimum}} &=& \frac{\lambda_h^i}{\lambda_f^i + \lambda_h^i}
    \end{eqnarray*}
    
    This result is the building block of the ratios that we express in Section 3. If we want to calculate the within sector ratio of total domestic trade expenditure, we can write:
    
    \begin{equation*}
        \Bigg( \frac{\eta^i \pi_h^{i,i}}{\eta^i (1-\pi_h^{i,i})} \Bigg)^{\text{Diffusion Optimum}} = \frac{\lambda_h^i}{\lambda_f^i + \lambda_h^i} \times \Big( \frac{\lambda_f^i}{\lambda_f^i + \lambda_h^i} \Big)^{-1} = \frac{\lambda_h^i}{\lambda_f^i}
    \end{equation*}
    
    Similarly, if we want to write a cross-sector ratio of total domestic trade expenditure shares, we can write:
    
    \begin{eqnarray*}
        \Bigg( \frac{\eta^i \pi_h^{i,i}}{(1-\eta^i) \pi_{h}^{i,-i}} \Bigg)^{\text{Diffusion Optimum}} = \underbrace{\frac{\eta^i}{1-\eta^i}}_{\text{cost share}} \times \underbrace{\frac{\lambda_h^i}{\lambda_h^{-i}}}_{\text{own-productivity}} \times \underbrace{ \Bigg( \frac{\lambda_h^i + \lambda_f^i}{\lambda_h^{-i} + \lambda_f^{-i}} \Bigg)^{-1} }_{\text{industry-wise productivity}}
    \end{eqnarray*}
    
    \noindent which is the same as equation (\ref{eq: planner}).
    
    \subsection{Multi-Sector, Multi-Region Economy}
    
    For each commodity $i$ in macrosector $j$, the Benevolent Planner maximizes:
    
    \begin{eqnarray}
    &    & \max_{\{\pi_{sd,t-1}^{i,j}\}_{j,i \in \mathcal{I}, s \in \mathcal{D}}} \sum_{j \in \mathcal{I}} \eta_{d,t-1}^{i,j} \sum_{s \in \mathcal{D}} (\pi_{sd,t-1}^{i,j})^{1-\beta}(\lambda_{s,t-1}^{j})^{\beta}  \\
        &s.t.& \forall (i,j) \in \mathcal{I} \times \mathcal{I} \qquad  \sum_{s \in \mathcal{D}} \pi_{sd,t-1}^{i,j} = 1 \nonumber
    \end{eqnarray}
    
    \noindent Let $\varphi$ be the Lagrange multiplier. Then, for each $(s,i,j)$ first order conditions satisfy:
    \begin{eqnarray*}
        (1-\beta) \eta_{d,t-1}^{i,j} (\pi_{sd,t-1}^{i,j})^{\beta}(\lambda_{s,t-1}^{j})^{\beta} &=& \varphi \\
        (\pi_{sd,t-1}^{i,j})^{\text{Diffusion Optimum}} &=& \varphi^{-\frac{1}{\beta}} [(1-\beta) \eta_{d,t-1}^{i,j}]^{\frac{1}{\beta}} \lambda_{s,t-1}^{j}
    \end{eqnarray*}
    
    \noindent using the constraint:
    
    \begin{equation*}
        \sum_{s \in \mathcal{D}} (\varphi^{-\frac{1}{\beta}} [(1-\beta) \eta_{d,t-1}^{i,j}]^{\frac{1}{\beta}} \lambda_{s,t-1}^{j}) = 1 \iff \varphi^{-\frac{1}{\beta}} = [(1-\beta) \eta_{d,t-1}^{i,j}]^{-\frac{1}{\beta}} ( \sum_{s \in \mathcal{D}}  \lambda_{s,t-1}^{j})^{-1}
    \end{equation*}

    \noindent Therefore:
    
    \begin{equation}
        (\pi_{sd,t-1}^{i,j})^{\text{Diffusion Optimum}} = \frac{\lambda_{s,t-1}^{j}}{\sum_{k \in \mathcal{D}}  \lambda_{k,t-1}^{m,j}}
    \end{equation}
    
If we want to calculate the within sector ratio of total domestic trade expenditure, we can write:
    
    \begin{equation*}
        \Bigg( \frac{\eta_{d,t-1}^{i,j} \pi_{sd,t-1}^{i,j} }{ \eta_{d,t-1}^{i,j} \pi_{nd,t-1}^{i,j} } \Bigg)^{\text{Diffusion Optimum}} = \frac{\lambda_{s,t-1}^{j}}{\sum_{k \in \mathcal{D}}  \lambda_{k,t-1}^{m,j}} \times \Big( \frac{\lambda_{n,t-1}^{j}}{\sum_{k \in \mathcal{D}}  \lambda_{k,t-1}^{m,j}} \Big)^{-1} = \frac{\lambda_{s,t-1}^{j}}{\lambda_{n,t-1}^{j}}
    \end{equation*}
    
    Similarly, if we want to write a cross-sector ratio of total domestic trade expenditure shares, we can write:
    
    \begin{eqnarray*}
        \Bigg( \frac{\eta_{d,t-1}^{i,j} \pi_{sd,t-1}^{i,j} }{ \eta_{d,t-1}^{i,p} \pi_{nd,t-1}^{i,p} } \Bigg)^{\text{Diffusion Optimum}} &=& \frac{\eta_{d,t-1}^{i,j}}{\eta_{d,t-1}^{i,p}} \times \frac{\lambda_{s,t-1}^{j}}{\lambda_{n,t-1}^{p}} \times \Big(  \frac{\sum_{k \in \mathcal{D}}  \lambda_{k,t-1}^{j}}{\sum_{k \in \mathcal{D}}  \lambda_{k,t-1}^{p}}\Big)^{-1} \\
    \end{eqnarray*}
    
    \noindent which is analogous to equation \ref{eq: planner}. The actual trade allocation under the multi-country, multi-sector framework satisfies:

    \begin{eqnarray*}
        \Bigg( \frac{\eta_{d,t-1}^{i,j} \pi_{sd,t-1}^{i,j} }{ \eta_{d,t-1}^{i,p} \pi_{nd,t-1}^{i,p} } \Bigg)^{\text{Actual Trade}} &=& \frac{\eta_{d,t-1}^{i,j} }{\eta_{d,t-1}^{i,p}} \times \frac{\lambda_{s,t-1}^{j} (\tilde{x}_{sd,t-1}^{j})^{-\theta} }{\lambda_{n,t-1}^{p} (\tilde{x}_{nd,t-1}^{p})^{-\theta}} \times \Big(  \frac{ \sum_{k \in \mathcal{D}} \lambda_{k,t}^{j} ( \tilde{x}_{kd,t}^{j})^{-\theta} }{ \sum_{k \in \mathcal{D}} \lambda_{k,t}^{p} ( \tilde{x}_{kd,t}^{p})^{-\theta} }\Big)^{-1} 
    \end{eqnarray*}    
    
\noindent which is analogous to equation \ref{eq: ft}. The ratio the actual trade allocation for the planner's allocation satisfies:

\begin{equation*}
    \aleph = \frac{(\tilde{x}_{sd,t-1}^{j})^{-\theta} }{ (\tilde{x}_{nd,t-1}^{p})^{-\theta}} \times \Big(  \frac{\sum_{k \in \mathcal{D}}  \lambda_{k,t-1}^{j}}{ \sum_{k \in \mathcal{D}} \lambda_{k,t}^{j} ( \tilde{x}_{kd,t}^{j})^{-\theta} } \Big) \times  \Big(  \frac{\sum_{k \in \mathcal{D}}  \lambda_{k,t-1}^{p}}{ \sum_{k \in \mathcal{D}} \lambda_{k,t}^{p} ( \tilde{x}_{kd,t}^{p})^{-\theta} }\Big)^{-1}
\end{equation*}

\newpage

\section{Other Mathematical Derivations}

    \subsection{Trade shares} In this model, since there are infinitely many varieties in the unit interval, the expenditure share of destination region $d \in \mathcal{D}$ on goods coming from source country $s \in \mathcal{D}$ converge to their expected values. Let $\pi_{sd,t}^{i}$ denote the share of expenditures of consumers in region $d \in \mathcal{D}$ on commodity $i \in \mathcal{I}^{m}$ coming from region $s \in \mathcal{D}$ and, let for each $n$, $(A^i_{n,t})^{-1} \equiv \nicefrac{\tilde{x}_{sd,t}^{i}}{\tilde{x}_{nd,t}^{i}}$. This share will satisfy:
    
    \begin{eqnarray}
           \pi_{sd,t}^{i} &=& Pr \Big( \frac{\tilde{x}_{sd,t}^{i}}{z_{s,t}^{i}(\omega)} < \min_{(n \neq s)} \Big\{ \frac{\tilde{x}_{nd,t}^{i}}{z_{n,t}^{i}(\omega)}  \Big\}  \Big) \nonumber \\
           &=& \int_{0}^{\infty} Pr(z_{s,t}^{i}(\omega) = z) Pr ( z_{n,t}^{i}(\omega) < z A^{i}_{n} ) dz  \nonumber \\
           &=& \int_{0}^{\infty} f_{s,t}^{i}(z)  \Pi_{(n\neq s)} F_{n,t}(z A^{i}_{n} ) dz  \nonumber \\
            &=&  \int_{0}^{\infty} \theta \lambda_{s,t}^{i} z^{-(1+\theta)} e^{ -( \sum_{n \in \mathcal{D}} \lambda_{n,t}^{i} (A^{i}_{n})^{-\theta})z^{-\theta} }  dz   \nonumber \\
            &=& \frac{\lambda_{s,t}^{i} (\tilde{x}_{sd,t}^{i})^{-\theta}}{\sum_{n \in \mathcal{D}} \lambda_{n,t}^{i} (\tilde{x}_{md,t}^{i})^{-\theta}}  \nonumber \\
            &=& \frac{\lambda_{s,t}^{i} (\tilde{x}_{sd,t}^{i})^{-\theta}}{\Phi_{d,t}^{i} }
    \end{eqnarray}
    
    Similarly, since countries use the same aggregate final goods as intermediate inputs, cost shares in intermediates for each supplying sector $j$ and region $s$ used in the production of good $i$ in region $d$ satisfies:
    
    \begin{equation}
        \pi_{sd,t}^{i,j} = \frac{\lambda_{s,t}^{j} (\tilde{x}_{sd,t}^{j})^{-\theta}}{\Phi_{d,t}^{j} }
    \end{equation}
    
    \noindent which are the same as expressed in (\ref{eq: trade_share_com}).

    \subsection{Price levels}
    
    Recall from equations ($\ref{pd}$) that the prices of commodities and intermediate goods can be expressed, respectively, as:
    \begin{eqnarray*}
        p_{d,t}^{i} &=& \Big[ \int_{[0,1]} p_{d,t}^{i}(\omega)^{1-\sigma}  d\omega \Big]^{\frac{1}{1-\sigma}}
    \end{eqnarray*}
    
    Let $\Omega_{sd,t}^{i}$ and $\Omega_{sd,t}^{i,j}$ denote the subsets of $\Omega = [0,1]$ for which the region $s \in \mathcal{D}$ is a supplier in destination region $d \in \mathcal{D}$. We can then rewrite price levels above as:

    \begin{eqnarray*}
        p_{d,t}^{i} &=& \Big[ \sum_{s \in \mathcal{D}} \int_{\Omega_{sd,t}^{i}} p_{d,t}^{i}(\omega)^{1-\sigma}  d\omega \Big]^{\frac{1}{1-\sigma}}     \end{eqnarray*}
    
    Similarly, we restate $\mathcal{F}^{i}_{sd,t}(z_1,z_2)$ and the analogous measure $\mathcal{F}^{i,j}_{sd,t}(z_1,z_2)$:
    
    \begin{eqnarray}
        \mathcal{F}^{i}_{sd,t}(z_1,z_2) &=& \exp \Big\{ - \frac{\lambda_{s,t}^{i}}{\pi_{sd,t}^{i}} z_2^{-\theta} \Big\} \Big( \pi_{sd,t}^{i} + \lambda_{s,t}^{i} (z_2^{-\theta} - z_1^{-\theta} ) \Big)     \end{eqnarray}
    
    \noindent which denote the fraction of varieties that $d$ purchases from $s$ with productivity up to $z_1$ and whose second best producer is not more efficient than than $z_2$.     Recall that, from the Bertrand competition assumption, we can write, for each variety $\omega$:
    
    \begin{equation*}
        p_{d,t}^{i}(\omega) = \min \Bigg\{ \frac{\sigma}{\sigma-1} \frac{\tilde{x}_{sd,t}^{i}}{z^{i}_{1s,t}(\omega)} ,  \frac{\tilde{x}_{sd,t}^{i}}{z^{i}_{2s,t}(\omega)}  \Bigg\}
    \end{equation*}
    
    So we can rewrite the equation $\int_{\Omega_{sd,t}^{i}} p_{d,t}^{i}(\omega)^{1-\sigma} d\omega$ in the following fashion:
    
    \small
    \begin{eqnarray*}
        & & \int_{\Omega_{sd,t}^{i}} p_{d,t}^{i}(\omega)^{1-\sigma} d\omega \\
        &=& \int_0^\infty  \int_{z_2}^\infty (p_{d,t}^{i})^{1-\sigma} \frac{\partial^2 \mathcal{F}^{i}_{sd,t}(z_1,z_2)}{\partial z_1 \partial z_2} dz_1 dz_2 \\
        &=& \int_0^\infty  \int_{z_2}^\infty \min \Bigg\{ \frac{\sigma}{\sigma-1} \frac{\tilde{x}_{sd,t}^{i}}{z_1} ,  \frac{\tilde{x}_{sd,t}^{i}}{z_2}  \Bigg\}^{1-\sigma} \frac{1}{\pi_{sd,t}^{i}} \exp \Big\{ - \frac{\lambda_{s,t}^{i}}{\pi_{sd,t}^{i}} z_2^{-\theta} \Big\}  (\theta \lambda_{s,t}^{i} z_1^{-\theta-1}) (\theta \lambda_{s,t}^{i} z_2^{-\theta-1}) dz_1 dz_2
    \end{eqnarray*}
    \normalsize
    
    With a change of variables, denote $\eta_1 \equiv \frac{\lambda_{s,t}^{i}}{\pi_{sd,t}^{i}} z_1^{-\theta}$ and $\eta_2 \equiv \frac{\lambda_{s,t}^{i}}{\pi_{sd,t}^{i}} z_2^{-\theta}$ and $d \eta_1 = - \frac{\theta \lambda_{s,t}^{i} z_1^{-\theta-1}}{\pi_{sd,t-1}^{i}} dz_1$,  $d \eta_2 = - \frac{\theta \lambda_{s,t}^{i} z_2^{-\theta-1}}{\pi_{sd,t-1}^{i}} dz_2$, which allows us to rewrite the equation above as:
    
    \small
    \begin{eqnarray*}
        & & \int_{\Omega_{sd,t}^{i}} p_{d,t}^{i}(\omega)^{1-\sigma} d\omega \\
        &=& \pi_{sd,t}^{i} \int_0^\infty  \int_{0}^{\eta_2} \min \Bigg\{ \frac{\sigma}{\sigma-1} \frac{\tilde{x}_{sd,t}^{i}}{z_1} ,  \frac{\tilde{x}_{sd,t}^{i}}{z_2}  \Bigg\}^{1-\sigma} \exp \Big\{ - \eta_2 \Big\}  d\eta_1 d\eta_2 \\
        &=& \pi_{sd,t}^{i} \Big(\frac{\lambda_{s,t}^{i}}{\pi_{sd,t}^{i}}\Big)^{-\frac{1-\sigma}{\theta}} (\tilde{x}_{sd,t}^{i})^{1-\sigma}  \int_0^\infty  \int_{0}^{\eta_2} \min \Bigg\{ \Big(\frac{\sigma}{\sigma-1}\Big)^{\theta} \eta_1 ,  \eta_2 \Bigg\}^{\frac{1-\sigma}{\theta}}  \exp \Big\{ - \eta_2 \Big\} d\eta_1 d\eta_2 \\
        &=& \pi_{sd,t}^{i} \Big(\frac{\lambda_{s,t}^{i}}{\pi_{sd,t}^{i}}\Big)^{-\frac{1-\sigma}{\theta}} (\tilde{x}_{sd,t}^{i})^{1-\sigma} \Bigg[  \int_0^\infty  \int_{\Big(\frac{\sigma}{\sigma-1}\Big)^{-\theta} \eta_2}^{\eta_2} \eta_2^{\frac{1-\sigma}{\theta}}  \exp \Big\{ - \eta_2 \Big\} d\eta_1 d\eta_2 \\
        &+& \int_0^\infty  \int_{0}^{\Big(\frac{\sigma}{\sigma-1}\Big)^{-\theta} \eta_2} \Big( \frac{\sigma}{\sigma-1}\Big)^{1-\sigma} \eta_1^{\frac{1-\sigma}{\theta}}  \exp \Big\{ - \eta_2 \Big\} d\eta_1 d\eta_2 \Bigg] \\
        &=& \pi_{sd,t}^{i} \Big(\frac{\lambda_{s,t}^{i}}{\pi_{sd,t}^{i}}\Big)^{-\frac{1-\sigma}{\theta}} (\tilde{x}_{sd,t}^{i})^{1-\sigma} \Big[ 1- \Big(\frac{\sigma}{\sigma-1}\Big)^{-\theta} + \frac{\theta}{1-\sigma+\theta} \Big(\frac{\sigma}{\sigma-1}\Big)^{-\theta}  \Big]  \cdot \int_0^\infty  \eta_2^{\frac{1-\sigma}{\theta}+1}  \exp \Big\{ - \eta_2 \Big\} d\eta_2 \\
        &=& \pi_{sd,t}^{i} \Big(\frac{\lambda_{s,t}^{i}}{\pi_{sd,t}^{i}}\Big)^{-\frac{1-\sigma}{\theta}} (\tilde{x}_{sd,t}^{i})^{1-\sigma} \Big[ 1- \Big(\frac{\sigma}{\sigma-1}\Big)^{-\theta} + \frac{\theta}{1-\sigma+\theta} \Big(\frac{\sigma}{\sigma-1}\Big)^{-\theta}  \Big]  \Gamma \Big(\frac{1-\sigma}{\theta}+2 \Big) \\
        &=& \pi_{sd,t}^{i} \Big(\frac{\lambda_{s,t}^{i}}{\pi_{sd,t}^{i}}\Big)^{-\frac{1-\sigma}{\theta}} (\tilde{x}_{sd,t}^{i})^{1-\sigma} \Big[ 1- \Big(\frac{\sigma}{\sigma-1}\Big)^{-\theta} + \frac{\theta}{1-\sigma+\theta} \Big(\frac{\sigma}{\sigma-1}\Big)^{-\theta}  \Big] \frac{1-\sigma+\theta}{\theta}  \Gamma \Big( \frac{1-\sigma+\theta}{\theta} \Big) \\
        &=& \Big[ 1 - \frac{\sigma-1}{\theta} + \frac{\sigma-1}{\theta} \Big(\frac{\sigma}{\sigma-1}\Big)^{-\theta} \Big] \cdot \Gamma \Big( \frac{1-\sigma+\theta}{\theta} \Big) \cdot \pi_{sd,t}^{i} \Big(\frac{\lambda_{s,t}^{i} (\tilde{x}_{sd,t}^{i})^{-\theta} }{\pi_{sd,t}^{i}}\Big)^{-\frac{1-\sigma}{\theta}} \\
        &=& \Big[ 1 - \frac{\sigma-1}{\theta} + \frac{\sigma-1}{\theta} \Big(\frac{\sigma}{\sigma-1}\Big)^{-\theta} \Big] \cdot \Gamma \Big( \frac{1-\sigma+\theta}{\theta} \Big) \cdot \pi_{sd,t}^{i} \Big(\frac{\lambda_{s,t}^{i} (\tilde{x}_{sd,t}^{i})^{-\theta} }{\pi_{sd,t}^{i}}\Big)^{-\frac{1-\sigma}{\theta}}  \\
        &=& \Big[ 1 - \frac{\sigma-1}{\theta} + \frac{\sigma-1}{\theta} \Big(\frac{\sigma}{\sigma-1}\Big)^{-\theta} \Big] \cdot \Gamma \Big( \frac{1-\sigma+\theta}{\theta} \Big) \cdot \pi_{sd,t}^{i} \Big(\sum_{n \in \mathcal{D}} \lambda_{n,t}^{i} (\tilde{x}_{nd,t}^{i})^{-\theta} \Big)^{-\frac{1-\sigma}{\theta}} 
    \end{eqnarray*}
    \normalsize
    
    Therefore: 
    
    \begin{eqnarray}
        p_{d,t}^{i} &=& \Big[ \sum_{s \in \mathcal{D}} \int_{\Omega_{sd,t}^{i}} p_{d,t}^{i}(\omega)^{1-\sigma}  d\omega \Big]^{\frac{1}{1-\sigma}} \nonumber \\
        p_{d,t}^{i} &=&\Big[ 1 - \frac{\sigma-1}{\theta} + \frac{\sigma-1}{\theta} \Big(\frac{\sigma}{\sigma-1}\Big)^{-\theta} \Big]^{\frac{1}{1-\sigma}} \cdot \Gamma \Big( \frac{1-\sigma+\theta}{\theta} \Big)^{\frac{1}{1-\sigma}} \cdot \Big(\sum_{n \in \mathcal{D}} \lambda_{n,t}^{i} (\tilde{x}_{nd,t}^{i})^{-\theta} \Big)^{-\frac{1}{\theta}} \cdot \Big[ \sum_{s \in \mathcal{D}} \pi_{sd,t}^{i} \Big]^{\frac{1}{1-\sigma}}  \nonumber \\
        p_{d,t}^{i} &=& \Big[ 1 - \frac{\sigma-1}{\theta} + \frac{\sigma-1}{\theta} \Big(\frac{\sigma}{\sigma-1}\Big)^{-\theta} \Big]^{\frac{1}{1-\sigma}} \cdot \Gamma \Big( \frac{1-\sigma+\theta}{\theta} \Big)^{\frac{1}{1-\sigma}} \cdot \Big(\sum_{n \in \mathcal{D}} \lambda_{n,t}^{i} (\tilde{x}_{nd,t}^{i})^{-\theta} \Big)^{-\frac{1}{\theta}}
    \end{eqnarray}
    
    Which is the same as (\ref{eq: price_com_solve}) after allowing the  elasticities to be sector-specific.
    
    \subsection{Marginal costs and profits}
    
    From equation (\ref{qd}) we can derive standard CES demand functions as:
    
    \begin{eqnarray}
        q_{d,t}^{i}(\omega) &=& \Big( \frac{p_{d,t}^{i}(\omega)}{p_{d,t}^{i}} \Big)^{-\sigma} \frac{e_{d,t}^{i}}{p_{d,t}^{i}} \\
        c_{d,t}^{i,j}(\omega) &=& \Big( \frac{p_{d,t}^{m,j}(\omega)}{p_{d,t}^{m,j}} \Big)^{-\sigma} \frac{e_{d,t}^{i,j}}{pc_{d,t}^{m,j}}
    \end{eqnarray}
    
    \noindent where $p_{d,t}^{i}$ satisfies equations (\ref{eq: price_com_solve}); $e_{d,t}^{i}$ denotes expenditure on commodity $i$ of macro-sector $m$ in country $d$; and $e_{d,t}^{i,j}$ denotes expenditure on intermediate input $j$ used in the production of commodity $i$ of macro-sector $m$ in country $d$.
    
    As in previous subsections of the Appendix, we will derive the expression for the marginal cost and mark-up for the production of variety $q_{d,t}^{i}(\omega)$ and state a corresponding expression for $c_{d,t}^{i,j}(\omega)$. The marginal cost of producing variety $\omega$ sourced in country $s$ and consumed in country $s$ is:
    
    \begin{equation*}
        \frac{\tilde{x}_{d,t}^{i}}{z_1(\omega)} q_{d,t}^{i}(\omega)
    \end{equation*}
    
    \noindent and total cost of varieties sourced in country $s$ and consumed in country $s$ can be expressed as:
    
    \begin{equation*}
        \int_{\Omega_{sd,t}^{i}} \frac{\tilde{x}_{d,t}^{i}}{z_1(\omega)} q_{d,t}^{i}(\omega)  d\omega = \int_{\Omega_{sd,t}^{i}} \frac{\tilde{x}_{d,t}^{i}}{z_1(\omega)} \Big( \frac{p_{d,t}^{i}(\omega)}{p_{d,t}^{i}} \Big)^{-\sigma} \frac{e_{d,t}^{i}}{p_{d,t}^{i}}  d\omega
    \end{equation*}
    
    As in the previous section of the Appendix, we let $\Omega_{sd,t}^{i}$ and $\Omega_{sd,t}^{i,j}$ denote the subsets of $\Omega = [0,1]$ for which the  region $s \in \mathcal{D}$ is a supplier in destination region $d \in \mathcal{D}$. We can then rewrite the integral above as:
    
    \small
    \begin{eqnarray*}
        & & \int_{\Omega_{sd,t}^{i}} \frac{\tilde{x}_{d,t}^{i}}{z_1(\omega)} \Big( \frac{p_{d,t}^{i}(\omega)}{p_{d,t}^{i}} \Big)^{-\sigma} \frac{e_{d,t}^{i}}{p_{d,t}^{i}}  d\omega \\
        &=& \tilde{x}_{d,t}^{i}  \frac{e_{d,t}^{i}}{(p_{d,t}^{i})^{1-\sigma}} \int_0^\infty  \int_{z_2}^\infty (z_1)^{-1} (p_{d,t}^{i})^{-\sigma} \frac{\partial^2 \mathcal{F}^{i}_{sd,t}(z_1,z_2)}{\partial z_1 \partial z_2} dz_1 dz_2 \\
        &=&  \tilde{x}_{d,t}^{i}   \frac{e_{d,t}^{i}}{(p_{d,t}^{i})^{1-\sigma}} \int_0^\infty  \int_{z_2}^\infty \frac{1}{z_1} \min \Bigg\{ \frac{\sigma}{\sigma-1} \frac{\tilde{x}_{sd,t}^{i}}{z_1} ,  \frac{\tilde{x}_{sd,t}^{i}}{z_2}  \Bigg\}^{-\sigma} \frac{1}{\pi_{sd,t}^{i}} \exp \Big\{ - \frac{\lambda_{s,t}^{i}}{\pi_{sd,t}^{i}} z_2^{-\theta} \Big\}  (\theta \lambda_{s,t}^{i} z_1^{-\theta-1}) (\theta \lambda_{s,t}^{i} z_2^{-\theta-1}) dz_1 dz_2
    \end{eqnarray*}
    \normalsize

    Once again, use a change of variables, denote $\eta_1 \equiv \frac{\lambda_{s,t}^{i}}{\pi_{sd,t}^{i}} z_1^{-\theta}$ and $\eta_2 \equiv \frac{\lambda_{s,t}^{i}}{\pi_{sd,t}^{i}} z_2^{-\theta}$ and $d \eta_1 = - \frac{\theta \lambda_{s,t}^{i} z_1^{-\theta-1}}{\pi_{sd,t-1}^{i}} dz_1$,  $d \eta_2 = - \frac{\theta \lambda_{s,t}^{i} z_2^{-\theta-1}}{\pi_{sd,t-1}^{i}} dz_2$, which allows U.S. to rewrite the equation above as:

    \small
    \begin{eqnarray*}
        & & \int_{\Omega_{sd,t}^{i}} \frac{\tilde{x}_{d,t}^{i}}{z_1(\omega)} \Big( \frac{p_{d,t}^{i}(\omega)}{p_{d,t}^{i}} \Big)^{-\sigma} \frac{e_{d,t}^{i}}{p_{d,t}^{i}}  d\omega \\
        &=&  \pi_{sd,t}^{i} \frac{e_{d,t}^{i}}{(p_{d,t}^{i})^{1-\sigma}}  \Big(\frac{\lambda_{s,t}^{i}}{\pi_{sd,t}^{i}}\Big)^{-\frac{1-\sigma}{\theta}} (\tilde{x}_{sd,t}^{i})^{1-\sigma} \int_0^\infty  \int_{0}^{\eta_2 } \eta_1^{\frac{1}{\theta}} \min \Bigg\{ \Big( \frac{\sigma}{\sigma-1} \Big)^{\theta} \eta_1 ,  \eta_2 \Bigg\}^{-\frac{\sigma}{\theta}} d\eta_1 d \eta_2 \\
        &=&  \pi_{sd,t}^{i} \frac{e_{d,t}^{i}}{(p_{d,t}^{i})^{1-\sigma}}  \Big(\frac{\lambda_{s,t}^{i}}{\pi_{sd,t}^{i}}\Big)^{-\frac{1-\sigma}{\theta}} (\tilde{x}_{sd,t}^{i})^{1-\sigma} \Bigg[  \int_0^\infty  \int_{\Big(\frac{\sigma}{\sigma-1}\Big)^{-\theta} \eta_2}^{\eta_2} \eta_1^{\frac{1}{\theta}}  \eta_2^{\frac{-\sigma}{\theta}}  \exp \Big\{ - \eta_2 \Big\} d\eta_1 d\eta_2 \\
        &+& \int_0^\infty  \int_{0}^{\Big(\frac{\sigma}{\sigma-1}\Big)^{-\theta} \eta_2} \Big( \frac{\sigma}{\sigma-1}\Big)^{-\sigma} \eta_1^{\frac{1-\sigma}{\theta}}  \exp \Big\{ - \eta_2 \Big\} d\eta_1 d\eta_2 \Bigg] \\
        &=&  \pi_{sd,t}^{i} \frac{e_{d,t}^{i}}{(p_{d,t}^{i})^{1-\sigma}}  \Big(\frac{\lambda_{s,t}^{i}}{\pi_{sd,t}^{i}}\Big)^{-\frac{1-\sigma}{\theta}} (\tilde{x}_{sd,t}^{i})^{1-\sigma} \Bigg[  \int_0^\infty  \frac{\theta}{1+\theta} \Big[ 1 -  \Big(\frac{\sigma}{\sigma-1}\Big)^{-1-\theta} \Big] \eta_2^{\frac{1-\sigma+\theta}{\theta}}  \exp \Big\{ - \eta_2 \Big\} d\eta_2 \\
        &+& \int_0^\infty  \frac{\theta}{1-\sigma+\theta} \Big(\frac{\sigma}{\sigma-1}\Big)^{-1-\theta} \eta_1^{\frac{1-\sigma+\theta}{\theta}}  \exp \Big\{ - \eta_2 \Big\} d\eta_2 \Bigg] \\
        &=&  \pi_{sd,t}^{i} \frac{e_{d,t}^{i}}{(p_{d,t}^{i})^{1-\sigma}}  \Big(\frac{\lambda_{s,t}^{i}}{\pi_{sd,t}^{i}}\Big)^{-\frac{1-\sigma}{\theta}} (\tilde{x}_{sd,t}^{i})^{1-\sigma} \Bigg[ \frac{\theta}{1+\theta} \Big[ 1 -  \Big(\frac{\sigma}{\sigma-1}\Big)^{-1-\theta} \Big] + \frac{\theta}{1-\sigma+\theta} \Big(\frac{\sigma}{\sigma-1}\Big)^{-1-\theta} \Bigg] \Gamma \Big( \frac{1-\sigma}{\theta} + 2 \Big) \\
        &=&  \pi_{sd,t}^{i} \frac{e_{d,t}^{i}}{(p_{d,t}^{i})^{1-\sigma}}  \Big(\frac{\lambda_{s,t}^{i}}{\pi_{sd,t}^{i}}\Big)^{-\frac{1-\sigma}{\theta}} (\tilde{x}_{sd,t}^{i})^1{-\sigma} \Bigg[ \frac{\theta}{1+\theta} \Big[ 1 -  \Big(\frac{\sigma}{\sigma-1}\Big)^{-1-\theta} \Big] + \frac{\theta}{1-\sigma+\theta} \Big(\frac{\sigma}{\sigma-1}\Big)^{-1-\theta} \Bigg] \frac{1-\sigma+\theta}{\theta} \Gamma \Big( \frac{1-\sigma+\theta}{\theta} \Big) \\
        &=& \Big[ 1 - \frac{\sigma}{1+\theta} + \frac{\sigma}{1+\theta} \Big( \frac{\sigma}{\sigma-1} \Big)^{-1-\theta} \Big] \Gamma \Big( \frac{1-\sigma+\theta}{\theta} \Big)   \pi_{sd,t}^{i} \frac{e_{d,t}^{i}}{(p_{d,t}^{i})^{1-\sigma}}  \Big(\frac{\lambda_{s,t}^{i}}{\pi_{sd,t}^{i}}\Big)^{-\frac{1-\sigma}{\theta}} (\tilde{x}_{sd,t}^{i})^{1-\sigma}  \\
        &=& \Big[ 1 - \frac{\sigma}{1+\theta} + \frac{\sigma}{1+\theta} \Big( \frac{\sigma}{\sigma-1} \Big)^{-1-\theta} \Big] \Gamma \Big( \frac{1-\sigma+\theta}{\theta} \Big)   \pi_{sd,t}^{i} \frac{e_{d,t}^{i}}{(p_{d,t}^{i})^{1-\sigma}} \Big(\frac{(\tilde{x}_{sd,t}^{i})^{-\theta}\lambda_{s,t}^{i}}{\pi_{sd,t}^{i}}\Big)^{\frac{\sigma}{\theta}} \Big(\frac{(\tilde{x}_{sd,t}^{i})^{-\theta}\lambda_{s,t}^{i}}{\pi_{sd,t}^{i}}\Big)^{-\frac{1}{\theta}}  \\
        &=& \Big[ 1 - \frac{\sigma}{1+\theta} + \frac{\sigma}{1+\theta} \Big( \frac{\sigma}{\sigma-1} \Big)^{-1-\theta} \Big] \Gamma \Big( \frac{1-\sigma+\theta}{\theta} \Big)   \pi_{sd,t}^{i} \frac{e_{d,t}^{i}}{(p_{d,t}^{i})^{1-\sigma}} \Big( \sum_{n \in \mathcal{D}}(\tilde{x}_{nd,t}^{i})^{-\theta}\lambda_{n,t}^{i} \Big)^{-\frac{1-\sigma}{\theta}}
    \end{eqnarray*}
    \normalsize
    
    Using the expression for $(p_{d,t}^{i})^{1-\sigma}$:
    
    \small
    \begin{eqnarray*}
        &=& \frac{ \Big[ 1 - \frac{\sigma}{1+\theta} + \frac{\sigma}{1+\theta} \Big( \frac{\sigma}{\sigma-1} \Big)^{-1-\theta} \Big] \Gamma \Big( \frac{1-\sigma+\theta}{\theta} \Big)   \pi_{sd,t}^{i} e_{d,t}^{i} \Big( \sum_{n \in \mathcal{D}}(\tilde{x}_{nd,t}^{i})^{-\theta}\lambda_{n,t}^{i} \Big)^{-\frac{1-\sigma}{\theta}} }{ \Big[ 1 - \frac{\sigma-1}{\theta} + \frac{\sigma-1}{\theta} \Big(\frac{\sigma}{\sigma-1}\Big)^{-\theta} \Big] \Gamma \Big( \frac{1-\sigma+\theta}{\theta} \Big) \Big(\sum_{n \in \mathcal{D}} \lambda_{n,t}^{i} (\tilde{x}_{nd,t}^{i})^{-\theta} \Big)^{-\frac{1-\sigma}{\theta}}} \\ 
        &=& \frac{ \Big[ 1 - \frac{\sigma}{1+\theta} + \frac{\sigma}{1+\theta} \Big( \frac{\sigma}{\sigma-1}  \Big)^{-1-\theta} \Big]}{ \Big[ 1 - \frac{\sigma-1}{\theta} + \frac{\sigma-1}{\theta} \Big(\frac{\sigma}{\sigma-1}\Big)^{-\theta} \Big] } \pi_{sd,t}^{i} e_{d,t}^{i} \\
        &=& \frac{\theta}{1+\theta} \frac{ 1 + \theta - \sigma + \sigma^{-\theta}(\sigma-1)^{1+\theta} }{ 1 + \theta - \sigma + \sigma^{-\theta}(\sigma-1)^{1+\theta} } \pi_{sd,t}^{i} e_{d,t}^{i} 
        \end{eqnarray*}
    \normalsize
    
    Therefore, total cost equals:
    
    \begin{equation}
        C_{s,t}^{i} = \sum_{d \in \mathcal{D}} \int_{\Omega_{sd,t}^{i}} \frac{\tilde{x}_{d,t}^{i}}{z_1(\omega)} \Big( \frac{p_{d,t}^{i}(\omega)}{p_{d,t}^{i}} \Big)^{-\sigma} \frac{e_{d,t}^{i}}{p_{d,t}^{i}}  d\omega = \frac{\theta}{1+\theta} \sum_{d \in \mathcal{D}} \pi_{sd,t}^{i} e_{d,t}^{i} 
    \end{equation}
    
    Profits can be expressed compactly as total revenue minus total cost:
    
    \begin{equation}
        \Pi_{s,t}^{i} = \sum_{d \in \mathcal{D}} \pi_{sd,t}^{i} e_{d,t}^{i} - \frac{\theta}{1+\theta} \sum_{d \in \mathcal{D}} \pi_{sd,t}^{i} e_{d,t}^{i}  = \frac{1}{1+\theta} \sum_{d \in \mathcal{D}} \pi_{sd,t}^{i} e_{d,t}^{i} 
    \end{equation}
    
    Analogously, total costs and profits of intermediary producers are, respectively:
    
    \begin{eqnarray}
        c_{s,t}^{i} = \frac{\theta}{1+\theta} \sum_{d \in \mathcal{D}} \pi_{sd,t}^{i,j} e_{d,t}^{i,j}, \qquad 
        \Pi_{s,t}^{i} = \frac{1}{1+\theta} \sum_{d \in \mathcal{D}} \pi_{sd,t}^{i,j} e_{d,t}^{i,j} 
    \end{eqnarray}
    
    These are allow analogous to the expression in the paper after allowing the  elasticities to be sector-specific.

\end{document}